\DeclareMathOperator{\Tr}{Tr}
\newtheorem{theorem}{Theorem}[section]
\newtheorem{proposition}{Proposition}[section]
\newtheorem{definition}{Definition}[section]
\newtheorem{condition}{Condition}[section]
\newtheorem{example}{Example}[section]
\newcommand{\x}{\mathbf{x}}
\newcommand{\re}[1]{ \text{Re} \left\{ #1 \right\}}
\newcommand{\tr}[1]{ \text{Tr} \left[ #1 \right]}
\newcommand{\average}[1]{\langle #1 \rangle}
\newcommand{\ketbra}[2]{\ket{#1} \hspace{-0.1cm} \bra{#2}}
\newcommand{\thetab}{{\boldsymbol{\theta}}}
\newcommand{\id}{\text{id}}
\newcommand{\rank}{\text{rank}}
\newcommand{\norm}[1]{\lVert #1 \rVert }
\newcommand{\normone}[1]{\lVert #1 \rVert _1 }
\newcommand{\nul}{\text{null}}
\newcommand\subsetsim{\mathrel{\substack{\textstyle\subset\\[-0.2ex]\textstyle\sim}}}
\newcommand{\asum}[1]{\smashoperator{\sum_{#1}}}
\DeclareMathOperator*{\Motimes}{\text{\raisebox{0.25ex}{\scalebox{0.8}{$\bigotimes$}}}}
\DeclareMathOperator\supp{supp}
\begin{document}

\title{Private and Robust States for Distributed Quantum Sensing}   

\author[1,2,3,4]{Lu\'{i}s Bugalho}
\author[4]{Majid Hassani}
\author[1,2,3]{Yasser Omar}
\author[4]{Damian Markham}

\affil[1]{Instituto Superior Técnico, Universidade de Lisboa, Portugal}
\affil[2]{Physics of Information and Quantum Technologies Group, Centro de Física e Engenharia de Materiais Avançados (CeFEMA), Portugal}
\affil[3]{PQI -- Portuguese Quantum Institute, Portugal}
\affil[4]{Sorbonne Université, CNRS, LIP6, 4 Place Jussieu, Paris F-75005, France}

\date{Jan 8, 2025}
\maketitle

\begin{abstract}
Distributed quantum sensing enables the estimation of multiple parameters encoded in spatially separated probes. While traditional quantum sensing is often focused on estimating a single parameter with maximum precision, distributed quantum sensing seeks to estimate some function of multiple parameters that are only locally accessible for each party involved. 
In such settings, it is natural to not want to give away more information than is necessary. To address this, we use the concept of privacy with respect to a function, ensuring that only information about the target function is available to all the parties, and no other information.
We define a measure of privacy (essentially how close we are to this condition being satisfied) and show it satisfies a set of naturally desirable properties of such a measure. Using this privacy measure, we identify and construct entangled resource states that ensure privacy for a given function under different resource distributions and encoding dynamics, characterized by Hamiltonian evolution.
For separable and parallel Hamiltonians, we prove that the GHZ state is the only private state for certain linear functions, with the minimum amount of required resources, up to SLOCC. Recognizing the vulnerability of this state to particle loss, we create families of private states, that remain robust even against loss of qubits, by incorporating additional resources.
We then extend our findings to different resource distribution scenarios and Hamiltonians, resulting in a comprehensive set of private and robust states for distributed quantum estimation. These results advance the understanding of privacy and robustness in multi-parameter quantum sensing.
\end{abstract}

\tableofcontents

\section{Introduction}
Quantum metrology and quantum sensing have been longstanding areas of interest in quantum information research \cite{Giovannetti2006,Sidhu2019}. In particular, they are known to surpass the classical bounds for the estimation physical parameters in experiments and reach the Heisenberg scaling \cite{Giovannetti2006,Degen2017,Hou2021,Liu2021e,Barbieri2022,Li2022}.
The idea is that a probe state interacts with some physical process, which encodes parameters into the quantum state of the probe, measuring thereafter and using the outcomes to perform an estimation the parameters.
In the case of single parameter estimation, one is interested in estimating one parameter of a system with the maximum precision \cite{Paris2009,Degen2017}. However, in the multi-parameter scenario, not only is precision important, but also what information is available. The latter has been directly linked to entangled probes across the sites. With the ongoing development of quantum networks, both over large distances \cite{Wehner2018,Proctor2017,Yehia2022} and local-area types of networks \cite{Sekatski2019}, distributed quantum sensing has emerged as a promising area of research, taking the multiparameter scenario to applications with near-term potential \cite{Knott2016,Proctor2017,Eldredge2018,Proctor2018d,Qian2019,Zhang2020,Qian2020}. Moreover, there have been some experiments demonstrating the quantum advantage for this setting \cite{Liu2021e,Cao2022}. 

Distributed quantum sensing has numerous use cases, such as clock synchronization protocols \cite{Komar2014,Nichol2022}, optical interferometry proposals \cite{Khabiboulline2018}, some preliminary work on gravity and dark-matter experiments  \cite{Leveque2021b,Conlon2022,Brady2022,Alonso2022} and physical implementation of sensors capable of measuring such quantities \cite{Fang2016,Greve2022}. One can also find different approaches to solve the optimal estimation of sensing parameters, such as variational quantum sensing \cite{Koczor2020a,Meyer2021,Kaubruegger2023}, covert sensing \cite{Hao2022,Kasai2022}, error-corrected \cite{Shettell2021} and error-mitigated \cite{Yamamoto2022} quantum sensing, and controlled enhanced quantum sensing \cite{Liu2017a}.

The framework for distributed quantum sensing has been well-established as quantum sensor networks \cite{Knott2016,Proctor2017,Proctor2018d,Eldredge2018,Qian2019,Rubio2020,Rubio2020a,Qian2020,Bringewatt2021a}, where each quantum node holds a set of resources (in our case, qubits) and has access to a local parameter $\theta_\mu$ (encoded by some local Hamiltonian; see Fig.~\ref{fig:network}). We are interested in when they want to estimate not the local parameters themselves, but some function of these parameters $f(\{\theta_\mu\}) \equiv f(\thetab)$. 
More recently, security concerns have been addressed in several works \cite{Kasai2022,Shettell2022b,Moore2023}, along with the critical concept of privacy \cite{Shettell2022a}. 
On the one hand, in a network setting there may be eavesdroppers who want to gain unauthorised access to the value of the parameters, or even simply disrupt or corrupt the estimation process \cite{Kasai2022,Shettell2022b,Moore2023}. On the other hand, some parties who are involved in the estimation itself, may like to gain more information than they should - for example the value of other parties local parameters themselves. This is addressed by the notion of privacy, which ensures  that, only information about a target estimator can be obtained \cite{Shettell2022a}. 
Who has access to what information about the parameters typically depends on the initial state used in the estimation scenario, prompting the natural question of identifying private states. 
In \cite{Shettell2022a} privacy is defined for a specific function (the average value of the parameters), and the questions of general functions, the optimality of the states, and how to deal with noise were left open. 

In this work we develop a broad framework for privacy in networks of quantum sensors. We tackle these questions by finding a way to define and quantify privacy through the analysis of the quantum Fisher information (QFI) matrix, a key metric in a quantum estimation scenario \cite{Shettell2020c,Liu2019,Sidhu2019}. From here we verify that the amount of resources and the control over the dynamics that encode the parameters determine the functions of parameters available at hand. We are able to find a general expression for the QFI matrix of any arbitrary quantum pure state, starting with stabilizer states, in the multi-parameter scenario. We use these results to prove which ones are the only private states, $i.e.$ states where the information available is only the target function. Our findings relate the amount of resources and the properties of the encoding dynamics to the functions available at hand for an estimation scenario. In particular, for local and separable encoding dynamics, it will be required a minimum amount of distributed resources to even be able to estimate a target function privately. We generalize our findings for local but not necessarily separable dynamics, by analyzing the eigenvalues of the local Hamiltonians. In creating these families of private states, we then question their robustness against several types of common Pauli noise, and for qubit-loss, which is know to hinder quantum sensing. We find that some of our private states retain information even after particle loss, while at the same time, remaining private.  

This paper is structured as follows: we start in Section \ref{section:estimation} by analysing the methods commonly used to tackle the problem of multiparameter sensing and define exactly our setup problem as a network problem. This encompasses defining three things beforehand: \textit{(i)} what are the target functions of parameters; \textit{(ii)} defining what are the resources and introducing a notation to deal with them, and finally \textit{(iii)} creating a measure for privacy from a set of premises. 
In Section \ref{section:buildingprivatestates} we address the problem of finding private states. We start by introducing important concepts and notations to deal with the upcoming problem. Then we present statements about the privacy of states both for stabilizer states and arbitrary pure states for local and separable Hamiltonians, generalizing then to arbitrary local Hamiltonians. 
We provide conditions for a given function to be private and find the unique minimum set of private states, give some examples and then provide relaxations of optimality to allow robustness.
Finally, in Section~\ref{section:robustness}, we analyze the effect of different noise types over our found families of private states. Over the manuscript we try to be consistent notation wise, using greek letters for nodes, latin letters for individual qubits, boldsymbols to denote a higher dimensional object, wether it be a partition on a set, a vector of parameters or a larger operator decomposed by smaller operators.

\begin{figure}[h!]
\begin{center}
\includegraphics[width=0.8\columnwidth]{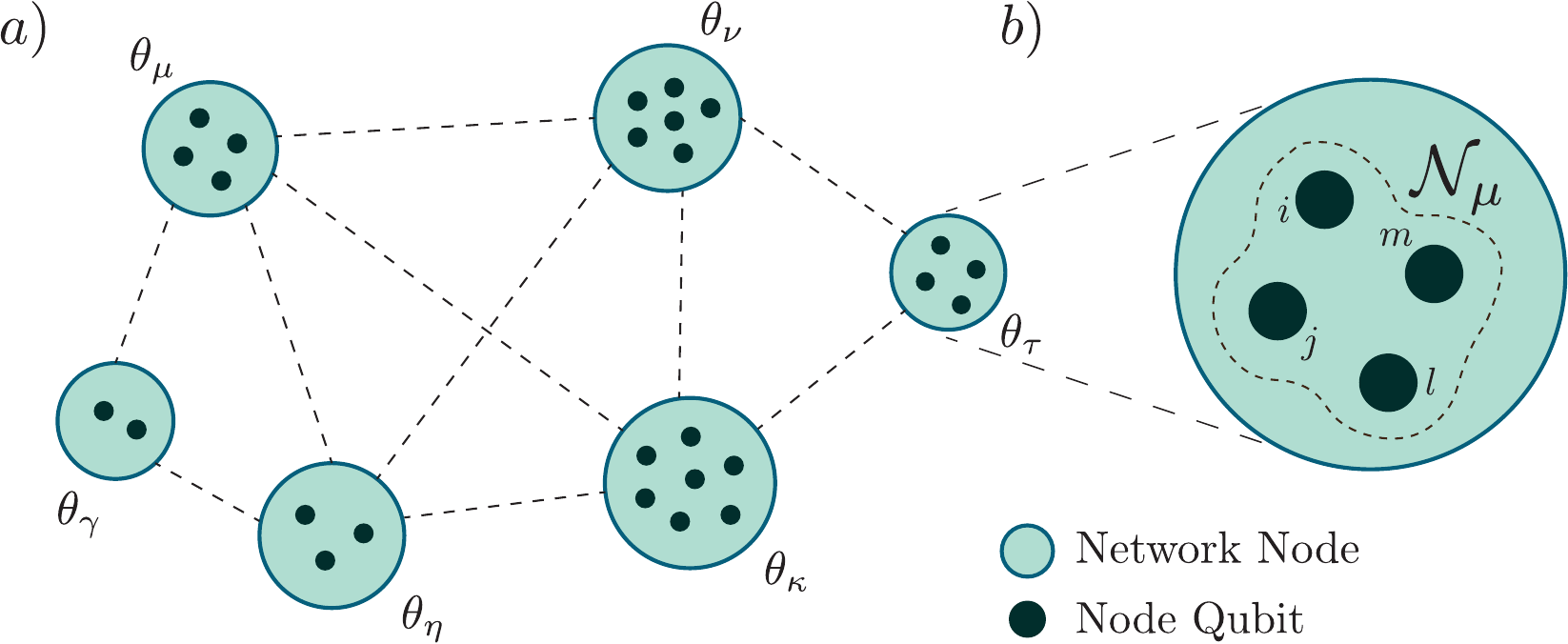} \caption{Distributed sensing scenario, consisting of \emph{a)} a network of quantum nodes, capable of distributing entangled states, where \emph{b)} each of the nodes holds their own sets of qubits $\mathcal{N}_\mu$, which can be seen as resources for quantum sensing.}
\label{fig:network}
\end{center}
\end{figure}

\section{Estimation Scenarios and Privacy \label{section:estimation}}

There are three important inputs of an estimation scenario: \textit{(i)} the initial state, \textit{(ii)} the encoding dynamics and \textit{(iii)} the measurement (see Fig.~\ref{fig:circuit}). 
Consider first the single parameter case. The initial state is transformed via the encoding dynamics to state $\rho_\theta$, which is then measured giving results $x$, with probability $p(x|\theta)$, from which one would like to estimate the parameter theta.
In the frequentist approach to the estimation problem, we are interested in retrieving information of the parameter over successive measurements. In this scenario, the overall precision of the parameter is bounded by the Fisher information (FI), $F(\theta)$, and the number of measurements made $m$ \cite{Paris2009,Sidhu2019}, in the assymptotic regime:
\begin{equation}\label{eq:fi}
\begin{aligned}
    \Delta \theta^2 &\geq \frac{1}{m F(\theta)} \\
    F(\theta) &= \int p(\x | \theta) \left(\partial_\theta \ln p(\x | \theta)\right)^2 d\x .
\end{aligned}
\end{equation}

When the estimation scenario involves a quantum initial state and quantum dynamics, the Fisher information becomes its quantum counterpart. The probability distribution is now given by the Born rule $p(\x | \theta) = \Tr [\Pi_\x \rho_\theta]$, where the set $\{ \Pi_\x \}_\x$ denotes a POVM satisfying $\sum_{\x}\Pi_\x=\mathbb{1}$. The quantum version of the FI can then be written as:
\begin{equation}
\begin{aligned}
	\mathcal{I}(p(\x | \theta)) &= \int p(\x | \theta) \left(\partial_\theta \ln p(\x | \theta)\right)^2 d\x\\
	&= \int \frac{1}{p(\x | \theta)} \left(\partial_\theta  p(\x | \theta)\right)^2 d\x \\
	&= \int \frac{1}{\Tr [\Pi_\x \rho_\theta]} \left(\partial_\theta  \Tr [\Pi_\x \rho_\theta]\right)^2 d\x \\
	&= \int \frac{\Tr [\Pi_\x \dot{\rho}_\theta]^2}{\Tr [\Pi_\x \rho_\theta]} d\x  ,
 \end{aligned}
\end{equation}
where $\dot\rho _{\theta}=\frac{\partial\rho_{\theta}}{\partial\theta}$. By maximising over all POVMs which do not explicitly depend on $\theta$, the quantum Fisher information (QFI), $\mathcal{Q}(\theta)$ appears by implicitly defining an additional operator named the symmetric logarithmic derivative (SLD), and Eq.~\ref{eq:fi} becomes:

\begin{equation}\label{eq:qfi}
\begin{aligned}
    \Delta \theta^2 &\geq \frac{1}{m \mathcal{Q}(\theta)} \\
	\mathcal{Q}( \theta) &= \tr{ \rho_\theta L_{\rho_\theta}^2} , \quad 	\partial_\theta \rho_\theta = \frac{\rho_\theta L_\theta + L_\theta \rho_\theta}{2},
\end{aligned}
\end{equation}
which is often denoted by quantum Crámer-Rao bound. Note that achieving this bound might only be achievable by implementing POVMs that depend on the target $\theta$ \cite{Paris2009}, requiring adaptive schemes. However, for some of the private states we will find, this should not pose a problem, as the optimal POVM is known to be $\theta$-independent.
The QFI depends on the state of the system immediately before measurement, meaning it will depend on the initial state of the system and the dynamics of the encoding (and additional sources of noise which we leave for discussion on the robustness section of this work). Eq.~\ref{eq:qfi} has a simpler counterpart for the case of pure states, as in this case the SLD is simply given by $L_\theta = 2\partial_\theta \rho_\theta$:

\begin{equation}
\begin{aligned}
	\mathcal{Q} (\rho_\theta) = 4 \re{ \braket{\partial_\theta \psi_\theta | \partial_\theta \psi_\theta} - \braket{\partial_\theta \psi_\theta | \psi_\theta}\braket{ \psi_\theta | \partial_\theta  \psi_\theta} }.
\end{aligned}
\label{eq:puresingle}
\end{equation}

In quantum sensing, and in particular in the distributed scenario, the goal is often to estimate a function of parameters that are spatially distributed. This means that the dynamics that encode the parameters are in spatially separated locations and the resources only have access to their own local dynamics. The natural way of addressing this problem is in a network, where each channel encoding one parameter is associated with one node. While the structure of the network impacts the initial state distribution, it does not impact the overall estimation scenario. Given a graph $G = (V,E)$, in a distributed scenario we consider a set of nodes $\{ \mu \in V(G) = 1,2,\cdots,k\}$, each of them with access to some quantum dynamics described by a quantum channel $\Lambda_{\theta_\mu}(\rho)$ (if unitary, this is described by a Hamiltonian), which encodes its own parameter $\theta_\mu$, as we outline in Fig.~\ref{fig:network}. 

In this case, the QFI becomes more complex as it involves multiple parameters. Given that the Crámer-Rao bound is a bound on the variance, in the multi-parameter scenario, the variance becomes a covariance matrix. Therefore, it makes sense that the QFI is also a matrix. The description follows naturally from Eq.~\ref{eq:fi} by substituting $\partial_{\theta} \rightarrow \partial_{\theta_\mu}, \partial_{\theta_\nu}$. In this way, the SLD is now defined with respect to one parameter $L_{\theta_\mu}$ and the Crámer-Rao bound and QFI matrix entries are given by \cite{Proctor2017,Sidhu2019,Liu2019,Goldberg2021}:

\begin{equation}
\begin{aligned}
    COV(\thetab) &\geq \frac{1}{m}\boldsymbol{\mathcal{Q}}(\thetab)^{-1} ,\\
	\boldsymbol{\mathcal{Q}}_{\mu\nu} (\rho_{\thetab}) &=\tr{  \rho_{\thetab} \frac{ L_{\theta_\mu} L_{\theta_\nu} + L_{\theta_\nu} L_{\theta_\mu} }{2} }  , \quad 	\partial_{\theta_\mu} \rho_\theta = \frac{\rho_\theta L_{\theta_\mu} + L_{\theta_\mu} \rho_\theta}{2},
\end{aligned}
\label{eq:cramermulti}
\end{equation}
where $A\geq B$ is meant in the sense that $A-B$ is a semi-definite positive matrix. Expanding for pure states where $\rho_{\thetab} = \ketbra{\psi_{\thetab}}{\psi_{\thetab}}$, just as in Eq. \ref{eq:puresingle} using $L_{\theta_\mu} = 2 \partial_{\theta_\mu} \rho_{\thetab}$, one can verify that the QFI simplifies to:
\begin{align}
	\boldsymbol{\mathcal{Q}}_{\mu\nu} (\rho_{\thetab}) &=4 \re{ \braket{\partial_{\theta_\mu} \psi_{\thetab} | \partial_{\theta_\nu} \psi_{\thetab} }  - \braket{ \partial_{\theta_\mu} \psi_{\thetab} | \psi_{\thetab}}\braket{ \psi_{\thetab} | \partial_{\theta_\nu} \psi_{\thetab}}  }.
\label{eq:qfipuremulti}
\end{align}
Since each node only has access to its own parameters in distributed scenarios, it is reasonable to assume the encoding dynamics are local and unitary (see Fig.~\ref{fig:circuit}). Under these assumptions over the encoding dynamics, there are still multiple scenarios one can analyze (see Fig.~\ref{fig:circuit}). Let us detail the most general one given in Fig.~\ref{fig:circuit}c :

\begin{equation}
\begin{aligned}
	\Lambda_{\theta_\mu}(\rho) &= e^{-i \theta_\mu \boldsymbol{H}_\mu} , \quad \boldsymbol{H}_\mu = \sum_{P_j \in \mathcal{P}_\mu} b_j P_j , \quad \mathcal{P}_\mu = \{ P \in \mathcal{P}_n : P = \Motimes_{j\not\in \mu}\mathbb{1}_j \Motimes_{j\in \mu} \sigma_j \},
\end{aligned}
\label{eq:generaldynamics}
\end{equation}
where $\mathcal{P}_n$ is the set of all Pauli strings with size $n$, which is the total amount of qubits (see subsection detailing the notation for resources further ahead). Note that the Pauli strings of size $n$ form an orthogonal basis for the vector space of complex $2^n\times2^n$ matrices. Although this description can be quite general, what is usually the subject of sensing scenarios are channels which are local and separable for each qubit \cite{Proctor2017,Rubio2020a}(see Fig.~\ref{fig:circuit}b):

\begin{equation}
\begin{aligned}
	\Lambda_{\theta_\mu}(\rho) = e^{-i \theta_\mu \boldsymbol{G}_\mu} , \quad \boldsymbol{G}_\mu = \sum_{j\in\mu} G_j ,  G_j \equiv \mathbb{1} \otimes \cdots \otimes G_j \otimes \cdots \otimes \mathbb{1} \implies \Lambda_{\theta_\mu}(\rho) = \Motimes_{j\in\mu} e^{-i \theta_\mu G_j} ,
\end{aligned}
\label{eq:dynamics}
\end{equation}
where we did not fixed a given generator of translation for any qubit, $i.e.$ the operators $G_j$ are described by $\vec{x}\cdot\vec{\sigma}$, where $\vec{x}\in\mathbb{R}^3$ is normalized, and $\vec{\sigma}$ is the Pauli vector. As it is known, the Pauli group on 1 qubit exponentiates to the unitary group on 1 qubit $U(2)$ \cite{Reilly2023}. This consists on the most general local and separable encoding dynamics, where each qubit acquires the phase once. One can also refer to this case as parallel, in the sense that if every qubit encoding is separable from one another, then each local encoding can be done at the same time, in parallel, and consequently so can the global dynamics. If one were to consider more general encoding dynamics, such as the case in Eq.~\ref{eq:generaldynamics} and Fig.~\ref{fig:circuit}c, one may arrive at a problem that quickly becomes infeasible, and the definition of sampling the system starts to drift. This is due to the fact that, for separable dynamics, one might argue that each qubit goes through the parameter channel only once, or is sampled only one time \cite{Zwierz2010}. With more complex Hamiltonians, this is not always the case. Some Hamiltonians cannot be directly rewrote into only one access to a channel for each qubit. We will discuss this case further, and provide a statement to building private states for a given arbitrary Hamiltonian.

Nonetheless, for separable encoding dynamics, even though we do not fix the generators $G_j$, we can reduce the problem complexity by fixing the dynamics and resorting to the following lemma:

\begin{figure}[t]
\begin{center}
\includegraphics[width=\columnwidth]{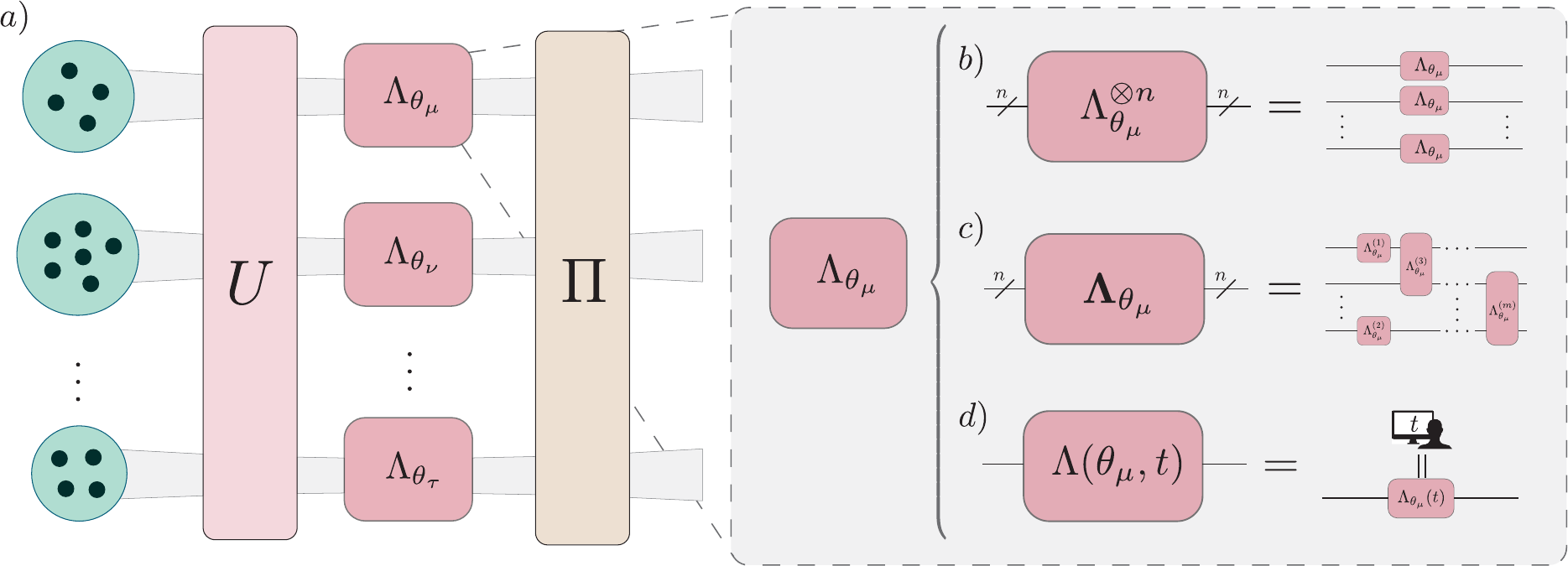} \caption{Quantum circuit description of the estimation scenario. \emph{a} Each network node qubits' have access only to their encoding dynamics, which imprint each of the parameters $\theta_\mu$ locally. Nonetheless, the initial state is not bounded to local states, they can be arbitrarily entangled between all the sets of qubits, illustrated above as a global entanglement preperation unitary $U$, and so can the final POVM $\Pi$. We divide the local encoding dynamics into three types: \emph{b)} parallel single-qubit dynamics (or separable); \emph{c)} most general non-controllable dynamics and \emph{d)} one-qubit controllable dynamics, where a extra parameter $t$ is provided by the user.}
\label{fig:circuit}
\end{center}
\end{figure}

\begin{restatable}{lemma}{unitaryequiv}
\label{lemma:unitaryequiv}
Consider two different generators for single qubit unitary operators $G$ and $G'$, and a unitary operator $W$.
\[	\forall G, G'; \exists W : W^\dagger G W = G'\]
\end{restatable}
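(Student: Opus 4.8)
The plan is to reduce the claim to the spectral theorem by exploiting the fact that, in the separable setting of Eq.~\ref{eq:dynamics}, every admissible single-qubit generator has the form $G = \vec{x}\cdot\vec{\sigma}$ with $\vec{x}\in\mathbb{R}^3$ normalized. First I would record the elementary algebraic consequence of the Pauli anticommutation relations: $(\vec{x}\cdot\vec{\sigma})^2 = |\vec{x}|^2\,\mathbb{1} = \mathbb{1}$, so $G$ is Hermitian, traceless, and squares to the identity. Hence its spectrum is exactly $\{+1,-1\}$, each eigenvalue with multiplicity one; the same holds verbatim for $G'$. The whole lemma then rests on the observation that two Hermitian operators on the same two-dimensional space with the same spectrum are unitarily equivalent.

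Concretely, I would diagonalize both generators against the fixed reference $\sigma_z = \mathrm{diag}(1,-1)$: choose unitaries $U,U'\in U(2)$ (indeed in $SU(2)$, taken as the Bloch-sphere rotation sending $\hat z$ to $\vec x$, resp.\ to $\vec x'$) such that $G = U\,\sigma_z\,U^\dagger$ and $G' = U'\,\sigma_z\,U'^\dagger$. Setting $W := U\,U'^\dagger$ one computes directly
\begin{equation}
W^\dagger G W = U'\,U^\dagger\,\big(U\,\sigma_z\,U^\dagger\big)\,U\,U'^\dagger = U'\,\sigma_z\,U'^\dagger = G',
\end{equation}
which establishes the existence statement $\forall G,G'\ \exists W:\ W^\dagger G W = G'$. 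If an explicit $W$ is wanted rather than an abstract one, it can be written via the rotation carrying $\vec{x}$ to $\vec{x}'$ on the Bloch sphere, i.e.\ a rotation by the angle between the two axes about their normalized cross product (with the antipodal case $\vec{x}'=-\vec{x}$ handled by any rotation by $\pi$ about an axis orthogonal to $\vec{x}$).

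There is no serious obstacle here; the only point that deserves a sentence is \emph{why the normalization hypothesis is exactly what makes the statement true}. Unitary conjugation preserves both the trace and the full spectrum, so the equivalence of $G$ and $G'$ forces them to be traceless with matching eigenvalues — precisely what $|\vec{x}| = |\vec{x}'| = 1$ and the absence of an identity component guarantee. Thus the lemma should be stated (and is used downstream) for generators already put in the canonical $\vec{x}\cdot\vec{\sigma}$ form, and the proof is complete once the $\{+1,-1\}$ spectrum is identified and the conjugating unitary is assembled as above.
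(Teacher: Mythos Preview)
Your proof is correct. The paper takes a slightly different route: it writes $G=\vec{a}\cdot\vec{\sigma}$, $G'=\vec{b}\cdot\vec{\sigma}$ and then explicitly computes the adjoint action $R_{\vec{n}}(-\alpha)\,\vec{\sigma}\,R_{\vec{n}}(\alpha)$ via the Rodrigues-type formula, showing that conjugation by $R_{\vec{n}}(\alpha)$ implements the $3$D rotation $R^{xyz}_{\vec{n}}(\alpha)$ on the Bloch vector; since any unit vector can be rotated onto any other, one picks $\vec{n},\alpha$ accordingly and sets $W=R_{\vec{n}}(\alpha)$. Your argument instead invokes the spectral theorem directly---both $G$ and $G'$ are Hermitian with spectrum $\{+1,-1\}$, hence unitarily similar---and builds $W=UU'^{\dagger}$ from the two diagonalizing unitaries. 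The spectral-theorem route is a bit cleaner and makes the role of the normalization hypothesis transparent (equal spectra), while the paper's computation hands you the explicit axis--angle form of $W$ without an intermediate diagonalization; your final remark about the cross-product axis recovers exactly that explicit $W$, so the two proofs ultimately converge.
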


\begin{restatable}[Interplay Dynamics $\leftrightarrow$ Initial State]{corollary}{unitaryequivalence}\label{thm:unitaryequivalence}
Consider two different unitary encodings, in terms of two different generators $U_\theta = e^{-i\theta G}$ and $U_\theta'= e^{-i\theta G'}$, and a unitary operator $W$.
\[	\forall U_\theta, U_\theta'; \exists W : W^\dagger U_\theta W = U_\theta'\]
\end{restatable}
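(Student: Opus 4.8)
The plan is to derive the corollary directly from Lemma~\ref{lemma:unitaryequiv} together with the elementary fact that unitary conjugation commutes with any analytic function of an operator, in particular with the matrix exponential. First I would invoke Lemma~\ref{lemma:unitaryequiv} to obtain a unitary $W$ with $W^\dagger G W = G'$, where $G$ and $G'$ are the single-qubit generators of $U_\theta$ and $U_\theta'$ respectively, taken in the normalized form $\vec{x}\cdot\vec{\sigma}$ with $\norm{\vec{x}} = 1$ fixed in Eq.~\ref{eq:dynamics}. Crucially, the $W$ produced by the lemma depends only on $G$ and $G'$, not on $\theta$, so a single $W$ will work for the whole family.

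Next I would expand the encoding unitary as a power series, $U_\theta' = e^{-i\theta G'} = \sum_{n\geq 0} \frac{(-i\theta)^n}{n!}(G')^n$, and use the telescoping identity $(W^\dagger G W)^n = W^\dagger G^n W$ for every $n$ (the intermediate $W W^\dagger = \mathbb{1}$ factors cancel). Since the series is absolutely convergent, $W^\dagger(\cdot)W$ can be pulled out term by term, giving $e^{-i\theta G'} = e^{-i\theta W^\dagger G W} = W^\dagger e^{-i\theta G} W = W^\dagger U_\theta W$, which is exactly the claimed equality, valid simultaneously for all $\theta$. This is the ``interplay'' content of the statement: conjugating the encoding dynamics is interchangeable with conjugating the initial state (and the final POVM), which is why changing the generators can be absorbed into the choice of probe state.

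The one point needing care, and the closest thing to an obstacle, is the role of the identity component of the generator, i.e.\ the global phase: a single-qubit unitary determines its generator only up to an additive multiple of $\mathbb{1}$, since $e^{-i\theta(G + c\mathbb{1})} = e^{-ic\theta}e^{-i\theta G}$. If one wants the literal equality $W^\dagger U_\theta W = U_\theta'$ rather than equality up to a phase, the statement must be read with the convention of Eq.~\ref{eq:dynamics} that the generators are traceless of the form $\vec{x}\cdot\vec{\sigma}$ with unit $\vec{x}$; these have spectrum $\{+1,-1\}$, so Lemma~\ref{lemma:unitaryequiv} applies to them verbatim and no residual phase is generated under conjugation. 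I would state this convention explicitly at the start of the proof, after which the corollary really is a one-line consequence of the lemma plus the conjugation identity $W^\dagger e^{A} W = e^{W^\dagger A W}$.
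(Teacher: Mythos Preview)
Your proposal is correct and mirrors the paper's own proof essentially line for line: invoke Lemma~\ref{lemma:unitaryequiv} to obtain $W$ with $W^\dagger G W = G'$, expand $e^{-i\theta G}$ as a power series, and use the telescoping identity $W^\dagger G^n W = (W^\dagger G W)^n$ to pull the conjugation through. Your extra remark about the traceless convention for the generators is a nice clarification that the paper leaves implicit, but otherwise the arguments are identical.
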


Both Lemma \ref{lemma:unitaryequiv} and Corollary \ref{thm:unitaryequivalence} can be intuitively thought as changing the local definition of each qubit's basis from the regular basis $\{\ket{0}, \ket{1}\}$ to the eigenbasis of the generators $G$ given by $\{\ket{g}, \ket{-g}\}$, where $G\ket{\pm g} = \pm \ket{\pm g}$.
We provide proofs for the above lemma and corollary in Appendix~\ref{appendix:unitaryequivalence}. From Corol.~\ref{thm:unitaryequivalence} we get that changing the dynamics is equivalent to changing the initial state locally, and vice-versa. This in turn means that, for every separable dynamics we are always able to find a state with access to the same information, by applying a local unitary gate. This allows us to fix either one part or the other when trying to optimize the estimation. Because of this, we will often fix the local dynamics to a unitary of the type $e^{i\theta\sigma_j}$ where $\sigma_j$ is a Pauli operator (we will choose $Z$ whenever we need to), and say that the initial state attaining a certain estimation outcome is equivalent up to a certain local-unitary (LU) operation.

\subsection{Resources}\label{sec:resources}

Unlike the single parameter scenario of quantum sensing, in distributed sensing the resources are spatially distributed over a structure resembling a network, hence the name quantum sensor networks \cite{Proctor2017,Proctor2018d,Eldredge2018,Qian2020}. To convey their distribution over the nodes (or parties), it makes sense to have a notation that attributes a corresponding node to each qubit. The ensemble of qubits constitutes the set of resources for the estimation scenario, even though one could, equivalently, consider the amount of samples or duration of access to the encoding dynamics. 
The case where the encoding dynamics are not separable is more involved. 
We will comment further on this over Section \ref{section:generalhamiltonian}, but for now let us define the number of resources in terms of amount of the qubits available locally.

 Introducing the notation for the resources sets we will use throughout the paper: we denote by $\mathcal{N}_\mu$ as the set of qubits (or local resources) belonging to node $\mu$ (see Fig.~\ref{fig:network}), which amounts to the total set of qubits given by $\mathcal{N}= \cup_{\mu=1}^k \mathcal{N}_\mu$. For this reason, we denote $ \boldsymbol{\mathcal{N}} = \{\mathcal{N}_\mu\}_{\mu \in V}$ as a partition of the set, corresponding to the distributed resources. Moreover, the number of qubits in each set is given by $|\mathcal{N}_\mu| = n_\mu$, which means the vector $\vec{n} = (n_1,n_2,\cdots,n_k)$ is the vector of resources and $|\mathcal{N}|=\normone{\vec{n}} = n$ is the total amount of resources. Moreover, we will assume that the states can be picked from the complete Hilbert space of all qubits, which we denote by $\mathcal{H} (\mathcal{N})$.

Another important aspect that will be used throughout the paper, is to be able to order different vectors, such as the vector of resources. Since we are usually working with a vector-space defined over some ordered ring, we resort to the regular product order, which has appeared in quantum information under the name of majorization as well (e.g. \cite{Nielsen}), defined as follows:

\begin{definition}
Let $\mathbb{K}$ be a ordered-ring and $H \subseteq \mathbb{K}^k$. Let $\vec{a}, \vec{b}$ be two vectors belonging to $H$. Let $\leq$ be a total order of the field $\mathbb{K}$. The product order is defined as:
\begin{equation*}
	\vec{a} \preceq \vec{b} \qquad \text{iff. } \qquad a_j \leq b_j \ \forall j \in \{1,2,\cdots,k\}.
\end{equation*}
Moreover, we say $\vec{a} \prec \vec{b} $ if $\vec{a} \preceq \vec{b} $ and $\exists j: (\vec{a})_j < (\vec{b})_j$.
\end{definition}

Note that unlike $\leq$, $\preceq$ is not a total order, $i.e.$ there are vectors such that neither $\vec{a} \preceq \vec{b}$ nor $\vec{b} \preceq \vec{a}$. In this case we use the notation $\vec{a} \not \preceq \vec{b} $. 

\begin{example}
\[	 (2 , 4, 3) \preceq (2,4,3)  \quad, \quad (2 , 4, 3) \prec (2,6,4)  \quad , \quad  (1 , 3, 2) \not \preceq (3,2,4)
\]
\end{example}

\subsection{Functions of Parameters}

Before discussing the bounds on the precision and information metrics, let us first clear up the idea of what is the ``good'' definition for functions of parameters. The first thing to notice is the following: if one is able to estimate $f(\thetab)$, then one can also estimate $\alpha f(\thetab)$ for any $\alpha$. For this reason, let us establish an equivalence between functions: we say $f(\thetab) \sim g(\thetab)$ if $f(\thetab) = \alpha g(\thetab)$ for some $\alpha$. Since we are working with linear functions of parameters, the functions can be described as an inner-product between a vector of coefficients and a vector of the parameters $\thetab$: $f(\thetab) = \vec{a}\cdot \vec{\thetab}$, similarly to previous works \cite{Proctor2017,Shettell2022a,Rubio2020a,Eldredge2018}. 

As we will see in this paper, the linear coefficients available are somehow related with the eigenvalues of the Hamiltonian. For this reason, we start by finding a way to describe them, that is able to convey the distributed scenario we are working with. Taking the most general local Hamiltonians description in Eq.~\ref{eq:generaldynamics}, one can find an orthonormal basis for the local states by choosing the eigenvectors of the local Hamiltonians. Associated to each of these eigenvectors we then have their corresponding eigenvalues, which we can arrange as follows:
\begin{equation}
	\boldsymbol{H}_\mu \ket{\lambda_j^{\mu}} = \lambda_j^{\mu} \ket{\lambda_j^{\mu}}, \quad \mathcal{B}^\mu = \{ \ket{\lambda_j^\mu} \}_{i \in \mu} , \quad \mathcal{O}^\mu = \{\lambda_j^\mu \}_{j \in \mu },
\end{equation}
where $\mathcal{B}^\mu$ is a local basis for the qubits belonging to node $\mu$, and $\mathcal{O}^\mu $ is the list of non-identical eigenvalues for the local Hamiltonian of node $\mu$. Note that this list does not have the same size as the basis if the Hamiltonian is degenerate, although this will not pose any problem.

Going from the local Hamiltonians to the full network states can be easily done by choosing an orthonormal basis from the tensor product of the local basis: 
\begin{equation}
    \ket{\lambda_{\vec{j}}} = \Motimes_{\mu} \ket{\lambda_{j_\mu}^\mu} , \quad 	\boldsymbol{H}_\mu \ket{\lambda_{\vec{j}}} = \lambda_{j_\mu}^\mu \ket{\lambda_{\vec{j}}}, \quad \mathcal{B} = \Motimes_\mu \mathcal{B}^\mu , \quad \mathcal{O} = \prod_\mu\mathcal{O}^\mu, 
    \label{eq:hamiltonianbasis}
\end{equation}
where the vector $\vec{j}$ is a labeling vector such that each component $(\vec{j})_\mu \equiv j_\mu$ corresponds to one of the local basis states of node $\mu$. Moreover, the set $\mathcal{O}$ can be seen as a discrete subset of the $k$-orthotope $\overline{\mathcal{O}}$ defined by $[\lambda_{\min}^1,\lambda_{\max}^1] \times \cdots \times [\lambda_{\min}^k,\lambda_{\max}^k]$.

As we will verify further ahead, the natural functions one can estimate correspond to vectors living inside of the $k$-orthotope $\overline{\mathcal{O}}$. We will start by analyzing the case of separable dynamics and later generalize the most general encoding dynamics.
For  separable dynamics we have that:
\begin{enumerate}
    \item \textit{Controlled scenario}: $\vec{a} \in \mathbb{R}^k$ -- we are able to control the local encoding dynamics by an additional parameter $t$, $\Lambda = \Lambda(\theta,t) = \Lambda (\theta t) \sim e^{i\theta t \sigma_j} $ (see Fig.~\ref{fig:circuit} d);
    \item \textit{Non-controlled scenario}: $\vec{a} \in \mathbb{Z}^k $ -- only the local number of qubits per node or party is controllable ($\Lambda = \Lambda(\theta) \sim e^{i\theta \sigma_j} $).     
\end{enumerate}
Note that from the equivalence of functions introduced above, any $f(\thetab) = \vec{a}\cdot \vec{\thetab}$ such that $\vec{a} \in \mathbb{Q}^k $ is equivalent to some $g(\thetab) = \vec{a}'\cdot \vec{\thetab}$ such that $\vec{a}' \in \mathbb{Z}^k $. This extends the \textit{non-controlled scenario} to not only integers, but also fractionals. However, there will exist an overhead in the amount of resources. This overhead can be calculated from finding the smallest number $\alpha$ one has to multiply $\vec{a}$ such that $\alpha \vec{a} = \vec{a}' \in \mathbb{Z}^k$, which is given by the least common multiplier of all denominators. This depends entirely on the vector $\vec{a}$.

For the \textit{controlled scenario} one can establish an equivalence with the \textit{non-controlled}, by taking one qubit in every node ($\vec{n}=\vec{1}$) and control each local parameter $t_\mu$ in order to get $\vec{a}$. One can additionally change $\vec{n}$ and $\vec{t}$ such that $n_\mu t_\mu = a_\mu$, and get to the same result. This should provide that any statement made for the \textit{non-controlled scenario} should also verify for the \textit{controlled scenario}. On the other hand, in the \textit{non-controlled scenario}, we are limited by the number of qubits in each node, which only takes values over the integers. Taking the quotient of the equivalence relation, we get that, in this case, it suffices to analyse functions such that every element of $\vec{a}$ has no greatest common divisor than 1. For this we introduce the following definition:

\begin{definition}
Let $\vec{a}$ be a integer vector of size $k$, $\vec{a} \in \mathbb{Z}^k, \vec{a} = (a_1,a_2,\cdots,a_k)$. 
\begin{equation*}
	gcd(\vec{a}) = gcd(a_1,a_2,\cdots,a_k) = gcd(a_1,gcd(a_2,\dots,gcd(a_{k-1},a_k)))
\end{equation*}
Where $gcd$ is the greatest common divisor of a set of integer numbers.
\end{definition}
This means that the representative of each equivalence class of functions has $\vec{a}$ such that $gcd(\vec{a})=1$. Note the results for arbitrary $\vec{a}$ verifying $gcd(\vec{a})=1$ work backwards for $\vec{n}=\vec{1}$ in the \textit{controlled scenario}.

In the \textit{non-controlled scenario}, let $\vec{a}$ and $\vec{n} \in \mathbb{Z}^k$ be the vector of the target function and the vector of resources, respectively. From here on, when we refer to the vector of the target function $\vec{a}$, we will assume that $gcd(\vec{a})=1$. Using this ordering, let us define four different regions of amount of resources:

\begin{enumerate}[label=(\Roman*)]
	\item $\vec{n} \prec \vec{a}$ or $\vec{n} \not\preceq \vec{a} $ (No-Privacy Zone)
	\item $\vec{n} = \vec{a}$ (Minimal Privacy Zone)
	\item $\vec{a} \prec \vec{n}$ and ($ \vec{n}\prec 2\vec{a}$ or  $ \vec{n} \not\preceq 2\vec{a}$) (Minimal plus Ancilla Privacy Zone)
	\item $2\vec{a} \preceq \vec{n}$ (Multiple Privacy Zone)
\end{enumerate}
The labeling will become obvious as we will be able to prove different statements for each zone.

\subsection{Privacy Measure}

The privacy of a quantum estimation scenario, given a target linear function was introduced in \cite{Shettell2022a}. In a distributed estimation scenario, a private state allows each party to access the information given only by the target function. This means that while the parties can jointly estimate the target function, they cannot infer each other’s individual parameters. 

More generally, we consider a set of malicious parties, which we label  $D$ for dishonest, and the remaining set of parties is labeled $H$ for honest. The honest parties will follow the protocol as instructed, and the dishonest parties can work together, and along with the source, in any way they like (bounded only by quantum mechanics) to try to find out as much information as possible. Privacy for networked parameter estimation means that the dishonest set can only access the information of the agreed target function, and their local parameters (and any combinations thereof).
More concretely, this leads to the definition of privacy as:
\begin{enumerate}[label=(\roman*)]
	\item If all parties are honest, they can estimate $f(\thetab)$.
	\item For any set of dishonest parties $D$, the set $D$ can only know the function $f(\thetab)$ and all $\theta_\mu$ for $\mu \in D$ (and functions thereof).
\end{enumerate}

This definition is inspired from similar definitions for privacy in secure multiparty computation \cite{Shettell2022a}. Note that, as in multiparty computation, dishonest sets can always conspire to give a distorted value of the target function $f(\thetab)$. Though this possibly feels like a failure, there is little one could hope to do against that, as they are valid members of the network. For example, one could simply follow the protocol but add an extra phase on their system to distort the value of $f(\thetab)$. There would be no way of detecting this as different from the phase that was actually desired (see e.g. \cite{Shettell2022c}). What is important is that the dishonest parties do not get information about the honest parties parameters any more than is agreed on by the target global function.

To motivate the construction of the a privacy measure, let us first describe one property of the QFI matrix. Given that a linear function can be described as $f(\vec{\boldsymbol{\theta}}) = \vec{a} \cdot \vec{\boldsymbol{\theta}}$, one can always construct a non-unique orthonormal basis that includes the vector $\vec{a}/\norm{\vec{a}}$ and defines a unitary transformation. Call this matrix $A$, where every vector belongs to one orthonormal basis. Then:
\begin{equation}
\begin{aligned}
	\vec{\boldsymbol{\theta}}' &= A \  \vec{\boldsymbol{\theta}} \qquad \text{ with } \theta'_1 = \sum_i A_{1i} \theta_i = \frac{\vec{a}}{\norm{\vec{a}}} \cdot \vec{\boldsymbol{\theta}} \sim f(\vec{\boldsymbol{\theta}}).
\end{aligned}
\end{equation}

Since the QFI depends on the derivatives, which change through linear transformations, finding the new derivatives is straightforward:

\begin{equation}
\begin{aligned}
	\frac{\partial }{\partial \theta_i'} \rightarrow \sum_j \frac{\partial \theta_j}{\partial \theta_i'} \frac{\partial }{\partial \theta_j} \implies \nabla_{\vec{\boldsymbol{\theta}'}} \rightarrow A^T \cdot \nabla_{\vec{\boldsymbol{\theta}'}}.
\end{aligned}
\end{equation}
This in turn implies that the QFI changes in the following way when changing variables \cite{Paris2009,Liu2019}:
\begin{equation}
\begin{aligned}
\mathcal{Q'}_{ij} (\rho_{\thetab}) &= \int \frac{1}{\Tr [\Pi_\x \rho_\theta]} \partial_{\theta'_i} \left( \Tr [\Pi_\x \rho_\theta] \right) \partial_{\theta'_j} \left( \Tr [\Pi_\x \rho_\theta] \right) d\x \\
&  \downarrow\\
\mathcal{Q'}_{ij} (\rho_{\thetab}) &=\sum_{k,l} \int \frac{1}{\Tr [\Pi_\x \rho_\theta]} A_{ik}^T\partial_{\theta_k} \left( \Tr [\Pi_\x \rho_\theta] \right) A_{jl}^T \partial_{\theta_l} \left( \Tr [\Pi_\x \rho_\theta] \right) d\x  \\
&  \downarrow\\
\mathcal{Q}'(\rho_{\thetab})&= A^T \mathcal{Q} A.
\end{aligned}
\end{equation}

This has an interesting consequence: by calculating the QFI for the canonical $\boldsymbol{\theta}$ set of parameters, and then diagonalizing the matrix we are finding the linear transformation that retrieves the set of independent orthonormal parameters that the estimation scheme is sensitive to. This means the eigenvectors of the QFI matrix correspond to the natural set of functions available at hand given the estimation scenario, and their corresponding eigenvalues are the precision at which we can estimate them. Note however, that this does not mean that every function precision is always attainable, given the not-guaranteed commutation of the measurements which would allow one to saturate the Cramér-Rao bound \cite{Ragy2016}. Nonetheless, this allows to introduce a notion of privacy - having a rank-1 QFI matrix, with the only positive eigenvector spanning the space of the target function. Note that a rank-1 matrix is naturally non-invertible, making the multiparameter Crámer-Rao bound in Eq.~\ref{eq:cramermulti} ill-defined, as all the functions belonging to the null-space of the matrix would have an associated infinite variance. This aligns with the fact that these functions cannot be estimated, as they are not accessible. One could make use of a pseudo-inverse to address this fact, or transform the problem into a single-parameter scenario by applying a weight matrix in the vector direction \cite{Goldberg2021}. 
It is also important and useful to be able to quantify such a property - no state will perfectly satisfy such a condition due to inevitable noise, for example. It is then important to be able to say how close we are to being private. Although in \cite{Shettell2022a} privacy is quantified for the special case considered there, it is not general, not easily calculable, and its precise meaning in terms of the information leaked is not clear. 

With this in mind, we list some properties that are naturally desirable for a measure if one is to quantify how close one is to being perfectly private:
\begin{enumerate}
	\item $\mathcal{P} = \mathcal{P} (\mathcal{Q},\vec{a})$ where: \textit{(i)} $\mathcal{Q}$ is the QFI matrix with respect to parameters $\thetab = \{ \theta_1, \theta_2, ... ,\theta_k \} $, and provides the available information of the estimator and \textit{(ii)} $\vec{a}$ is the vector that defines the target function $f(\thetab) = \vec{a}\cdot\vec{\thetab}$.
	\item $\mathcal{P} (\mathcal{Q},\vec{a}) \in [0,1]$ to allow for normalization. The measure should depend not on the amount of information, nor the target vector norm, but on how the information is aligned with the target function.
	\item $\mathcal{P} (\mathcal{Q},\vec{a}) = 1$ iff. $\mathcal{Q} = \alpha \vec{a}\vec{a}^T$, meaning if and only if the information is aligned with the target function, then a state is completely private, with respect to that same function.
	\item $\mathcal{P} (B\mathcal{Q}B^{T},B\vec{a}) = \mathcal{P} (\mathcal{Q},\vec{a})$ as changing the basis should be equivalent to changing the target function. This is motivated by linear transformations of the parameters not changing the information available, in the rotated corresponding linear function.
	\item $\forall \epsilon > 0, A \geq 0, \  \exists \delta > 0 : | \mathcal{P} (\mathcal{Q}+\epsilon A,\vec{a}) - \mathcal{P} (\mathcal{Q},\vec{a})| \leq \delta$ to ensure that continuity of the QFI implies the continuity of the privacy.
\end{enumerate}
Property 1 and 3 ensure that the information available of a quantum estimation procedure (provided by $\mathcal{Q}$) align solely with the target function. Property 2 is a matter of defining the values this measure can take. Property 4 asserts that any linear reassignment of the parameters does not change the privacy statement for the new parameters. Finally, property 5 allows for a statement of the continuity of quantum states implying continuity of the privacy measure, as we know that if a quantum state is close to another, so is the QFI matrix \cite{Rezakhani2019}. This is a set of necessary properties for a measure able to characterize and distinguish different quantum estimation procedures, with respect to the linear subspace of the obtained information. We now present our privacy measure, which satisfies these properties:
\begin{definition}\label{def:privacymeasure}
The privacy measure of a multi-parameter estimation problem, which results in a quantum Fisher information matrix $\mathcal{Q}$, with respect to a target linear function $f(\vec{\thetab})= \vec{a}\cdot\vec{\thetab}$ with $||\vec{a}||=1$ is given by:
\begin{equation*}
	\mathcal{P}(\mathcal{Q},\vec{a}) = \frac{ \vec{a}^T \mathcal{Q} \vec{a}}{\Tr \mathcal{Q}} \equiv \frac{\Tr \left[ \mathcal{Q} \vec{a}\vec{a}^T \right]}{\Tr \mathcal{Q}} = \frac{\Tr \left[ \mathcal{Q} W_{\vec{a}} \right]}{\Tr \mathcal{Q}}.
\end{equation*}
In case $||\vec{a}|| \neq 1$, simply redefine $\vec{a} = \vec{a}/||\vec{a}||$.
\end{definition}

In Appendix~\ref{appendix:privacymeasure} we prove that the privacy measure in Def.~\ref{def:privacymeasure} verifies all the proposed properties, and address as well the meaning behind the numerical value of $\mathcal{P}$. Nonetheless, we provide some intuition below: 
\begin{enumerate}
	\item $\mathcal{P}=1$, means complete privacy, and in particular a state that verifies this is guaranteed to satisfy all conditions. 
	\item For $\mathcal{P}=0$, this implies there is a function of parameters, orthogonal to the target function $f(\thetab)$, that the dishonest parties can access optimally, which by the privacy condition (iii), the dishonest parties should have no information of. On the other hand, if everybody is honest, no information about $f(\thetab)=\vec{a}\cdot\vec{\thetab}$ could possibly be obtained.
	\item If $\mathcal{P}=\epsilon$, then one can do the same as above, but less efficiently, $i.e.$ the best strategy can only gather less information about the same function than when $\mathcal{P}=0$. In particular, the amount of information accessible decreases with the increase of $\epsilon$, which is equivalent to needing more repetitions to get the same variance on the information available.  
\end{enumerate}

We note here that in our setting the dynamics are part of the problem setting, so that privacy above depends only on the state, hence the notion of private state. 
The idea of the private sensing protocols of \cite{Shettell2022a} is to certify that one possesses such a private state, then use it for sensing, in this way ensuring privacy. 
Our definition of privacy contains several implications for a distributed secure sensing protocol. As a consequence of the definition of privacy, if one certifies a private state, then one is able to guarantee that a private estimation protocol is also completely secure \cite{Shettell2022a}. All the private states we find here can be certified using the techniques of \cite{Shettell2022a,Unnikrishnan2022}, and so can be included into protocols, guaranteeing private sensing.

\section{Building Private States \label{section:buildingprivatestates}}

To build private states, we start by developing an efficient way to characterize the QFI matrix as a function of the input state and fixed separable and parallel dynamics, first for stabilizer states using the stabilizer formalism, and next taking an arbitrary state as input, by choosing an appropriate basis. Using Corol.~\ref{thm:unitaryequivalence} this generalizes to any local single-qubit (or separable) dynamics, up to some local single-qubit operation. Later we will extend into the case of the most general dynamics, by finding another basis with respect to the more general encoding dynamics. Using this, and applying the privacy definition, we are able to prove which states and superpositions of states will be private ($\mathcal{P} = 1$), allowing us to create families of private states. Moreover, we do this in function of the amount of resources, as one will see is key to being able to find private states.

\subsection{Initial Concepts}
Before diving into building private states, let us introduce some well-known concepts and adaptations of them, that allow us to build the tools for our proofs of privacy. Starting with the Hamming-weight, a function that, given a bit string, outputs the number of ones of the string:

\begin{definition}
Let $s$ be a string of bits of size $n$, $s \in \mathbb{F}_2^n$. The Hamming weight of $s$, $h(s)$ is given by:
\begin{equation*}
	h(s) = \sum_{j=1}^n s_j  .
\end{equation*}
\end{definition}

Note, one key property of the Hamming-weight is its invariance under permutations of bits within the string $s$:
\begin{proposition}\label{prop:invarianceperm}
Let $X = \{1,2,...,n\}$ be the set of positions of a bit string $s$ of size $n$. Let $S_X$ be the permutation group over $X$.
\begin{equation*}
\begin{aligned}
	h(s) = h(\sigma(s)) \quad	\forall \sigma \in S_X, \forall s
\end{aligned}
\end{equation*}
\end{proposition}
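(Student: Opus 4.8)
The plan is to observe that the Hamming weight, as defined, is simply a sum of the bit values $s_j$ over all positions $j \in \{1,2,\dots,n\}$, and that applying a permutation $\sigma \in S_X$ only reshuffles which value sits in which position without creating or destroying any $1$'s. Concretely, if $s = (s_1,\dots,s_n)$ then by definition $\sigma(s)$ is the string whose $j$-th component is $s_{\sigma^{-1}(j)}$ (the precise convention does not matter here, only that $\sigma$ induces a bijection of the index set).

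First I would write $h(\sigma(s)) = \sum_{j=1}^n (\sigma(s))_j = \sum_{j=1}^n s_{\sigma^{-1}(j)}$. Then I would invoke the standard fact that a sum over a finite index set is unchanged under any reindexing by a bijection: since $\sigma^{-1}$ is a bijection of $\{1,2,\dots,n\}$ onto itself, the multiset $\{s_{\sigma^{-1}(j)} : j \in X\}$ equals the multiset $\{s_i : i \in X\}$, and hence $\sum_{j=1}^n s_{\sigma^{-1}(j)} = \sum_{i=1}^n s_i = h(s)$. Since $\sigma$ and $s$ were arbitrary, the claim follows for all $\sigma \in S_X$ and all $s$. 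If one wants to be fully elementary, one can instead prove it by induction on the number of transpositions in a reduced expression for $\sigma$: a single transposition $(a\;b)$ only swaps the two summands $s_a$ and $s_b$, leaving the total sum fixed, and an arbitrary $\sigma$ is a composition of transpositions, so the result propagates.

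There is essentially no obstacle here: the only thing to be slightly careful about is fixing the convention for how a permutation acts on a string (acting on positions versus acting on the labels of the entries), but under either convention the argument is the same commutativity-of-finite-addition fact. I would keep the proof to two or three lines, stating it as a reindexing of a finite sum, and perhaps remark that the same reasoning shows $h$ is the unique (up to scaling) symmetric linear functional on $\mathbb{F}_2^n$ viewed inside $\mathbb{Z}^n$, which is why it is the natural weight to use in the subsequent stabilizer-state computations.
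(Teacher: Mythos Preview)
Your proposal is correct and matches the paper's approach exactly: the paper simply states that the result ``is simply a consequence of the commutativity of the sum,'' which is precisely the reindexing-of-a-finite-sum argument you spell out. Your version is more explicit (and the transposition-induction remark is unnecessary overkill), but the underlying idea is identical.
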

This in fact allows one to create equivalence classes on binary strings for the Hamming-weight, as permutations generate the group that results in the invariance of the Hamming weight:
\begin{center}
\begin{tikzcd}[column sep=small]
 s  \in  \hspace{-0.5cm} &\mathbb{F}_2^n \arrow[rrd, "h"] \arrow[d, "\sigma"'] &  & \\
 \text{[}s \text{]} \in  \hspace{-0.5cm} &\mathbb{F}_2^n/S_X \arrow[rr, "\tilde{h}"'] & &  \mathbb{Z}
\end{tikzcd}
\end{center}
Where $\tilde{h}$ is an isomorphism. This means that doing the quotient $\mathbb{F}_2^n$ over $S_X$ will create the equivalent classes under the Hamming-weight. We denote them $[s]$, and they constitute the objects in $\mathbb{F}_2^n / S_{X}$ which are unequivocally mapped to a weight under the Hamming-weight map. For us it is also useful to define a slight variant of the Hamming weight, which we call the symmetrized Hamming-weight:

\begin{definition}
Let $s$ be a string of bits of size $n$, $s \in \mathbb{F}_2^n$. The symmetrized Hamming weight of $s$, $h^*(s)$ is given by:
\begin{equation*}
	h^*(s) = \sum_{j=1}^n (-1)^{s_j}  .
\end{equation*}
\end{definition}

Moreover, these two weight-functions are related via a linear transformation:
\begin{equation}
	h^* (s) = n - 2\cdot h(s),
\end{equation}
where $n$ is the integer size of the set of indices (or the length of the bit string). Because we are in a multiparameter scenario, where the parties have access to different amounts of resources, let us define a new vectorial Hamming-weight. We call it vectorial since in this case, it outputs a vector of Hamming-weights. Using the notation introduced before for the partition of the set of resources:

\begin{definition}\label{def:vectorialhammingweight}
Let $s$ be a string of bits of size $n$, $s \in \mathbb{F}_2^n$. Let $\boldsymbol{\mathcal{N}}= \mathcal{N}_1 \cup \cdots \mathcal{N}_k$ be a size $k$ partition over the set of indices of the bit string $s$, such that $n_j = |\mathcal{N}_j|$, and $\sum_{j}n_j  = n$. Define the $\boldsymbol{\mathcal{N}}$-Hamming weight of $s$, $\vec{h}_{\boldsymbol{\mathcal{N}}}(s)$ by:
\begin{equation*}
	\left( \vec{h}_{\boldsymbol{\mathcal{N}}}(s) \right)_j =  \sum_{i \in \mathcal{N}_j} s_{i},
\end{equation*}
where $j$ runs from $1$ to $k$, meaning $\vec{h}_{\boldsymbol{\mathcal{N}}}(s) \in \mathbb{Z}^k$.
\end{definition}
In the same way one can define its symmetrized counterpart and rewrite the linear relation between them:
\begin{equation}
	\vec{h}^*_{\boldsymbol{\mathcal{N}}} (s) = \vec{n} - 2\cdot \vec{h}_{\boldsymbol{\mathcal{N}}} (s).
\end{equation}

What before was invariance under any permutation of the bit string, now becomes only under a restricted set of permutations. Let us define this subset and prove that the vectorial Hamming weight is invariant under this subset of permutations.

\begin{definition}\label{def:permutationsvector}
Let $S_X$ be the permutations group over a set $X$. Moreover, let $\boldsymbol{X} = \{X_j\}_{j=1,2,\dots,k}$ be a partition of $X$. Define $S_{\boldsymbol{X}} \subseteq S_X$ as:
\begin{equation*}
\begin{aligned}
	S_{\boldsymbol{X}} = S_{X_1} \times S_{X_2} \times \cdots \times S_{X_k}.
\end{aligned}
\end{equation*}
\end{definition}

\begin{proposition}\label{prop:invariancepermS}
Let $X$ be a set and $\boldsymbol{X}= \{X_j\}_{j=1,2,\dots,k}$ be a partition of $X$. Let $S_X$ be the permutation group over $X$ and $S_{\boldsymbol{X}}$ the subset of it defined according to Def.~\ref{def:permutationsvector}.
\begin{equation*}
\begin{aligned}
	\vec{h}_{\boldsymbol{X}} (s) = \vec{h}_{\boldsymbol{X}} (\sigma(s)) \quad	\forall \sigma \in S_{\boldsymbol{X}}
\end{aligned}
\end{equation*}
\end{proposition}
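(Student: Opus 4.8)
Proposition~\ref{prop:invariancepermS} states that the vectorial Hamming weight $\vec{h}_{\boldsymbol{X}}$ is invariant under any permutation in $S_{\boldsymbol{X}} = S_{X_1} \times \cdots \times S_{X_k}$. The plan is to reduce this to Proposition~\ref{prop:invarianceperm}, the analogous statement for the ordinary (scalar) Hamming weight, applied blockwise. The essential observation is that a permutation $\sigma \in S_{\boldsymbol{X}}$ is, by Definition~\ref{def:permutationsvector}, of the form $\sigma = (\sigma_1, \sigma_2, \dots, \sigma_k)$ where each $\sigma_j \in S_{X_j}$ acts only on the indices in the block $X_j$ and fixes all indices outside $X_j$. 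Consequently, for a bit string $s$, the restriction of $\sigma(s)$ to the index set $X_j$ depends only on $\sigma_j$ and on the restriction $s|_{X_j}$ of $s$ to $X_j$; that is, $(\sigma(s))|_{X_j} = \sigma_j(s|_{X_j})$.

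The key steps, in order, are as follows. First, I would make precise the decomposition of $\sigma \in S_{\boldsymbol{X}}$ into its component permutations $\sigma_j \in S_{X_j}$, and note that since $\{X_j\}$ is a partition of $X$, each position $i \in X$ lies in exactly one block, so $\sigma$ is well-defined and the components act independently. Second, I would write out the $j$-th component of the vectorial Hamming weight directly from Definition~\ref{def:vectorialhammingweight}:
\begin{equation*}
\left( \vec{h}_{\boldsymbol{X}} (\sigma(s)) \right)_j = \sum_{i \in X_j} (\sigma(s))_i = \sum_{i \in X_j} s_{\sigma^{-1}(i)}.
\end{equation*}
Third, since $\sigma$ restricted to $X_j$ is exactly the bijection $\sigma_j : X_j \to X_j$, reindexing the sum by $i' = \sigma_j^{-1}(i)$ (a bijection of $X_j$ onto itself) gives $\sum_{i \in X_j} s_{\sigma_j^{-1}(i)} = \sum_{i' \in X_j} s_{i'} = \left( \vec{h}_{\boldsymbol{X}} (s) \right)_j$. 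Equivalently, one invokes Proposition~\ref{prop:invarianceperm} applied to the sub-string $s|_{X_j}$ with the permutation $\sigma_j$. Since this holds for every $j \in \{1,\dots,k\}$, the two vectors agree componentwise, hence $\vec{h}_{\boldsymbol{X}}(s) = \vec{h}_{\boldsymbol{X}}(\sigma(s))$.

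There is essentially no serious obstacle here; the statement is a bookkeeping consequence of the block structure of $S_{\boldsymbol{X}}$ and the fact that the Hamming weight of a string is a sum over positions, which is insensitive to reordering the summands. The only point requiring a modicum of care is the indexing convention — making sure that "permutation of the string" is interpreted consistently (whether $\sigma$ acts on positions or on the string by $(\sigma(s))_i = s_{\sigma^{-1}(i)}$) so that the reindexing step is clean — but this is resolved by fixing the convention once at the start, consistent with how Proposition~\ref{prop:invarianceperm} was stated. I would also remark in passing that, as in the scalar case (and the commutative diagram following Proposition~\ref{prop:invarianceperm}), this invariance means $\vec{h}_{\boldsymbol{X}}$ descends to a well-defined map on the quotient $\mathbb{F}_2^n / S_{\boldsymbol{X}}$, which is the structure that will be used later when classifying private states.
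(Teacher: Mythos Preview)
Your proposal is correct and follows exactly the approach indicated in the paper, which simply notes that Proposition~\ref{prop:invariancepermS} is a direct consequence of applying Proposition~\ref{prop:invarianceperm} to each block $X_j$ of the partition. Your write-up is in fact more detailed than the paper's one-line justification, but the underlying idea is identical.
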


Prop.~\ref{prop:invarianceperm} is simply a consequence of the commutativity of the sum. The proof of Prop.~\ref{prop:invariancepermS} is a direct consequence of Prop.~\ref{prop:invarianceperm} into each of the subsets of the partition.  Similarly to before, one can create the equivalence classes of the vectorial Hamming weight. Let us denote them $[s]_{\boldsymbol{X}}$, which are again the objects in $\mathbb{F}_2^n / S_{\boldsymbol{X}}$ which are unequivocally mapped to a weight under the vectorial Hamming weight map. 
\begin{center}
\begin{tikzcd}[column sep=small]
s \in \hspace{-0.5cm}&\mathbb{F}_2^n \arrow[rrd, "\vec{h}_{\boldsymbol{X}}"] \arrow[d, "\sigma"'] &  & \\
 \text{[}s \text{]}_{\boldsymbol{X}} \in  \hspace{-0.5cm} &\mathbb{F}_2^n/S_{\boldsymbol{X}} \arrow[rr, "\tilde{h}_{\boldsymbol{X}}"'] & &  \mathbb{Z}^k
\end{tikzcd}
\end{center}

As an example of using the vectorial Hamming weight and its equivalence classes, consider the following bit strings:

\begin{example}
Let $\vec{n} = \{ 1,3,2 \}$ be a vector in $\mathbb{N}^3$ and $s_1,s_2,s_3 \in \mathbb{F}^{6}$ be given by $s_1 =1\ 110 \ 10,s_2 = 1 \ 101 \ 10, s_3 = 1 \  011 \ 01$.
\begin{equation*}
\begin{aligned}
	&\vec{h}_{\boldsymbol{\mathcal{N}}}(s_1) = \vec{h}_{\boldsymbol{\mathcal{N}}}(\underbrace{1}_{\mathcal{N}_1} \underbrace{110}_{\mathcal{N}_2} \underbrace{10}_{\mathcal{N}_3})  =  \{ h(1),h(110),h(10) \} = \{ 1, 2, 1 \} \\
	&\vec{h}_{\boldsymbol{\mathcal{N}}}(s_1) =\vec{h}_{\boldsymbol{\mathcal{N}}}(s_2) =\vec{h}_{\boldsymbol{\mathcal{N}}}(s_3) \implies [s_1]_{\boldsymbol{\mathcal{N}}} = [s_2]_{\boldsymbol{\mathcal{N}}} = [s_3]_{\boldsymbol{\mathcal{N}}}
\end{aligned}
\end{equation*}
This means that all bit strings $s_1,s_2,s_3$ belong to the same equivalence class. Moreover, as a direct consequence of this, there exists necessarily a permutation in $S_{\boldsymbol{\mathcal{N}}}$ that transforms each one into each other one.
\end{example}

One can then show that any bit string $s$, indexed by a set $\mathcal{N}$, with partition $\boldsymbol{\mathcal{N}}$ with correspondent resource vector $\vec{n}$ has a $\boldsymbol{\mathcal{N}}$-Hamming weight bounded by:
\begin{equation}
\begin{aligned}
	\vec{0} \preceq \vec{h}_{\boldsymbol{\mathcal{N}}} (s) \preceq \vec{n} \qquad \implies \qquad -\vec{n} \preceq \vec{h}^*_{\boldsymbol{\mathcal{N}}} (s) \preceq \vec{n}.
\end{aligned}
\end{equation}

We also have the following consequences for vectors of integers, which we prove in Appendix~\ref{appendix:integers} and that will later be crucial:
\begin{restatable}{proposition}{gcdmultiple}\label{prop:gcdmultiple}
$\forall \ \vec{a},\vec{b} \in \mathbb{Z}^k $ such that $gcd(\vec{a})=1$ then, if $ \vec{b} = \alpha \vec{a} $ implies that $ \alpha \in \mathbb{Z}$.
\end{restatable}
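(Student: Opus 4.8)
The plan is to argue directly from the definition of the greatest common divisor. Assume $\vec a,\vec b\in\mathbb{Z}^k$ with $\gcd(\vec a)=1$ and suppose $\vec b=\alpha\vec a$ for some scalar $\alpha$. A priori $\alpha$ is merely a rational (or real) number making the equation $b_j=\alpha a_j$ hold for every coordinate $j$; the goal is to upgrade this to $\alpha\in\mathbb{Z}$. First I would dispose of the trivial case: since $\gcd(\vec a)=1$, the vector $\vec a$ is nonzero, so there is at least one index $j_0$ with $a_{j_0}\neq 0$, and then $\alpha=b_{j_0}/a_{j_0}$ is at least a well-defined rational number, say $\alpha=p/q$ in lowest terms with $q\geq 1$ and $\gcd(p,q)=1$.

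Next I would show $q=1$. From $q\,b_j=p\,a_j$ for all $j$, and $\gcd(p,q)=1$, it follows that $q\mid a_j$ for every $j$ (because $q\mid p\,a_j$ and $q$ is coprime to $p$, Euclid's lemma gives $q\mid a_j$). Hence $q$ is a common divisor of all the $a_j$, so $q\mid\gcd(\vec a)=1$, forcing $q=1$ and therefore $\alpha=p\in\mathbb{Z}$. This is the whole argument; it is essentially one application of Euclid's lemma coordinatewise followed by the defining property that $\gcd(\vec a)$ divides every common divisor.

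The only mild subtlety — and the place I would be most careful — is making sure the statement is being read with the implicit hypothesis that $\vec b$ is genuinely an integer vector (which is given, $\vec b\in\mathbb{Z}^k$) and that $\alpha$ is understood as a real scalar a priori; the content of the proposition is precisely that this scalar cannot be a non-integer. One should also note the edge behaviour: if $\vec b=\vec 0$ then $\alpha=0\in\mathbb{Z}$ and there is nothing to prove, so the lowest-terms normalization of $\alpha$ above can be taken with the convention $0=0/1$. I expect no real obstacle here; the result is elementary, and its role in the paper is as a clean lemma ensuring that once a target coefficient vector $\vec a$ is normalized to $\gcd(\vec a)=1$, any integer vector proportional to it is an \emph{integer} multiple, which is exactly what is needed to pin down resource overheads and the ``zones'' of resource counts defined earlier.
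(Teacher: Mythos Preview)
Your proof is correct and follows essentially the same approach as the paper's: write the scalar as a reduced fraction $p/q$ (the paper writes $c/d$), observe that the denominator must divide every $a_j$, and conclude $q=1$ from $\gcd(\vec a)=1$. Your version is in fact cleaner, making the appeal to Euclid's lemma explicit and handling the trivial $\vec b=\vec 0$ case, whereas the paper's argument leaves the lowest-terms assumption and the use of coprimality implicit.
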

\begin{restatable}{proposition}{gcdorder}\label{thm:gcdorder}
Let $\vec{m}$ be a integer vector in $\mathbb{Z}_+^k$ such that $gcd(\vec{m})=1$. Then:
\begin{equation*}
	\nexists \ \vec{a} \neq \vec{b} \in \mathbb{Z}_+^k,  \vec{a} , \vec{b}\prec \vec{m}: (\vec{m} -2\vec{a}) \pm (\vec{m} -2\vec{b}) = \alpha \vec{m} \quad , \quad \alpha \neq 0.
\end{equation*}
\end{restatable}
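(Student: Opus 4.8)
The plan is to treat the two sign choices together, reduce each to a divisibility constraint via Proposition~\ref{prop:gcdmultiple}, and then finish with an elementary bound on a single integer scalar.

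First I would simplify the hypothetical identity. Assume, for contradiction, that such $\vec{a}\neq\vec{b}$ in $\mathbb{Z}_+^k$ with $\vec{a},\vec{b}\prec\vec{m}$ and $\alpha\neq 0$ exist. In the ``$+$'' instance, $(\vec{m}-2\vec{a})+(\vec{m}-2\vec{b}) = 2(\vec{m}-\vec{a}-\vec{b})$, so the identity reads $2(\vec{m}-\vec{a}-\vec{b}) = \alpha\vec{m}$; in the ``$-$'' instance, $(\vec{m}-2\vec{a})-(\vec{m}-2\vec{b}) = 2(\vec{b}-\vec{a})$, so it reads $2(\vec{b}-\vec{a}) = \alpha\vec{m}$. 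In both cases there is an \emph{integer} vector $\vec{v}$ — equal to $\vec{m}-\vec{a}-\vec{b}$, respectively $\vec{b}-\vec{a}$ — satisfying $\vec{v} = (\alpha/2)\,\vec{m}$.

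Next I would invoke Proposition~\ref{prop:gcdmultiple} with $\vec{m}$ in the role of the primitive vector (it has $gcd=1$) and $\vec{v}$ in the role of the integer multiple: this forces $\alpha/2\in\mathbb{Z}$, so $\alpha=2\alpha'$ for some nonzero integer $\alpha'$, and hence $\vec{a}+\vec{b}=(1-\alpha')\vec{m}$ in the ``$+$'' case and $\vec{b}-\vec{a}=\alpha'\vec{m}$ in the ``$-$'' case. Finally I would pin down $\alpha'$ using the range of $\vec{a},\vec{b}$. From $\vec{a},\vec{b}\prec\vec{m}$ with nonnegative entries we have $\vec{0}\preceq\vec{a},\vec{b}\preceq\vec{m}$ componentwise together with $\vec{a}\neq\vec{m}$ and $\vec{b}\neq\vec{m}$. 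In the ``$+$'' case, $\vec{0}\preceq\vec{a}+\vec{b}\preceq 2\vec{m}$ and $m_j>0$ for all $j$ force $1-\alpha'\in\{0,1,2\}$: the value $1$ means $\alpha'=0$, i.e.\ $\alpha=0$ (excluded); the value $0$ gives $\vec{a}=\vec{b}=\vec{0}$ (contradicting $\vec{a}\neq\vec{b}$); the value $2$ gives $a_j+b_j=2m_j$ for every $j$, which together with $a_j,b_j\le m_j$ forces $\vec{a}=\vec{m}$ (contradicting $\vec{a}\prec\vec{m}$). In the ``$-$'' case, $-\vec{m}\preceq\vec{b}-\vec{a}\preceq\vec{m}$ forces $\alpha'\in\{-1,0,1\}$, and with $\alpha'\neq 0$ either $\vec{b}-\vec{a}=\vec{m}$ (so $\vec{b}=\vec{m}$, $\vec{a}=\vec{0}$) or $\vec{b}-\vec{a}=-\vec{m}$ (so $\vec{a}=\vec{m}$, $\vec{b}=\vec{0}$), both contradicting the strict order. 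Every branch is contradictory, which establishes the proposition.

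The argument is elementary, and I expect the only real care to be in the bookkeeping of the strict order $\prec$ and of the hypothesis $\alpha\neq 0$: the former is exactly what eliminates the boundary solutions with $\vec{a}=\vec{m}$ or $\vec{b}=\vec{m}$, and the latter is exactly what eliminates the otherwise legitimate solution $\vec{a}+\vec{b}=\vec{m}$ in the ``$+$'' case. The one genuinely indispensable ingredient is Proposition~\ref{prop:gcdmultiple} (equivalently, $gcd(\vec{m})=1$): if $\vec{m}$ had a nontrivial common factor, then $\alpha/2$ need not be an integer and the reduction in the second step would fail.
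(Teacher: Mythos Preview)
Your proof is correct and follows essentially the same route as the paper's: simplify each sign case to $\vec{m}-(\vec{a}+\vec{b})\propto\vec{m}$ and $\vec{a}-\vec{b}\propto\vec{m}$, use Proposition~\ref{prop:gcdmultiple} to force the proportionality constant to be an integer, and then exploit the componentwise range $(-\vec{m},\vec{m})$ to conclude that this integer must be zero. The only difference is cosmetic: the paper argues the strict bounds $-\vec{m}\prec\cdot\prec\vec{m}$ directly and stops there, whereas you keep the non-strict bounds and explicitly eliminate the boundary values $1-\alpha'\in\{0,2\}$ and $\alpha'=\pm 1$ using $\vec{a}\neq\vec{b}$ and $\vec{a},\vec{b}\prec\vec{m}$. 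Your bookkeeping is a bit more explicit, but the logic is the same.
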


\begin{restatable}{proposition}{gcdorderm}\label{thm:gcdorder2}
Let $\vec{m}$ be a target vector in $\mathbb{Z}_+^k$ such that $gcd(\vec{m})=1$. Let $\vec{n}$ be a resource vector in the same space, such that $\vec{n} \not \preceq \vec{m}$. Then:
\begin{equation*}
	\nexists \ \vec{a} \neq \vec{b} \in \mathbb{Z}_+^k,  \vec{a} , \vec{b} \prec \vec{n}: (\vec{n} -2\vec{a}) \pm (\vec{n} -2\vec{b}) = \alpha \vec{m} \quad , \quad \alpha \neq 0.
\end{equation*}
\end{restatable}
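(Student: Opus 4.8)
The plan is to assume, toward a contradiction, that distinct vectors $\vec{a},\vec{b}\in\mathbb{Z}_+^k$ with $\vec{a},\vec{b}\prec\vec{n}$ and a nonzero integer $\alpha$ satisfy $(\vec{n}-2\vec{a})\pm(\vec{n}-2\vec{b})=\alpha\vec{m}$, and to treat the two sign choices in parallel. With the minus sign the left-hand side is $2(\vec{b}-\vec{a})$; with the plus sign it is $2(\vec{n}-\vec{a}-\vec{b})$. In both cases it is componentwise even, so $\alpha m_i$ is even for every coordinate $i$.

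The first step is the parity argument, which is the only place $gcd(\vec{m})=1$ enters: if $\alpha$ were odd then every $m_i$ would have to be even, forcing $2\mid gcd(\vec{m})$ and contradicting $gcd(\vec{m})=1$. Hence $\alpha=2\beta$ with $\beta$ a nonzero integer, and cancelling the factor of $2$ leaves $\vec{b}-\vec{a}=\beta\vec{m}$ in the minus case and $\vec{n}-\vec{a}-\vec{b}=\beta\vec{m}$ in the plus case.

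The second step turns the box constraint $\vec{0}\preceq\vec{a},\vec{b}\prec\vec{n}$ into a bound on $|\beta|$. From $0\le a_i\le n_i$ and $0\le b_i\le n_i$ one gets $|b_i-a_i|\le n_i$, and likewise $0\le a_i+b_i\le 2n_i$, so $|n_i-a_i-b_i|\le n_i$. Either way $|\beta|\,m_i\le n_i$ for every $i$, i.e. $|\beta|\,\vec{m}\preceq\vec{n}$. Now invoke the hypothesis: $\vec{n}\not\preceq\vec{m}$ here means (recall the convention fixed earlier, $\vec{x}\not\preceq\vec{y}$ denoting incomparability) that $\vec{n}$ and $\vec{m}$ are incomparable, so in particular there is a coordinate $j$ with $n_j<m_j$, where necessarily $m_j\ge 1$. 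Since $\beta$ is a nonzero integer, $|\beta|\ge 1$, whence $m_j\le|\beta|\,m_j\le n_j<m_j$, which is absurd. This rules out both sign choices and proves the proposition.

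I expect no serious obstacle — the argument is three short moves (parity, box-bound, incomparability) — but two points need care. One is bookkeeping: one must read $\mathbb{Z}_+$ and $\prec$ so that $0\le a_i,b_i\le n_i$ genuinely holds, and one must use that $\vec{n}\not\preceq\vec{m}$ yields a coordinate where $\vec{n}$ strictly \emph{undershoots} $\vec{m}$, not merely fails to dominate it. The other is conceptual: this statement is the analogue of Proposition~\ref{thm:gcdorder} with the bounding box $\vec{m}$ replaced by the resource vector $\vec{n}$, and whereas there the contradiction is reached through $|\beta|=1$ together with a boundary case-analysis, here the incomparability of $\vec{n}$ and $\vec{m}$ already excludes $|\beta|\ge 1$ outright, so no separate treatment of the extreme choices $\vec{a}=\vec{0}$, $\vec{b}=\vec{n}$ is required.
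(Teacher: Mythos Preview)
Your proof is correct and follows essentially the same approach as the paper's: both reduce the two sign cases to $\vec{b}-\vec{a}$ and $\vec{n}-\vec{a}-\vec{b}$ being integer multiples of $\vec{m}$, bound these componentwise by $\vec{n}$, and then use $\vec{n}\not\preceq\vec{m}$ to extract a coordinate $j$ with $n_j<m_j$, contradicting $|\beta|\ge 1$. Your parity argument is simply a more explicit justification of the step the paper writes tersely as the equivalence $2[\cdots]\propto\vec{m}\Leftrightarrow[\cdots]\propto\vec{m}$.
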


All of these concepts we have introduced focus on describing the functions available by employing integer resources in an estimation scenario. The manipulation of the Hamming weight into its symmetrized counterpart is a direct consequence (together with Corol.~\ref{thm:unitaryequivalence}) of being able to choose any local basis for each qubit. If one is able to choose any basis for one qubit, then each basis will have two possible qubit eigenstates, which can be characterized by a bit (0 or 1), and will have eigenvalues $\pm 1$. Then, each eigenstate of the entire state in such basis can be labeled by a bit string of the size of the number of qubits. Since we choose the basis in the same "direction" as the individual Hamiltonians, then the eigenvectors of the local Hamiltonian will be given by the eigenstates created. Moreover, and as a consequence of the $\pm 1$ eigenvalues of each qubit Hamiltonian, each eigenstate characterized by the bit string $s$ will have a correspondent eigenvalue given by the symmetrized Hamming weight of that bit string. We will use this to prove then how to find the private states, and build the entire set of private states.

\subsection{Privacy for Stabilizer States with Separable Hamiltonians}

Stabilizer formalism is a very useful formalism to describe quantum states in terms of their symmetries \cite{Hein2004b,Hein2006,Aaronson2004}. As the name suggests, it involves the notion of a stabilizer, an operator that stabilizes a state ($S \ket{\psi} = \ket{\psi}$). A $n$-qubit state is called a stabilizer state if it is stabilized by a set of $n$ stabilizers $s \in \mathcal{S}$ such that:

\begin{enumerate}[label=(\roman*)]
	\item $[s_i,s_j]=0, \forall s_i,s_j \in \mathcal{S}$ (Abelian),
	\item $\forall s_i,s_j, s_k \in \mathcal{S}: s_i \circ s_j \neq s_k$ (Linear Independence),
	\item $s_i \in \{\pm 1,\pm i\} \mathcal{P}^n$ such that $s_i \neq \id$ (Pauli strings).
\end{enumerate}
We will denote the ensemble of stabilizer states in the Hilbert space $\mathcal{H}$ by $\mathsf{Stab}(\mathcal{H})$. Given these conditions, the group generated by $\mathcal{S}$ is the complete set of stabilizers of the state. We call this group $\boldsymbol{S} = \langle \mathcal{S} \rangle$. It is also known that:

\begin{align}
	\ketbra{\psi}{\psi} = \frac{1}{2^n} \sum_{s \in \boldsymbol{S}} s = \frac{1}{2^n} \prod_{s_i \in \mathcal{S}} (\mathbb{1} + s_i).
	\label{eq:stabdescription}
\end{align}

Moreover, there are multiple representations for the generators of a state. We present an example of GHZ state with three qubits, in different representations, namely a stabilizer table representation and the binary sympletic representation:
\begin{equation}
\begin{tabu}{ c | c | c}
\boldsymbol{q}_1 & \boldsymbol{q}_2 & \boldsymbol{q}_3 \\ \hline \hline
X_1 & X_2 & X_3 \\
Z_1 & Z_2 & \mathbb{1} \\
Z_1 & \mathbb{1} & Z_3 
\end{tabu} \qquad \text{ or } \qquad (\boldsymbol{X}|\boldsymbol{Z}) = 
\left(
\begin{tabular}{ c c c | c c c}
    1 & 1 & 1 & 0 & 0 & 0 \\
    0 & 0 & 0 & 1 & 1 & 0 \\
    0 & 0 & 0 & 1 & 0 & 1
\end{tabular}
\right).
\label{eq:sympletic}
\end{equation}

One can easily check that a star-graph centered on the first node is related with the GHZ state of three qubits via Hadamard gates being applied to both leafs of the star-graph, $i.e.$ nodes 2 and 3. Doing so is equivalent to applying Hadamard gates to $\boldsymbol{q}_2$ and $\boldsymbol{q}_3$:

\begin{align*}
	&\ket{GHZ_3} \xrightarrow{H_2 H_3} \ket{S_3} \\ \\
	\begin{tabu}{ c | c | c}
	\boldsymbol{q}_1 & \boldsymbol{q}_2 & \boldsymbol{q}_3 \\ \hline \hline
	X_1 & X_2 & X_3 \\
	Z_1 & Z_2 & \mathbb{1} \\
	Z_1 & \mathbb{1} & Z_3 
	\end{tabu}
	\xrightarrow{H_2 H_3}
	&
	= \begin{tabu}{ c | c | c}
	\boldsymbol{q}_1 & \boldsymbol{q}_2 & \boldsymbol{q}_3 \\ \hline \hline
	X_1 & Z_2 & Z_2 \\
	Z_1 & X_2 & \mathbb{1} \\
	Z_1 & \mathbb{1} & X_3 
	\end{tabu} 
    \equiv \left(
    \begin{tabular}{ c c c | c c c}
        1 & 0 & 0 & 0 & 1 & 1 \\
        0 & 1 & 0 & 1 & 0 & 0 \\
        0 & 0 & 1 & 1 & 0 & 0
    \end{tabular}
    \right),
\end{align*}
where the binary sympletic representation becomes $(\boldsymbol{X}|\boldsymbol{Z}) = (\mathbb{1}|\boldsymbol{\Gamma})$, with $\boldsymbol{\Gamma}$ being the adjacency matrix of the star graph. In general, one can always do this for a graph state. 

Using the stabilizer formalism, one can then find an expression for the QFI in terms of the number of stabilizers \cite{Shettell2022,Shettell2020c,Tao2023}. Under the assumptions in Eq.~\ref{eq:dynamics}:

\begin{equation}
\begin{aligned}
	\ket{\psi_{\thetab}} &= U_{\thetab} \ket{\psi} = \Motimes_{\mu \in V} e^{-i \theta_\mu \boldsymbol{G}_\mu}  \ket{\psi} , \quad\ket{\partial_{\theta_\mu} \psi_{\thetab}} = -i \boldsymbol{G}_\mu U_{\thetab} \ket{\psi}, \\
	\mathcal{Q}_{\mu \nu} (\rho_{\thetab}) &=4 \re{ \braket{\partial_{\theta_\mu} \psi_{\thetab} | \partial_{\theta_\nu} \psi_{\thetab} }  - \braket{ \partial_{\theta_\mu} \psi_{\thetab} | \psi_{\thetab}}\braket{ \psi_{\thetab} | \partial_{\theta_\nu} \psi_{\thetab}}  } \\
	&= 4 \re{ \braket{ \psi | \boldsymbol{G}_\mu \boldsymbol{G}_\nu | \psi } - \braket{ \psi | \boldsymbol{G}_\mu | \psi }\braket{ \psi | \boldsymbol{G}_\nu | \psi }  } \\
	&= 4 \sum_{j \in \mu} \sum_{k \in \nu} \re{ \braket{ \psi | G_j G_k | \psi } - \braket{ \psi | G_j | \psi }\braket{ \psi | G_k | \psi }  } \\
    &= 4 \sum_{j \in \mu} \sum_{k \in \nu} \id_{\mathcal{S}_{\psi}} (G_j G_k) - \id_{\mathcal{S}_{\psi}} (G_j)\id_{\mathcal{S}_{\psi}} (G_k),
\end{aligned}
\label{eq:qfistab}
\end{equation}
where $ \id_{\mathcal{S}_{\psi}} (G)$ is the indicator function over the stabilizer set of $\ket{\psi}$, $\mathcal{S}_\psi$, evaluating to 1 if $G \in \mathcal{S}_\psi$ and 0 otherwise.
From here we can use the stabilizer formalism to understand which terms come out different from zero. There are two cases one should analyze:
\begin{enumerate} 
	\item If both $a G_j$ and $b G_k$, where $a,b \in \{ \pm 1, \pm i\}$, are stabilizers of the state $\psi$, then so is $ab G_j G_k$, making its contribution to the QFI null;
	\item The only scenario for which the QFI is different from zero is the case where $\pm G_j G_k$ is a stabilizer (note the real part in Eq.~\ref{eq:qfistab}), but neither $a G_j$ nor $b G_k$ are stabilizers of the state.
\end{enumerate}
The case where either $a G_j$ or $b G_k$ is a stabilizer and $\pm G_j G_k$ is also a stabilizer is not possible, as it would imply all of them are stabilizers.
The QFI then becomes a combinatorial problem. The generators $G_j$ can always be made up such that $G_j^{p} = G_j^{p\mod 2}$, which is the same type of structure when working with the vector space of $\mathbb{F}_2^n$, where the sum is already made modulus 2. Suppose the following set of generators:
\begin{equation}
	g(\vec{l}) = G_1^{l_1} \otimes G_2^{l_2} \otimes \dots \otimes G_n^{l_n},
	\label{eq:set1}
\end{equation}
where $\vec{l} = (l_1, l_2, ... ,l_n)$ is a vector in $\mathbb{F}_2^n$, made up of zeros and ones. Using this, we can define the trivial base of $n$ vectors that generate all vectors of $\mathbb{F}_2^n$ by simply:
\begin{equation}
\mathfrak{B}_e
\begin{cases}
	e_1 &= (1, 0, 0, ...,0,  0) \\
	e_2 &= (0, 1, 0, ..., 0, 0) \\
	& \ \  \vdots \\
	e_n &= (0, 0, 0, ...,0 , 1) 
\end{cases}.
\label{eq:basisfield}
\end{equation}
From the vector space characteristics, the choice of basis is not unique. This means that taking $\{ g(\vec{e}_j) \}_{\vec{e}_j \in \mathfrak{B}_e}$ for our stabilizer generators is equivalent to taking any other basis $g(\vec{\tilde{e}}_j)$, resulting in all terms $G_j$ and $G_j G_k$ being part of the stabilizers of the state. Note that, from its construction using only $G_j$ and $\mathbb{1}$, the commutation rule for the set of generators is already verified. This means that:
\begin{proposition}
Any generator set of stabilizers in the form of Eq. \ref{eq:set1} with an associated complete basis for the binary vectors defines a state with no information retrievable.
\label{prop:privatebasis}
\end{proposition}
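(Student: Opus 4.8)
The plan is to unwind the QFI expression \eqref{eq:qfistab} in the special case where the stabilizer generators are exactly $\{g(\vec{e}_j)\}_{\vec{e}_j \in \mathfrak{B}_e}$, and show every entry of $\mathcal{Q}_{\mu\nu}$ vanishes. Recall from the analysis following \eqref{eq:qfistab} that the only way $\id_{\mathcal{S}_\psi}(G_j G_k) - \id_{\mathcal{S}_\psi}(G_j)\id_{\mathcal{S}_\psi}(G_k)$ is nonzero is when $\pm G_j G_k$ is a stabilizer of $\ket{\psi}$ while neither $aG_j$ nor $bG_k$ is (for $a,b \in \{\pm 1, \pm i\}$). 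So the whole claim reduces to showing that \emph{for the basis choice $g(\vec{e}_j)$, whenever $G_j$ or $G_jG_k$ is a stabilizer, then $G_j$ itself is also a stabilizer}, and hence the second (``mixed'') case never contributes.

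The key observation is that for the basis $\mathfrak{B}_e$ we have $g(\vec e_j) = \mathbb{1} \otimes \cdots \otimes G_j \otimes \cdots \otimes \mathbb{1} = G_j$ literally, so each single-qubit generator $G_j$ is by construction in $\mathcal{S}_\psi$. Consequently $\id_{\mathcal{S}_\psi}(G_j) = 1$ for every $j$. For the two-body term, $G_j G_k = g(\vec e_j) \circ g(\vec e_k) = g(\vec e_j + \vec e_k)$, which lies in the group $\boldsymbol{S} = \langle \mathcal{S} \rangle$ generated by the $g(\vec e_j)$; since $\{\vec e_j\}$ spans $\mathbb{F}_2^n$, the group contains $g(\vec l)$ for every $\vec l$, so $\id_{\mathcal{S}_\psi}(G_j G_k) = 1$ as well (up to the irrelevant phase, which drops out under $\mathrm{Re}$). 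Therefore every summand $\id_{\mathcal{S}_\psi}(G_j G_k) - \id_{\mathcal{S}_\psi}(G_j)\id_{\mathcal{S}_\psi}(G_k) = 1 - 1\cdot 1 = 0$, and plugging into \eqref{eq:qfistab} gives $\mathcal{Q}_{\mu\nu}(\rho_{\thetab}) = 0$ for all $\mu, \nu$. A zero QFI matrix means no function of the parameters is estimable, i.e.\ no information is retrievable, which is exactly the assertion.

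I would then add the remark that the statement is stated for ``any other basis'' $g(\tilde{\vec e}_j)$ too: since a change of basis in $\mathbb{F}_2^n$ only relabels which products $g(\vec l)$ are called ``generators'' versus ``derived elements,'' the group $\boldsymbol S$ is unchanged and still contains all of $\{g(\vec l)\}_{\vec l \in \mathbb{F}_2^n}$, so the same cancellation argument goes through verbatim with $\id_{\mathcal{S}_\psi}(G_j), \id_{\mathcal{S}_\psi}(G_jG_k)$ possibly equal to $0$ or $1$ but always \emph{equal to each other in the product sense} --- more precisely, the abelian and linear-independence conditions guarantee consistency so the mixed case is still excluded. The main obstacle, and the one point deserving care, is precisely this last consistency check: one must be sure that with a generic $\mathbb{F}_2$-basis it cannot happen that $G_j G_k$ is a stabilizer while exactly one of $G_j, G_k$ fails to be --- but this is ruled out by the observation already made in the text (``it would imply all of them are stabilizers'' via the group structure), so the argument is complete.
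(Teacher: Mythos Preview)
Your proof is correct and follows essentially the same approach as the paper: both arguments hinge on the observation that any complete basis of $\mathbb{F}_2^n$ generates the entire group $\{g(\vec{l})\}_{\vec{l}\in\mathbb{F}_2^n}$, so every $G_j$ and every $G_jG_k$ lies in the stabilizer group, forcing each summand of the QFI expression to vanish. One small remark: your hedging in the final paragraph (``possibly equal to $0$ or $1$'') is unnecessary, since once you note the group is unchanged you already have $\id_{\mathcal{S}_\psi}(G_j)=\id_{\mathcal{S}_\psi}(G_jG_k)=1$ for all $j,k$, exactly as in the canonical-basis case.
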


\begin{proof}
For a generator set to correspond to a state with information, it should contain terms of the form $G_j G_k$ but not the terms of the form $G_j$ nor $G_k$. Since there is an isomorphism between the group of stabilizers $\mathcal{S}$ and the $\mathbb{F}_2^n$, any linear independent set of vectors in $\mathbb{F}_2^n$ defines a linear independent set of generators for $\mathcal{S}$. Moreover, taking whichever complete linearly independent basis for $\mathbb{F}_2^n$ we verify that its span includes all bit strings with length $n$. This in turn implies that whichever basis we choose for Eq. \ref{eq:set1}, it will include both terms of the form $G_j G_k$ and $G_j$ or $G_k$, resulting in a state with no information.
\end{proof}

This allows us to prove which states are able to remain private for estimating a family of functions. The following two theorems essentially tell us that the only private stabiliser state resource is locally equivalent to the GHZ state, and the distributed resources (number of qubits per network node) is fixed by the function. We start by proving there is a zone of resources where no private state can be built:

\begin{theorem}[No-Privacy Zone for Stabilizer States]
Let $\vec{a}, \vec{n} \in \mathbb{Z}^k$, $n = \norm{\vec{n}}_1$, $gcd(\vec{a})=1$ and $f(\thetab)$ be the linear function given by $f(\thetab) = \vec{a}\cdot \vec{\thetab}$. Then, there is no stabilizer state capable of estimating $f$ privately with $\vec{n}$ distributed resources and access to local separable dynamics $U_{\thetab}$, such that $\vec{n} \prec \vec{a}$ or $\vec{n} \not\preceq \vec{a}$.
\begin{equation*}
	\nexists \ket{\psi} \in \mathsf{Stab}(\mathcal{H}(n)): \mathcal{P} (\mathcal{Q} (U_{\thetab}\ket{\psi}),\vec{a}) = 1
\end{equation*}
\label{thm:noprivacystab}
\end{theorem}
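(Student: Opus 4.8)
The plan is to put the QFI matrix of an arbitrary stabilizer state (under the fixed dynamics) into a canonical ``sum of rank-one pieces'' form and then show that such a matrix can never be proportional to $\vec a\vec a^{T}$ once some node has too few qubits. By Corol.~\ref{thm:unitaryequivalence} I fix the encoding to $U_{\thetab} = \bigotimes_{\mu} e^{-i\theta_\mu \sum_{j\in\mathcal N_\mu} Z_j}$ (folding the eigenbasis of the generators into the local basis, in which $\ket{\psi}$ is still a stabilizer state), so that Eq.~\ref{eq:qfistab} applies:
\[
\mathcal Q_{\mu\nu} = 4\sum_{j\in\mathcal N_\mu}\sum_{k\in\mathcal N_\nu}\Big[\id_{\mathcal S_\psi}(Z_jZ_k) - \id_{\mathcal S_\psi}(Z_j)\,\id_{\mathcal S_\psi}(Z_k)\Big].
\]
Since a private state must have $\mathcal Q = \alpha\,\vec a\vec a^{T}$ with $\alpha>0$ (Def.~\ref{def:privacymeasure}; a state with $\mathcal Q=0$ cannot estimate $f$ at all), it suffices to exclude this shape.

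The key step is the structural identity for $\mathcal Q$. Call a qubit $j$ \emph{pinned} if $\braket{\psi|Z_j|\psi}=\pm1$; by the case analysis preceding the theorem, pinned qubits contribute nothing to $\mathcal Q$. On the remaining (unpinned) qubits define $j\sim k$ iff $\braket{\psi|Z_jZ_k|\psi}=\pm1$; since $Z_k^2=\id$ and the stabilizer group is closed under products, $\sim$ is an equivalence relation and, by transitivity, the sign of $Z_jZ_k$ within a class may be written consistently as $\epsilon_j\epsilon_k$ with $\epsilon_j\in\{\pm1\}$ fixed per qubit. Substituting into the formula above and using $\braket{\psi|Z_j|\psi}=0$ for unpinned $j$, the double sum collapses class by class to
\[
\mathcal Q = 4\sum_{C}\vec w_C\,\vec w_C^{\,T},\qquad (\vec w_C)_\mu = \sum_{j\in\mathcal N_\mu\cap C}\epsilon_j \in\mathbb{Z},\qquad \big|(\vec w_C)_\mu\big| \le |\mathcal N_\mu\cap C| \le n_\mu,
\]
the sum ranging over the $\sim$-classes $C$ of unpinned qubits. (Equivalently: a stabilizer state is uniform on a coset of a subspace of $\mathbb{F}_2^n$ in the computational basis, and $\mathcal Q = 4\,\mathrm{Cov}_{s}\!\big[\vec{h}^{*}_{\boldsymbol{\mathcal{N}}}(s)\big]$ splits into exactly these rank-one blocks.) Thus $\mathcal Q$ is a sum of rank-$\le1$ positive semidefinite matrices, each carried by an integer vector bounded componentwise by $\vec n$.

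To conclude, suppose $\mathcal Q = \alpha\,\vec a\vec a^{T}$ with $\alpha>0$, so $\mathrm{range}(\mathcal Q)=\mathrm{span}(\vec a)$ and $\mathcal Q\neq0$. For each $C$, $\mathcal Q - 4\vec w_C\vec w_C^{T}\succeq0$ forces $\vec w_C\in\mathrm{range}(\mathcal Q)=\mathrm{span}(\vec a)$, hence $\vec w_C=c_C\,\vec a$; and $\mathcal Q\neq0$ gives some $c_{C_0}\neq0$. Since $\vec w_{C_0}\in\mathbb{Z}^k$ and $\gcd(\vec a)=1$, Prop.~\ref{prop:gcdmultiple} yields $c_{C_0}\in\mathbb{Z}\setminus\{0\}$, whence $a_\mu\le|c_{C_0}a_\mu|=|(\vec w_{C_0})_\mu|\le n_\mu$ for every $\mu$, i.e.\ $\vec a\preceq\vec n$. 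This contradicts the hypothesis, which puts us in a regime where some node obeys $n_{\mu_0}<a_{\mu_0}$ (in particular whenever $\vec n\prec\vec a$). Hence no stabilizer state can be private for $f$ with these resources.

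The main obstacle is the structural identity of the second paragraph: verifying that pinned qubits drop out, that $\sim$ is a genuine equivalence relation carrying consistent signs, and that the surviving cross terms reassemble into the clean rank-one blocks $4\vec w_C\vec w_C^{T}$ with the bound $|(\vec w_C)_\mu|\le n_\mu$. Once that identity is in hand, the remainder is elementary linear algebra (ranges of sums of PSD rank-one matrices) together with the one-line number theory of Prop.~\ref{prop:gcdmultiple}; the affine-subspace/covariance picture, which can be read off Eq.~\ref{eq:stabdescription}, is the most transparent route to the identity.
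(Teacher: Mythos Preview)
Your proof is correct and takes a genuinely different, more structural route than the paper's. The paper argues briefly that the entries of $\mathcal Q$ are integers bounded by $n_\mu n_\nu$, and then invokes Prop.~\ref{prop:gcdmultiple} to assert that the scalar $\lambda_1$ in $\mathcal Q=\lambda_1\vec a\vec a^{T}$ must itself be an integer (which, strictly, needs the extra observation that each column $\lambda_1 a_\nu\,\vec a$ lies in $\mathbb Z^k$, plus B\'ezout), whence the diagonal bound $\lambda_1 a_\mu^2\le \mathcal Q_{\mu\mu}\le 4n_\mu^2$ forces $a_\mu\le n_\mu$. You instead build the explicit decomposition $\mathcal Q=4\sum_C\vec w_C\vec w_C^{T}$ by grouping unpinned qubits into equivalence classes under ``$Z_jZ_k$ is a stabilizer up to sign'', with $\vec w_C\in\mathbb Z^k$ and $|(\vec w_C)_\mu|\le n_\mu$. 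The PSD range argument then forces each $\vec w_C\in\mathrm{span}(\vec a)$, and Prop.~\ref{prop:gcdmultiple} gives $\vec w_C\in\mathbb Z\,\vec a$, yielding $a_\mu\le n_\mu$ directly. Both proofs ultimately rest on the same ingredients (integrality, the $n_\mu$ bound, Prop.~\ref{prop:gcdmultiple}), but your class decomposition makes the mechanism transparent and in fact anticipates the $C\mathfrak Q C^{T}$ machinery the paper develops only later (Eq.~\ref{eq:qfiarbitrarymain}) for arbitrary pure states; conversely, the paper's line is shorter once one accepts the integrality of $\lambda_1$.

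One remark: as you implicitly noticed, the hypothesis ``$\vec n\prec\vec a$ or $\vec n\not\preceq\vec a$'' is used in the paper's own proof exactly as ``$\exists\,\mu:n_\mu<a_\mu$'', i.e.\ $\vec a\not\preceq\vec n$, and that is what your argument (and theirs) actually establishes. Your parenthetical about the stabilizer state surviving the basis change is also in line with the paper's caveat after Thm.~\ref{thm:oneprivacystab} that the stabilizer statements presuppose Pauli generators, so the local unitary of Corol.~\ref{thm:unitaryequivalence} is Clifford.
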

Note we have only considered the amount of qubits per node, and not the more general qubits distribution using the partition notation $\boldsymbol{\mathcal{N}}$. We follow with the proof:
\begin{proof}
Privacy, $\mathcal{P}=1$, is equivalent to having a QFI matrix given by $\boldsymbol{\mathcal{Q}} = \lambda_1 \cdot \vec{a}\vec{a}^T$
where we can observe a rank one matrix in the direction of $\vec{a}$.
Given that the number of operators $G_j$ for each node $\mu$ is an integer number bounded by the amount of qubits in that same node $n_\mu$, the maximum of $\boldsymbol{\mathcal{Q}}_{\mu\nu}$ is given by $n_\mu n_\nu$. From the conditions stated one can say that $\exists \mu: n_\mu < a_\mu$. However, by construction $a_\mu$ has no common divisors but 1 with all the other $a_\nu$. Since Prop.~\ref{prop:gcdmultiple} only allows $\lambda_1$ to be integer, one could never achieve $\vec{a}\vec{a}^T$. For this reason, whichever basis ends up maximizing the privacy of parameter $\vec{a}\cdot \vec{\thetab}$ has a projection onto $\vec{a}$ plus a non-null projection onto at least another linearly independent vector, making it not completely private.
\end{proof}

This already states we require a minimum amount of resources per node to achieve privacy. Next we prove that the first private state requires at least $\vec{n}=\vec{a}$ amount of resources:

\begin{theorem}[Minimal Privacy Zone for Stabilizer States]
Let $\vec{a}, \vec{n} \in \mathbb{Z}^k$, $gcd(\vec{a})=1$ and $f(\thetab)$ be the linear function given by $f(\thetab) = \vec{a}\cdot \vec{\thetab}$. Then, there is only one stabilizer state, up to LU operations, capable of estimating $f$ privately with $\vec{n} =  \vec{a}$ distributed resources and access to local separable dynamics $U_{\thetab}$.
\label{thm:oneprivacystab}
\end{theorem}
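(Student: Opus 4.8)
The plan is to show that the unique private stabiliser state is locally equivalent to the GHZ state $\ket{GHZ_n}$ on $n=\norm{\vec a}_1$ qubits, with node $\mu$ holding $a_\mu$ of them. By Corollary~\ref{thm:unitaryequivalence} I first fix the separable encoding to $\boldsymbol{G}_\mu=\sum_{j\in\mathcal N_\mu}Z_j$ (any other separable choice is obtained from this one by a local unitary that carries the private state along — this is the sense of ``up to LU'' here). For a stabiliser state $\ket\psi$ with group $\boldsymbol S$, Eq.~\eqref{eq:qfistab} shows $\mathcal Q$ depends only on the $Z$-type elements of $\boldsymbol S$, so I encode their supports as a subspace $L\subseteq\mathbb F_2^n$, set $F=\{j:e_j\in L\}$ (the ``fixed'' qubits, with $\braket{\psi|Z_j|\psi}=\pm1$, contributing nothing), and on $\bar F=\mathcal N\setminus F$ define $j\approx k \iff e_j+e_k\in L$. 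Linearity of $L$ makes this an equivalence relation, so $\bar F$ splits into blocks $B_1,\dots,B_m$; writing $\epsilon_{jk}=\braket{\psi|Z_jZ_k|\psi}\in\{\pm1\}$ for $j\approx k$ (a consistent cocycle, $\epsilon_{jk}\epsilon_{kl}=\epsilon_{jl}$), one obtains
\[
	\mathcal Q = 4\sum_{r=1}^m \vec u_r\,\vec u_r^{\,T}, \qquad (\vec u_r)_\mu=\!\!\sum_{j\in B_r\cap\mathcal N_\mu}\!\!\epsilon_{j_r j}\ \in\mathbb Z, \qquad |(\vec u_r)_\mu|\le|B_r\cap\mathcal N_\mu|\le n_\mu=a_\mu,
\]
for any chosen reference $j_r\in B_r$.

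Next I bring in privacy. $\mathcal P(\mathcal Q,\vec a)=1$ forces $\mathcal Q=\lambda\,\vec a\vec a^{\,T}$ with $\lambda>0$ (if $\lambda=0$ then $\mathcal Q=0$ and $f$ is not estimable, contradicting the definition of privacy). Evaluating $\mathcal Q$ on any $\vec w\perp\vec a$ gives $\sum_r(\vec u_r\cdot\vec w)^2=0$, so every $\vec u_r\parallel\vec a$; since $\gcd(\vec a)=1$ and $\vec u_r\in\mathbb Z^k$, Proposition~\ref{prop:gcdmultiple} yields $\vec u_r=c_r\vec a$ with $c_r\in\mathbb Z$. Then $|c_r|\,a_\mu=|(\vec u_r)_\mu|\le a_\mu$ gives $|c_r|\le1$, and at least one $c_r\ne0$. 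For such a block equality must hold throughout: $|B_r\cap\mathcal N_\mu|=a_\mu$ for every $\mu$ (so $B_r=\mathcal N$) and all the $\epsilon_{j_r j}$ must coincide; since $\epsilon_{j_r j_r}=1$ this common sign is $+1$, hence $c_r=1$. Disjointness of the blocks then forces exactly one nonempty block $B_1=\mathcal N$, hence $F=\emptyset$, and all pair correlations $\braket{\psi|Z_jZ_k|\psi}=+1$.

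It remains to identify the state. With $F=\emptyset$ and a single block, $L$ contains every $e_j+e_k$ and no $e_j$, so $L$ is exactly the even-weight subspace $E$ of $\mathbb F_2^n$ (dimension $n-1$; otherwise $\dim L=n$ and $L=\mathbb F_2^n\ni e_1$). Hence the $Z$-type part of $\boldsymbol S$ is $\{\prod_{j\in T}Z_j:|T|\text{ even}\}$ (all signs $+$, by the previous step), generated by $Z_1Z_2,Z_2Z_3,\dots,Z_{n-1}Z_n$. The remaining generator $g\in\boldsymbol S$ is not $Z$-type; commuting with every even-weight $Z$-string forces, testing on pairs $\{i,j\}$, its $X$-part to be the all-ones string, so $g$ equals $\pm1$ times a Pauli carrying an $X$ or a $Y$ on each qubit. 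Applying single-qubit Cliffords that send $Y\mapsto X$ and fix every $Z$-string, and then one single-qubit $Z$ to correct the overall sign, turns $g$ into $X^{\otimes n}$ without altering the $Z$-type subgroup. Thus, up to LU, $\boldsymbol S=\langle Z_1Z_2,\dots,Z_{n-1}Z_n,X^{\otimes n}\rangle$, the stabiliser group of $\ket{GHZ_n}$; permutation-invariance of the GHZ state means that $\vec n=\vec a$ is just the assignment of $a_\mu$ of its qubits to node $\mu$. For existence, $\ket{GHZ_n}$ under these dynamics has $\braket{Z_j}=0$ and $\braket{Z_jZ_k}=1$, so $\mathcal Q=4\,\vec a\vec a^{\,T}$, giving $\mathcal P=1$ with exactly $\vec n=\vec a$ qubits.

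The main obstacle I anticipate is controlling signs in two places: showing the $\pm1$ correlations $\epsilon_{jk}$ organise into the rank-one form above and that privacy forces them all positive, and pinning the non-$Z$-type generator to $X^{\otimes n}$ using only local operations that leave the already-determined $Z$-type subgroup intact. One also has to be careful that ``up to LU'' is meaningful only jointly with the dynamics-fixing of Corollary~\ref{thm:unitaryequivalence}, since a generic local unitary does not preserve privacy relative to a fixed encoding.
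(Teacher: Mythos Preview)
Your proof is correct and follows the same overall strategy as the paper's: constrain the $Z$-type part of the stabiliser group from the privacy condition on $\mathcal{Q}$, then pin down the remaining generator by commutation, and finish with a local-Clifford reduction to $\ket{GHZ_n}$. The paper's argument is terser --- it directly posits the stabiliser table (Eq.~\eqref{eq:tablegenerators1}) and argues that adding a single $G_j$ or removing any $G_jG_k$ breaks privacy --- whereas you parametrise an arbitrary stabiliser state through the subspace $L\subseteq\mathbb F_2^n$ of $Z$-supports, decompose the non-fixed qubits into equivalence blocks, and write $\mathcal{Q}=4\sum_r\vec u_r\vec u_r^{\,T}$ so that privacy together with Proposition~\ref{prop:gcdmultiple} forces a single block with all correlations $+1$. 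Your route is more systematic and, in particular, tracks the signs $\epsilon_{jk}=\braket{\psi|Z_jZ_k|\psi}\in\{\pm1\}$ explicitly via the cocycle identity (the paper's proof tacitly assumes $+Z_jZ_k$ rather than $-Z_jZ_k$ is the stabiliser), at the price of a longer setup; both proofs then identify the non-$Z$-type generator the same way, from commutation with all even-weight $Z$-strings, and reduce it to $X^{\otimes n}$ by single-qubit Cliffords that fix the $Z$-type subgroup.
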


\begin{proof}
Take Corol.~\ref{thm:unitaryequivalence} to deal with LU and use $G_j,G_j^\perp$ as the non-commuting dual of generators for each qubit. Given that $\vec{n}=\vec{a}$, then the only possibility to be private is if every $G_j G_k \in \boldsymbol{\mathcal{S}}$, as one can only find integer multiples of $\vec{a}$ when $gcd(\vec{a})=1$ (Prop.~\ref{prop:gcdmultiple}). 
Consider the table of stabilizers given by:
\begin{equation}
	\mathcal{S} =
    \left(\begin{tabu}{c}
        \boldsymbol{G} \\ 
        \boldsymbol{G}^\perp
    \end{tabu}\right) = \left(
    \begin{tabu}{ccccc}
    G_1 & \mathbb{1} & \mathbb{1} & \cdots & G_n \\
    \mathbb{1} & G_2 & \mathbb{1}  & \cdots & G_n \\
    \vdots & \vdots & & \ddots & \vdots \\
    \mathbb{1} & \mathbb{1} & \mathbb{1} & G_{n-1} & G_n \\ \hline
    G_1^\perp & G_2^\perp & G_3^\perp & \cdots & G_n^\perp \\
    \end{tabu} \right).
\label{eq:tablegenerators1}
\end{equation}
Note the choice of $\boldsymbol{G}$ is not unique, but if generated via Eq.~\ref{eq:set1} we cannot use a complete basis (see Prop.~\ref{prop:privatebasis}). By enforcing condition 2, one may arrive at $\boldsymbol{G}$, given that all $G_j G_k \in \langle\boldsymbol{G}\rangle$ and no $G_j \in \langle\boldsymbol{G}\rangle$. If one would add $G_j \notin \boldsymbol{G}$, then $\exists G_j G_k \in \langle \boldsymbol{G} \rangle: G_j G_k \circ G_j = G_k \ \forall \ k$ which means a state with zero information. If one would remove a stabilizer from $\boldsymbol{G}$, then $\exists G_p: G_p G_q \notin \boldsymbol{\mathcal{S}} \  \forall \ q$, meaning a non private state. The only option is then to add a new stabilizer that commutes with every other in $\boldsymbol{G}$ and is linearly independent of them. This can only happen by choosing a different orthogonal generator and every element of the generator must be the same, hence: $\boldsymbol{G}^\perp = a G_1^\perp G_2^\perp \cdots G_n^\perp$, $a \in \{\pm 1,\pm i\}$. Up to unitary equivalence (see Corol.~\ref{thm:unitaryequivalence}), this defines one state (up to LU) which is able to remain private.
\end{proof}

Note that this state is LU equivalent to the GHZ state. Additionally, our statements are only valid if our stabilizers and generators $G_j,G^\perp_j$ belong to the Pauli strings, as it is required by the stabilizer definition. With the help of Corol.~\ref{thm:unitaryequivalence} one can always work with $G_j = Z, G_j^\perp = X$ and then up to LU, not necessarily still a stabilizer state, the statements above are true.

\subsection{Privacy for Arbitrary States with Separable Hamiltonians}

To build an arbitrary state, one needs an orthonormal basis of quantum states. The choice of basis is not fixed, which allows to choose whichever preferred basis. If one chooses the basis that contains a state that is LU equivalent to a GHZ state, the calculations become rather straightforward. For this, one can pick the generator set in Eq.~\ref{eq:tablegenerators1} as a representation of our state which is LU equivalent to a GHZ state. Since we are no longer working with just stabilizer states, let $G_j, G^\perp_j$ have arbitrary definitions $G_j = \vec{a}\cdot\vec{\sigma}, G^\perp_j = \vec{b}\cdot\vec{\sigma}$ such that $\norm{\vec{a}} = \norm{\vec{b}} = 1$ and $\vec{a}\cdot \vec{b} = 0$. In particular, one can and should choose the generators of the encoding dynamics, meaning that the $G_j$s should be local generators in Eq.~\ref{eq:dynamics}. Call this state, represented by the stabilizer table in Eq.~\ref{eq:tablegenerators1}, $ \ket{\mathcal{G}_1}$.

Then, one can think of a generator set of a state as a $n$ subspace of $\mathbb{F}_2^{2n}$ with the help of the symplectic notation introduced in Eq.~\ref{eq:sympletic}. This means there still exists another subspace of dimension $n$, orthogonal to the first. One can use this orthogonal subspace to generate the set of orthogonal states, and the orthogonality conditions will appear naturally. In a more concrete way, take the following generator set:

\begin{equation}
    \mathcal{S}^\perp = \left(
    \begin{tabu}{ccccc}
    G_1^\perp & \mathbb{1} & \cdots & \mathbb{1} & \mathbb{1} \\
    \mathbb{1} & G_2^\perp & \cdots & \mathbb{1} & \mathbb{1} \\
    \vdots & \vdots & \ddots & \vdots & \vdots \\
    \mathbb{1} & \mathbb{1} & \cdots & G_{n-1}^\perp & \mathbb{1} \\
    \mathbb{1} & \mathbb{1} & \cdots & \mathbb{1} & G_{n} 
    \end{tabu}
    \right).
\end{equation}

This means one can describe an orthornormal basis as $\{\ket{\mathcal{G}_j} = \tilde{s}_j \ket{\mathcal{G}_1}\}_{\tilde{s}_j \in \langle \mathcal{S}^\perp \rangle} $ with $2^n$ elements, as $|\langle \mathcal{S}^\perp \rangle| = 2^n$. To prove orthonormality take:

\begin{equation}
\begin{aligned}
	 \braket{\mathcal{G}_i |\mathcal{G}_j } &=  \bra{\mathcal{G}_1} \tilde{s}_i  \tilde{s}_j \ket{\mathcal{G}_1} = \Tr [ \tilde{s}_i  \tilde{s}_j \ket{\mathcal{G}_1} \bra{\mathcal{G}_1} ] = \frac{1}{2^N} \sum_{s_k \in \langle \mathcal{S} \rangle} \Tr [ \tilde{s}_i  \tilde{s}_j s_k ] = \delta_{ij} .
\end{aligned}
\end{equation}
We now decompose an arbitrary state $|\psi\rangle$ over our resources in this basis 
\begin{equation}
	\ket{\psi} = \sum_j \alpha_j \ket{\mathcal{G}_j}.   
\label{eq:arbitrarystate}
\end{equation}

Using this, we can find a simple expression for the QFI matrix (see derivation over Appendix~\ref{appendix:derivationQFI}) using the notation introduced for our distributed qubits, namely using the partition $\boldsymbol{\mathcal{N}}$ with correspondent vector of resources given by $\vec{n}$: 
\begin{equation}
\begin{aligned}
	\boldsymbol{\mathcal{Q}} (\rho_{\thetab}) &= C \boldsymbol{\mathfrak{Q}} C^T , \quad \boldsymbol{\mathfrak{Q}} &= \Lambda - \vec{v}\vec{v}^T, 
\end{aligned}
\label{eq:qfiarbitrarymain}
\end{equation}
where $C$ is a matrix that can be made such that each line $i$ ranges from $i=0,\dots,2^{n-1}-1$ is given by the corresponding vectorial Hamming-weight of the index $i$, $\vec{h}^*_{\boldsymbol{\mathcal{N}}} (i) = \vec{n}-2\vec{c}_i$, with $\vec{0} \preceq \vec{c}_i \prec \vec{n}$. The matrix $\boldsymbol{\mathfrak{Q}}$ is a rank-1 correction of a diagonal matrix, namely $\Lambda = diag(\lambda_0,...,\lambda_j,...,\lambda_{n})$ and $\vec{v} = (v_0, ..., v_j, ..., v_{n})$, such that $\lambda_j \geq |v_j|$ for all $j$. The structure of this decomposition is exactly what allows us to prove the privacy statements we will make. We will be able to establish bounds on the rank of the QFI matrix using this fact, together with a few additional theorems and propositions we detail in Appendix~\ref{appendix:matrixrank}. 

As we have seen before, the functions at disposition are limited by the amount of resources used, so to provide the limits, we have defined privacy in terms of resources. Let us make a statement for each possible zone introduced in the Resources section (Sec.~\ref{sec:resources}). Note that in the previous section we had limited ourselves to states described by stabilizers. In this section we perform equivalent statements on privacy, but for arbitrary pure states, not necessarily limited to stabilizers. The first result is that we still cannot find a private state in the \textit{No privacy zone}:

\begin{theorem}[No privacy zone]
Let $\vec{a}, \vec{n} \in \mathbb{Z}^k$, $n = \norm{\vec{n}}_1$, $gcd(\vec{a})=1$ and $f(\thetab)$ be the linear function given by $f(\thetab) = \vec{a}\cdot \vec{\thetab}$. Then, there is no pure state capable of estimating $f$ privately with $\vec{n}$ distributed resources and access to local separable dynamics $U_{\thetab}$, such that $\vec{n} \prec \vec{a}$ or $\vec{n} \not\preceq \vec{a}$.
\begin{equation*}
	\nexists \ket{\psi} \in \mathcal{H}(n): \mathcal{P} (\mathcal{Q} (U_{\thetab}\ket{\psi}),\vec{a}) = 1
\end{equation*}
\label{thm:noprivacyarb}
\end{theorem}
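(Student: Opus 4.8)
The plan is to reduce the arbitrary-pure-state case to the structural QFI decomposition established in Eq.~\eqref{eq:qfiarbitrarymain}, namely $\boldsymbol{\mathcal{Q}} = C\boldsymbol{\mathfrak{Q}}C^T$ with $\boldsymbol{\mathfrak{Q}} = \Lambda - \vec{v}\vec{v}^T$ a rank-$1$ correction of a diagonal matrix whose entries satisfy $\lambda_j \geq |v_j|$, and whose $C$-matrix rows are the symmetrized vectorial Hamming weights $\vec{h}^*_{\boldsymbol{\mathcal{N}}}(i) = \vec{n} - 2\vec{c}_i$ of the basis labels. Privacy $\mathcal{P}=1$ is equivalent to $\boldsymbol{\mathcal{Q}} = \lambda_1 \vec{a}\vec{a}^T$ for some $\lambda_1 > 0$, i.e.\ the QFI must be rank one and point along $\vec{a}$. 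First I would argue that any column of $\boldsymbol{\mathcal{Q}}$ that is nonzero must be proportional to $\vec{a}$, so every row $\vec{h}^*_{\boldsymbol{\mathcal{N}}}(i)$ of $C$ that contributes with nonzero weight in $\boldsymbol{\mathfrak{Q}}$ must, after the pairing induced by $\boldsymbol{\mathfrak{Q}}$, produce vectors lying in $\mathrm{span}(\vec{a})$.

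Next I would make precise which combinations of Hamming-weight rows can appear. Because $\boldsymbol{\mathfrak{Q}} = \Lambda - \vec{v}\vec{v}^T$ with $\lambda_j \geq |v_j|$, the off-diagonal structure couples rows $i$ and $i'$ through $-v_i v_{i'}$, and the diagonal through $\lambda_j - v_j^2 \geq 0$; a nonzero contribution to $\boldsymbol{\mathcal{Q}}$ along a direction other than $\vec{a}$ forces some $(\vec{h}^*_{\boldsymbol{\mathcal{N}}}(i) \pm \vec{h}^*_{\boldsymbol{\mathcal{N}}}(i'))$ or $\vec{h}^*_{\boldsymbol{\mathcal{N}}}(i)$ itself to be a nonzero multiple of $\vec{a}$. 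This is exactly the situation ruled out by Prop.~\ref{thm:gcdorder2}: when $\vec{n} \not\preceq \vec{a}$ (in particular including $\vec{n} \prec \vec{a}$ once one notes $\vec{a}=\vec{m}$ with $gcd(\vec{m})=1$), there exist no distinct $\vec{c}, \vec{c}' \prec \vec{n}$ with $(\vec{n}-2\vec{c}) \pm (\vec{n}-2\vec{c}') = \alpha\vec{a}$ for $\alpha\neq 0$. I would also invoke Prop.~\ref{prop:gcdmultiple} to handle the single-row case $\vec{n}-2\vec{c} = \alpha\vec{a}$: this would force $\alpha \in \mathbb{Z}$, and then a parity/coordinatewise-bound check using $\vec{0}\preceq\vec{c}\prec\vec{n}$ together with $\vec{n}\not\preceq\vec{a}$ rules it out, since in the offending coordinate $\mu$ with $n_\mu < a_\mu$ (or the incomparability coordinate) one cannot match $\alpha a_\mu$. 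Hence the only admissible nonzero contributions are the "aligned" ones, but those alone cannot assemble a rank-one matrix proportional to $\vec{a}\vec{a}^T$ — they either give the zero matrix or necessarily drag in a transverse component.

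The cleanest way to organize the conclusion is: assume for contradiction a pure state $\ket{\psi}$ with $\mathcal{P}=1$ and $\vec{n}\prec\vec{a}$ or $\vec{n}\not\preceq\vec{a}$; write its QFI in the form $C\boldsymbol{\mathfrak{Q}}C^T$; observe that $\boldsymbol{\mathcal{Q}}\vec{e}_\mu$ for each standard basis vector must be a scalar multiple of $\vec{a}$; expand $\boldsymbol{\mathcal{Q}}\vec{e}_\mu = \sum_{i,i'} C_{\mu i}\boldsymbol{\mathfrak{Q}}_{i i'} (C^T)_{i' \cdot}$ and use the rank-$1$-plus-diagonal form to extract that any surviving term contributes a vector of the form $\vec{h}^*_{\boldsymbol{\mathcal{N}}}(i)$ or a $\pm$ combination of two such; apply Props.~\ref{prop:gcdmultiple} and~\ref{thm:gcdorder2} to conclude all such surviving terms must vanish, hence $\boldsymbol{\mathcal{Q}}=0$, contradicting the requirement $\lambda_1>0$ for $\mathcal{P}=1$. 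I expect the main obstacle to be the bookkeeping in the middle step: carefully showing that the rank-$1$ correction $-\vec{v}\vec{v}^T$ does not conspire with the diagonal $\Lambda$ to produce an aligned nonzero matrix through some cancellation — i.e.\ ruling out that a delicate combination of many Hamming-weight rows sums, after the $\boldsymbol{\mathfrak{Q}}$-weighting, to a nonzero multiple of $\vec{a}\vec{a}^T$ without any individual pairwise combination being a multiple of $\vec{a}$. This is where the rank-bounding machinery promised in Appendix~\ref{appendix:matrixrank} (bounding $\rank\boldsymbol{\mathcal{Q}}$ in terms of the number of distinct Hamming weights appearing and the fact that $\boldsymbol{\mathfrak{Q}}\succeq 0$) would do the real work, letting one argue that a rank-one image forces at most essentially one distinct weight vector to be active, which by the resource bound $\vec{n}\not\preceq\vec{a}$ cannot be (a multiple of) $\vec{a}$.
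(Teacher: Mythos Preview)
Your approach is essentially the paper's: the decomposition $\boldsymbol{\mathcal{Q}} = C\boldsymbol{\mathfrak{Q}}C^T$, the gcd propositions (Props.~\ref{prop:gcdmultiple}, \ref{thm:gcdorder}, \ref{thm:gcdorder2}), and the rank machinery of Appendix~\ref{appendix:matrixrank} are exactly the ingredients used. You also correctly identify the crux --- ruling out that many Hamming-weight rows conspire, after $\boldsymbol{\mathfrak{Q}}$-weighting, to produce $\vec{a}\vec{a}^T$ without any single row or pair being aligned with $\vec{a}$.

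Where your proposal slips is in the last sentence: you expect the rank machinery to show that ``a rank-one image forces at most essentially one distinct weight vector to be active.'' This is not what it gives. By Thm.~\ref{thm:rank} and Props.~\ref{prop:fullrank}--\ref{prop:notfullrank}, $\boldsymbol{\mathfrak{Q}}$ restricted to its support has rank either $\dim\supp\boldsymbol{\mathfrak{Q}}$ or $\dim\supp\boldsymbol{\mathfrak{Q}}-1$; in the first case rank $\boldsymbol{\mathcal{Q}} = \rank C_{\supp\boldsymbol{\mathfrak{Q}}}$, but in the second case rank $\boldsymbol{\mathcal{Q}}$ can drop by one relative to $\rank C_{\supp\boldsymbol{\mathfrak{Q}}}$. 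So a rank-one $\boldsymbol{\mathcal{Q}}$ is still compatible with $\rank C_{\supp\boldsymbol{\mathfrak{Q}}}=2$, i.e.\ \emph{two} distinct weight vectors active. Your column-by-column argument and your single-row Prop.~\ref{prop:gcdmultiple} check do not dispatch this case.

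The paper closes this gap not by reducing to one weight vector but by exploiting the \emph{explicit} null space of $\boldsymbol{\mathfrak{Q}}$ in the rank-deficient case: Thm.~\ref{thm:rank} says this happens iff $\lambda_j=|v_j|$ for all $j$, and then the null space is $D\vec{1}$ for a diagonal sign matrix $D$. The image of $\boldsymbol{\mathcal{Q}}$ is therefore spanned by $C_{\supp\boldsymbol{\mathfrak{Q}}}^T D\vec{w}$ with $\vec{w}\perp\vec{1}$, and taking $\vec{w}=\vec{e}_i-\vec{e}_j$ produces exactly the pairwise combinations $\vec{h}^*_{\boldsymbol{\mathcal{N}}}(i)\pm\vec{h}^*_{\boldsymbol{\mathcal{N}}}(j)$ to which Props.~\ref{thm:gcdorder} and \ref{thm:gcdorder2} apply. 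So the pairwise gcd argument you invoke is correct, but it enters only \emph{after} the rank case analysis isolates the two-row, rank-deficient-$\boldsymbol{\mathfrak{Q}}$ scenario --- not as a term-by-term check on the column expansion $\boldsymbol{\mathcal{Q}}\vec{e}_\mu$.
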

\begin{proof}
The privacy is only zero if $\mathcal{Q} (U_{\thetab}\ket{\psi}) = \alpha \vec{a}\vec{a}^T \equiv \mathcal{Q}$. From Eq.~\ref{eq:qfiarbitrarymain}, take the support of $\mathfrak{Q}$, $\supp\mathfrak{Q}$ as the sub-matrix with the lines and columns such that $\lambda_j\neq 0$. Moreover, let $C_{\supp\mathfrak{Q}}$ be the correspondent sub-matrix of $C$ in the $\supp\mathfrak{Q}$, $i.e.$ the lines multiplying by the non-zero values of $\mathfrak{Q}$. Note that $\vec{a} \notin C$, and, in particular, $\vec{a} \notin C_{\supp\mathfrak{Q}}$. This is a consequence of $\vec{n} \prec \vec{a}$ or $\vec{n} \not\preceq \vec{a}$. Look at the following table covering all possible combinations of outcomes:

\begin{table}[H]
\centering
\begin{tabular}{c|c|c|c|c}
	rank $\supp\mathfrak{Q}$ & rank $C_{\supp\mathfrak{Q}}$ & rank $C\mathfrak{Q}C^T$ & Reasoning & Privacy  \\ \hline
	any & 1 & $\leq 1$ & $\vec{a} \notin C_{\supp(\mathfrak{Q})}$ & Never \\
	dim $\supp\mathfrak{Q}$ & $\geq 2$ & $\geq 2$ & Prop. \ref{prop:fullrank} & Never  \\
	dim $\supp\mathfrak{Q}$-1 & 2 & 1 or 2 & Prop. \ref{prop:notfullrank} & Check \\
	dim $\supp\mathfrak{Q}$-1 & $>2$ & $>1$ & Prop. \ref{prop:notfullrank} & Never  
\end{tabular}
\end{table}

The only case to check is that of rank $\supp\mathfrak{Q}$ = dim $\supp\mathfrak{Q}-1$ with rank $C_{\supp\mathfrak{Q}}$ = 2. According to Thm. \ref{thm:rank}, $\mathfrak{Q} = diag(\vec{\lambda}) - \vec{\lambda}\vec{\lambda}^T$, up to a matrix made up of $\pm 1$ in the diagonal, call it $D$. That means the vector $D\vec{1}$ spans the null space, therefore the positive eigenvectors of $\mathfrak{Q}$ are spanned by the orthogonal subspace of $D\vec{1}$, call it $P^\perp$. So, take the set of vectors $W = \{\vec{w} | \vec{w} = 1/\sqrt{2}(\vec{e}_i - \vec{e}_j), i\neq j \}$. Notice that the $\mathsf{span} (D W) = \mathsf{span} (P^T)$ as $\average{D\vec{w};D\vec{1}}=0 \ \forall w\in W$. Using Prop.~\ref{prop:vectorspan} we can analyse then the vectors in $DW$, which are simpler to check:

\begin{equation}
\begin{aligned}
	D\vec{w} C_{\supp \mathfrak{Q}} &= 1/\sqrt{2} \left( \vec{h}_{\boldsymbol{\mathcal{N}}}^* (i) \pm \vec{h}_{\boldsymbol{\mathcal{N}}}^* (j) \right) 
    =\begin{cases}
	2/\sqrt{2} \left[ \vec{n} - \left(\vec{h}_{\boldsymbol{\mathcal{N}}} (i) + \vec{h}_{\boldsymbol{\mathcal{N}}} (j) \right) \right] \\
	2/\sqrt{2} \left(\vec{h}_{\boldsymbol{\mathcal{N}}} (i) - \vec{h}_{\boldsymbol{\mathcal{N}}} (j) \right) 
	\end{cases},
\end{aligned}
\end{equation}
which by Props. \ref{thm:gcdorder} and \ref{thm:gcdorder2} we see it is never proportional to $\vec{a}$, given $gcd(\vec{a})=1$ and $\vec{n} \prec \vec{a}$ or $\vec{n} \not\preceq \vec{a}$. If it happens that it is zero, then no information can be obtained. This concludes the proof by noticing there is no way to construct any state achieving privacy for the function presented.
\end{proof}

In this proof, the tools and approach used provide a good insight on how to deal with the other cases. We present another statement for the \textit{Minimal privacy zone}, for which a similar proof can be found in Appendix \ref{appendix:privatestateproofs}.

\begin{restatable}[Minimal privacy zone]{theorem}{onecopy}
Let $\vec{a}, \vec{n} \in \mathbb{Z}^k$, $gcd(\vec{a})=1$ and $f(\thetab)$ be the linear function given by $f(\thetab) = \vec{a}\cdot \vec{\thetab}$. Then, there is only one family of pure states that is capable of estimating privately the function $f$ using $\vec{n} =  \vec{a}$ distributed resources and access to local separable dynamics $U_{\thetab}$.
\begin{equation*}
\begin{aligned}
	F_{GHZ} = \{ \alpha \ket{0}^{\otimes n} + \beta \ket{1}^{\otimes n}: \alpha,\beta \in \mathbb{C}\setminus \{0\}, |\alpha|^2+|\beta|^2=1 \} \\
	\\
	\mathcal{P} (\mathcal{Q} (U_{\thetab}\ket{\psi}),\vec{a}) = 1 \quad \Longleftrightarrow \quad \exists U\in LU: U\ket{\psi} \in F_{GHZ} 
\end{aligned}
\end{equation*}
\label{thm:oneprivacyarb}
\end{restatable}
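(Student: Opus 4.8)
The plan is to work in the ``GHZ basis'' $\{\ket{\mathcal{G}_j}\}$ set up in the previous subsection, so that an arbitrary pure state is $\ket{\psi}=\sum_j\alpha_j\ket{\mathcal{G}_j}$ and its QFI (after fixing dynamics to $e^{i\theta Z_j}$, with the LU freedom absorbed via Corol.~\ref{thm:unitaryequivalence}) is given by Eq.~\ref{eq:qfiarbitrarymain}, $\boldsymbol{\mathcal{Q}}=C\boldsymbol{\mathfrak{Q}}C^T$ with $\boldsymbol{\mathfrak{Q}}=\Lambda-\vec v\vec v^T$, where the rows of $C$ are the symmetrized vectorial Hamming weights $\vec h^*_{\boldsymbol{\mathcal{N}}}(i)=\vec n-2\vec c_i$ for $\vec 0\preceq\vec c_i\prec\vec n$, and $\lambda_j\ge|v_j|$ with $\lambda_j=|\alpha_{2i}|^2+|\alpha_{2i+1}|^2$-type diagonal entries (the precise form coming from Appendix~\ref{appendix:derivationQFI}). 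Since $\vec n=\vec a$ and $gcd(\vec a)=1$, Prop.~\ref{prop:gcdmultiple} tells us that the only integer multiple of $\vec a$ reachable as a row of $C$ is $\pm\vec a$ itself, attained precisely by the rows $\vec c_i=\vec 0$ and $\vec c_i=\vec n$, i.e.\ the indices corresponding to $\ket{0}^{\otimes n}$ and $\ket{1}^{\otimes n}$.

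First I would establish the ``only if'' direction. Assume $\mathcal{P}(\mathcal{Q},\vec a)=1$, i.e.\ $\mathcal{Q}=\lambda_1\vec a\vec a^T$ is rank one in the direction $\vec a$. Using the same rank-analysis table as in the proof of Thm.~\ref{thm:noprivacyarb} (Props.~\ref{prop:fullrank}, \ref{prop:notfullrank}, Thm.~\ref{thm:rank}), the only way $C\boldsymbol{\mathfrak{Q}}C^T$ can have rank $1$ \emph{and} that one-dimensional range be $\vec a$ is if the support of $\boldsymbol{\mathfrak{Q}}$ is confined to rows of $C$ proportional to $\vec a$ — any other row contributes a linearly independent direction (here is where Props.~\ref{thm:gcdorder}, \ref{thm:gcdorder2} together with $gcd(\vec a)=1$ rule out ``accidental'' cancellations producing a second $\vec a$-aligned combination). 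By the observation above, the only such rows are those indexed by $\ket{0}^{\otimes n}$ and $\ket{1}^{\otimes n}$, so $\supp\boldsymbol{\mathfrak{Q}}\subseteq\{0,2^{n-1}-1\}$; translating back through the definition of $\Lambda$ and $\vec v$ in terms of the $\alpha_j$ forces all amplitudes except those on $\ket{0}^{\otimes n}$ and $\ket{1}^{\otimes n}$ to vanish, and moreover forces \emph{both} of these to be nonzero (if one vanished, $\boldsymbol{\mathfrak{Q}}=0$ and no information is available, so $\mathcal{P}$ is undefined/not equal to $1$). Hence $U\ket{\psi}\in F_{GHZ}$ for the LU that undid the dynamics normalization.

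For the ``if'' direction I would simply take $\ket{\psi}=\alpha\ket{0}^{\otimes n}+\beta\ket{1}^{\otimes n}$ with $\alpha,\beta\neq 0$, compute $\boldsymbol{\mathcal{Q}}$ directly from Eq.~\ref{eq:qfipuremulti} (or specialize Eq.~\ref{eq:qfiarbitrarymain}): with $\boldsymbol{G}_\mu=\sum_{j\in\mu}Z_j$ one gets $\braket{\psi|\boldsymbol{G}_\mu|\psi}=(|\alpha|^2-|\beta|^2)a_\mu$ and $\braket{\psi|\boldsymbol{G}_\mu\boldsymbol{G}_\nu|\psi}=a_\mu a_\nu$, so $\boldsymbol{\mathcal{Q}}_{\mu\nu}=4\big(1-(|\alpha|^2-|\beta|^2)^2\big)a_\mu a_\nu=16|\alpha|^2|\beta|^2\,a_\mu a_\nu$, i.e.\ $\boldsymbol{\mathcal{Q}}=16|\alpha|^2|\beta|^2\,\vec a\vec a^T$, which is rank one in the direction $\vec a$ and strictly positive since $\alpha,\beta\neq 0$; therefore $\mathcal{P}=1$. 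Finally, LU-invariance of $\mathcal{P}$ (property 4 of the privacy measure, combined with Corol.~\ref{thm:unitaryequivalence}) extends this to the whole LU-orbit of $F_{GHZ}$, matching the statement. I expect the main obstacle to be the ``only if'' direction: carefully pinning down, via the rank lemmas of Appendix~\ref{appendix:matrixrank} and the number-theoretic Props.~\ref{thm:gcdorder}--\ref{thm:gcdorder2}, that \emph{no} superposition involving any $\ket{\mathcal{G}_j}$ outside $\{\ket{0}^{\otimes n},\ket{1}^{\otimes n}\}$ can have its QFI collapse exactly onto $\vec a\vec a^T$ — this is the step that genuinely uses $\vec n=\vec a$ rather than $\vec n\succeq\vec a$, and it requires translating the ``no second $\vec a$-aligned direction in the row span of $C$ restricted to $\supp\boldsymbol{\mathfrak{Q}}$'' condition into a clean statement about which $\alpha_j$ must vanish.
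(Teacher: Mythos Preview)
Your proposal is essentially the same approach as the paper's proof: decompose $\boldsymbol{\mathcal{Q}}=C\boldsymbol{\mathfrak{Q}}C^T$, run the rank-analysis table from Thm.~\ref{thm:noprivacyarb}, and use Props.~\ref{thm:gcdorder}--\ref{thm:gcdorder2} to kill the one nontrivial case (rank $\supp\mathfrak{Q}=\dim\supp\mathfrak{Q}-1$, rank $C_{\supp\mathfrak{Q}}=2$); the ``if'' direction is the direct computation you wrote, which matches the paper.

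One small correction on the indexing: in the basis $\{\ket{\mathcal{G}_m^\pm}\}$ of Appendix~\ref{appendix:derivationQFI}, the rows of $C$ are indexed by $m\in\{0,\dots,2^{n-1}-1\}$ with $\vec c_m\prec\vec n$ strictly, so there is exactly \emph{one} row equal to $\vec a$ (namely $m=0$, $\vec c_0=\vec 0$), not two. Both $\ket{0}^{\otimes n}$ and $\ket{1}^{\otimes n}$ live in that single $m=0$ block via $\alpha_0^\pm$, and the conclusion ``$\supp\boldsymbol{\mathfrak{Q}}\subseteq\{0\}$'' is what you want. The paper's case check then splits into (i) one of the two rows in $C_{\supp\mathfrak{Q}}$ is the GHZ row $m=0$, which forces the other to coincide with it (contradicting rank $2$), and (ii) neither is, which reduces to the no-privacy-zone argument verbatim. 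Your sketch already names the right tools for this step; just be aware the ``two rows $\vec c_i=\vec 0$ and $\vec c_i=\vec n$'' picture is not how the basis is set up.
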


Thms.~\ref{thm:noprivacyarb} and \ref{thm:oneprivacyarb} allow us to define the minimum amount of distributed resources so we can have a private state for a target linear function $f(\vec{\thetab}) = \vec{a}\cdot\vec{\thetab}$. This minimum is provided by the vector $\vec{a}$, under the separable Hamiltonians and integer target functions assumptions. 

Finally, one can also wonder what happens if we add more qubits than we need, but not sufficient enough, for example, to build two copies of a GHZ state with $\vec{a}$ resources. In order to do so, let us first provide an example of a simple transformation that preserves privacy:

\begin{example}
Suppose we have a partition for the distributed qubits given by $\boldsymbol{\mathcal{N}}$ with corresponding vector of resources $\vec{n}=(2,1,2)$ and we are trying to estimate a function $\vec{a}= (1,1,1) \prec \vec{n}$. If one uses dynamics generated by $Z$ operators, one can simply take a GHZ state and add ancillas in a separable way:
\begin{equation}
\begin{aligned}
	\ket{\psi} &= \frac{1}{\sqrt{2}} \left( 
	\ket{ \overbrace{\underbrace{0}_{a_1} \underbrace{0}_{b_1}}^{n_1} \overbrace{\underbrace{0}_{a_2}}^{n_2} \overbrace{\underbrace{0}_{a_3} \underbrace{0}_{b_3}}^{n_3} } + 
	\ket{ \overbrace{\underbrace{1}_{a_1} \underbrace{0}_{b_1}}^{n_1} \overbrace{\underbrace{1}_{a_2}}^{n_2} \overbrace{\underbrace{1}_{a_3} \underbrace{0}_{b_3}}^{n_3} }
	 \right), \\
	 &=  \frac{1}{\sqrt{2}} \left( 
	\ket{0_{a_1} 0_{a_2} 0_{a_3}} + \ket{1_{a_1} 1_{a_2} 1_{a_3}} \right) \ket{0_{b_1} 0_{b_3}}.
\end{aligned}
\end{equation}
This state is still private. And more so, it is still private if one chooses any state for the ancillas over the possible bit strings for the qubits $\vec{b} = \vec{n}-\vec{a}$, meaning $(0_{b_1}0_{b_3}), (0_{b_1}1_{b_3}), (1_{b_1}0_{b_3}),$ $ (1_{b_1}1_{b_3})$. Additionally, if one does any superposition of local permutations in $S_{\boldsymbol{\mathcal{N}}}$ over the qubits, this action preserves the vectorial Hamming-weight, meaning it will also preserve the privacy.
\end{example}

Taking this example, let us formalize everything by introducing some families of pure states that emerge as a natural consequence of the equivalence of the vectorial Hamming weight under local permutations of qubits:

\begin{definition}[Distributed $s$-States]\label{def:distdickestates}
Let $\boldsymbol{\mathcal{N}} = \{ \mathcal{N}_\mu \}_{\mu = 1,\dots,k}$ be a partition of the set of qubits and $s \in \mathbb{F}^n_2$ be a bit string. We call a quantum state $\ket{\psi}$ a distributed $s$-state if it can be represented by:
\begin{equation*}
\begin{aligned}
	\ket{\psi} = \sum_{r \in [s]_{\boldsymbol{\mathcal{N}}} } \alpha_r \ket{r} ,
\end{aligned}
\end{equation*}
where $\alpha_r \in \mathbb{C}$, such that $\sum_r |\alpha_r|^2 =1$ for normalization purposes. Let $\vec{s} = \vec{h}_{\boldsymbol{\mathcal{N}}} (s)$, we denote this family of states by $\mathcal{D}(\boldsymbol{\mathcal{N}},\vec{s})$.
\end{definition}

Note that the bit strings that belong to the equivalence class $[s]_{\boldsymbol{\mathcal{N}}}$ are exactly the ones for which the vectorial Hamming weight $h_{\boldsymbol{\mathcal{N}}}$ is the same, as they all differ by a permutation in $S_{\boldsymbol{\mathcal{N}}}$ from each other. In particular, if it was the case that the partitions have size 1 ($k=1$) and $\alpha_r$ is the same for every $r$, one would recover the regular Dicke state with $|\mathcal{N}_1| = n$ qubits and $h(s) = m$ excitations ($\ket{D_{n}^{m}}$).

This allows us to define a countable number of families that achieves privacy at estimating a linear function:

\begin{definition}[$\vec{a}$-Private-Ancilla States]\label{def:privatefamily}
Let $\boldsymbol{\mathcal{N}} = \{ \mathcal{N}_\mu \}_{\mu = 1,\dots,k}$ correspond to the partition of a set of distributed qubits, with associated resource vector $\vec{n} \succeq \vec{a}$ necessarily. 
\begin{multline}
	\mathcal{F}(\boldsymbol{\mathcal{N}},\vec{a},\vec{d}) = \Big\{ \alpha \ket{\psi} + \beta \ket{\tilde{\psi}},\ket{\psi} \in \mathcal{D}(\boldsymbol{\mathcal{N}},\vec{d}), \ket{\tilde{\psi}} \in \mathcal{D}(\boldsymbol{\mathcal{N}}, \vec{a}+\vec{d}),
	\\
    \alpha,\beta \in \mathbb{C}\setminus \{0\}, |\alpha|^2+|\beta|^2=1  \Big\} ,
\end{multline}
where $\vec{d}$ can be such that $\vec{d} \in \mathbb{N}_0^k, \vec{0} \preceq \vec{d} \preceq \vec{n}-\vec{a}$. Note that for a fixed $\vec{a}$ there will be multiple $\vec{d}$ available, depending on the number of qubits. Each of them defines one family of private states.
\end{definition}
 
To provide some intuition, let us provide all the private families for a given set of resources and target function:
\begin{example}
Let $\vec{n}= (2,2,5)$ and $\vec{a}=(1,2,3)$. This implies we have additionally in each node $\vec{b}=(1,0,2)$ distributed qubits. However, we can attribute to the ancilla qubits different strings $s$ such that $\vec{h}_{\boldsymbol{\mathcal{N}}} (s) = \vec{b}_1$ defining different families of private states. One way to see this is considering that $\vec{b}_1$ represents the amount of 1's we put in our ancilla qubits. We can additionally define a vector $\vec{b}_0$ as the amount of 0's in the ancilla qubits. Naturally $\vec{b}_0 + \vec{b}_1 = \vec{b}$. The total amount of 6 families are given by:
\begin{gather*}
	 s_1 = 0_{a_1}  0_{b_1} \ 00_{a_2} \ 000_{a_3} 00_{b_3}, \quad \tilde{s}_1 = 1_{a_1}  0_{b_1} \ 11_{a_2} \ 111_{a_3} 00_{b_3},\quad  \vec{d}_1 = (0, 0,0) \\
	 s_2 = 0_{a_1}  0_{b_1} \ 00_{a_2} \ 000_{a_3} 10_{b_3}, \quad \tilde{s}_2 = 1_{a_1}  0_{b_1} \ 11_{a_2} \ 111_{a_3} 10_{b_3},\quad \vec{d}_2 = (0, 0,1)\\
	 s_3 = 0_{a_1}  0_{b_1} \ 00_{a_2} \ 000_{a_3} 11_{b_3}, \quad \tilde{s}_3 = 1_{a_1}  0_{b_1} \ 11_{a_2} \ 111_{a_3} 11_{b_3},\quad \vec{d}_3 = (0, 0,2)\\
	 s_4 = 0_{a_1}  1_{b_1} \ 00_{a_2} \ 000_{a_3} 00_{b_3}, \quad \tilde{s}_4 = 1_{a_1}  1_{b_1} \ 11_{a_2} \ 111_{a_3} 00_{b_3},\quad \vec{d}_4 = (1, 0,0) \\
	 s_5 = 0_{a_1}  1_{b_1} \ 00_{a_2} \ 000_{a_3} 10_{b_3}, \quad \tilde{s}_5 = 1_{a_1}  1_{b_1} \ 11_{a_2} \ 111_{a_3} 10_{b_3}, \quad\vec{d}_5 = (1, 0,1)\\
	 s_6 = 0_{a_1}  1_{b_1} \ 00_{a_2} \ 000_{a_3} 11_{b_3}, \quad \tilde{s}_6 = 1_{a_1}  1_{b_1} \ 11_{a_2} \ 111_{a_3} 11_{b_3}, \quad\vec{d}_6 = (1, 0,2)\\
 \end{gather*}
\begin{gather*}
	 \mathcal{F}(\boldsymbol{\mathcal{N}},\vec{a},\vec{d}_i) = \Big\{  \alpha \sum_{r \in [s_i]_{\boldsymbol{\mathcal{N}}} } \alpha_r \ket{r} + \beta \sum_{\tilde{r} \in [\tilde{s}_i]_{\boldsymbol{\mathcal{N}}} } \alpha_{\tilde{r}} \ket{\tilde{r}} , \alpha,\beta \in \mathbb{C}\setminus \{0\}, |\alpha|^2+|\beta|^2=1  \Big\} .
\end{gather*}
Which is equivalent to saying:
\begin{multline*}
	 \mathcal{F}(\boldsymbol{\mathcal{N}},\vec{a},\vec{d}_i) = \Big\{  \alpha \ket{\psi} + \beta \ket{\tilde{\psi}} ,  \ket{\psi}  \in \mathcal{D} ( \boldsymbol{\mathcal{N}},s_i), \ket{\tilde{\psi}}  \in \mathcal{D} ( \boldsymbol{\mathcal{N}},\tilde{s}_i), \\
	 \alpha,\beta \in \mathbb{C}\setminus \{0\}, |\alpha|^2+|\beta|^2=1  \Big\} .
\end{multline*}
\end{example}

The number of existent families of states depends on the number of ancilla qubits in each node and the different ways one can prepare these ancillas. Considering that a node $\mu$ has $n_\mu$ qubits and is trying to estimate the linear function with correspondent $a_\mu$, then one has $b_\mu = n_\mu - a_\mu$ ancilla qubits. This means we have $b_\mu+1$ ways to find bitstrings with different Hamming-weights. The overall number of families is given by $\prod_{\mu \in V} (b_\mu+1)$. By construction, all the states that fit within one of these families are private for the estimation of $f(\thetab) = \vec{a}\cdot \vec{\thetab}$:

\begin{restatable}{proposition}{privatefamily}\label{prop:privatefamily}
All families of pure states $\mathcal{F}(\boldsymbol{\mathcal{N}},\vec{a},\vec{d})$, are capable of estimating privately the function $f(\thetab)=\vec{a} \cdot \vec{\thetab}$ using $\vec{n}=\vec{a}+\vec{b}$ distributed resources.
\end{restatable}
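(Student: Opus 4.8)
The plan is to compute the QFI matrix of an arbitrary member of $\mathcal{F}(\boldsymbol{\mathcal{N}},\vec a,\vec d)$ explicitly and show it equals a strictly positive multiple of $\vec a\vec a^{T}$; by Definition~\ref{def:privacymeasure} this immediately gives $\mathcal{P}=1$, and since the multiple is nonzero the direction $\vec a$ (i.e.\ the function $f$) is genuinely estimated with finite variance, which is exactly what ``estimating $f$ privately'' requires. By Corollary~\ref{thm:unitaryequivalence} it is enough to treat the separable encoding with generators $\boldsymbol{G}_\mu=\sum_{j\in\mathcal{N}_\mu}Z_j$, so that for a pure input $\ket{\Psi}$ the same manipulation that produced Eq.~\ref{eq:qfistab} yields $\boldsymbol{\mathcal{Q}}_{\mu\nu}=4\big(\langle\Psi|\boldsymbol{G}_\mu\boldsymbol{G}_\nu|\Psi\rangle-\langle\Psi|\boldsymbol{G}_\mu|\Psi\rangle\langle\Psi|\boldsymbol{G}_\nu|\Psi\rangle\big)$, i.e.\ four times the covariance matrix of the vector observable $(\boldsymbol{G}_1,\dots,\boldsymbol{G}_k)$ in the state $\ket{\Psi}$.

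The structural step is to notice that a distributed $s$-state $\ket{\psi}\in\mathcal{D}(\boldsymbol{\mathcal{N}},\vec s)$ is a simultaneous eigenvector of all the $\boldsymbol{G}_\mu$: every bit string $r$ in its support lies in $[s]_{\boldsymbol{\mathcal{N}}}$ and hence, by Proposition~\ref{prop:invariancepermS}, carries the same vectorial symmetrized Hamming weight $\vec h^{*}_{\boldsymbol{\mathcal{N}}}(r)=\vec n-2\vec s$; since $\boldsymbol{G}_\mu\ket{r}=(\vec n-2\vec s)_\mu\ket{r}$, the whole state has eigenvalue vector $\vec n-2\vec s$. Therefore a state $\ket{\Psi}=\alpha\ket{\psi}+\beta\ket{\tilde\psi}\in\mathcal{F}(\boldsymbol{\mathcal{N}},\vec a,\vec d)$ is supported on two orthogonal joint eigenspaces of the $\boldsymbol{G}_\mu$, with eigenvalue vectors $\vec u:=\vec n-2\vec d$ and $\vec w:=\vec n-2(\vec a+\vec d)=\vec u-2\vec a$. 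Because each $\boldsymbol{G}_\mu$ is a scalar on each sector and the two sectors are orthogonal, all cross terms drop out and $\langle\boldsymbol{G}_\mu\rangle=|\alpha|^2u_\mu+|\beta|^2w_\mu$, $\langle\boldsymbol{G}_\mu\boldsymbol{G}_\nu\rangle=|\alpha|^2u_\mu u_\nu+|\beta|^2w_\mu w_\nu$. Substituting and using $|\alpha|^2+|\beta|^2=1$, the quadratic in $|\alpha|^2$ collapses to $\boldsymbol{\mathcal{Q}}_{\mu\nu}=4|\alpha|^2|\beta|^2(u_\mu-w_\mu)(u_\nu-w_\nu)=16|\alpha|^2|\beta|^2\,a_\mu a_\nu$, so $\boldsymbol{\mathcal{Q}}=16|\alpha|^2|\beta|^2\,\vec a\vec a^{T}$. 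Feeding this into Definition~\ref{def:privacymeasure} gives $\mathcal{P}(\boldsymbol{\mathcal{Q}},\vec a)=1$, and the prefactor is strictly positive precisely because $\alpha,\beta\neq0$ — which is why Definition~\ref{def:privatefamily} excludes the degenerate endpoints, where $\ket{\Psi}$ would be a single joint eigenstate of the $\boldsymbol{G}_\mu$ and hence carry $\boldsymbol{\mathcal{Q}}=0$ (no information at all).

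I expect no serious obstacle; the computation is short and essentially forced once the eigenspace picture is in place. The two points that deserve care are: \emph{(i)} justifying that the coherences between the two distributed-$s$-state sectors, and within a sector, genuinely vanish — this is immediate from the $\boldsymbol{G}_\mu$ being diagonal in the computational basis together with orthogonality of the sectors, but it should be stated; and \emph{(ii)} making the reduction to $Z$-generators via Corollary~\ref{thm:unitaryequivalence} precise, observing that after the local change of basis the defining property of $\mathcal{D}(\boldsymbol{\mathcal{N}},\cdot)$ is exactly the eigenspace structure of the new local generators, so no generality is lost and the result is genuinely LU-covariant. It is also worth remarking in passing that the $\vec d=\vec 0$ member reproduces the minimal-privacy family of Theorem~\ref{thm:oneprivacyarb} dressed with a fixed ancilla register, and that changing $\vec d$ only relabels the attached ancilla strings without affecting $\boldsymbol{\mathcal{Q}}$.
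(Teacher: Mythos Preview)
Your proposal is correct and follows essentially the same approach as the paper: fix $Z$-dynamics via Corollary~\ref{thm:unitaryequivalence}, use the invariance of the vectorial Hamming weight under $S_{\boldsymbol{\mathcal{N}}}$ to identify each distributed $s$-state as a joint eigenvector of the $\boldsymbol{G}_\mu$, and then compute the covariance directly to obtain $\boldsymbol{\mathcal{Q}}\propto\vec a\vec a^{T}$. The paper writes the prefactor as $|\alpha|^2+|\beta|^2-(|\alpha|^2-|\beta|^2)^2=4|\alpha|^2|\beta|^2$ (apparently absorbing the overall factor~$4$ from the QFI definition), which matches your $16|\alpha|^2|\beta|^2$ up to that convention; your added remarks on why the cross terms vanish and why $\alpha,\beta\neq 0$ matters are welcome clarifications but not new ingredients.
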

This allows us to finally get to the statement on privacy for when the amount of resources is larger than the minimum needed, but not as large so one could have multiples of the number of resources.

\begin{restatable}[Minimal Plus Ancilla Zone]{theorem}{onecopyplusancillaarb}\label{thm:onecopyplusancillaarb}
Let $\vec{a},\vec{b}, \vec{n} \in \mathbb{Z}^k$, $gcd(\vec{a})=1$ and $f(\thetab)$ be the linear function given by $f(\thetab) = \vec{a}\cdot \vec{\thetab}$. Then, all pure states, up to LU, capable of estimating privately the function $f$ using $\vec{n} =  \vec{a} + \vec{b}$ distributed resources and access to local separable dynamics $U_{\thetab}$, belong a family of the set of $\vec{a}$-Private-Ancilla States introduced in Def.~\ref{def:privatefamily}.
\begin{equation*}
\begin{aligned}
	\mathcal{P} (\mathcal{Q} (U_{\thetab}\ket{\psi}),\vec{a}) = 1 \quad \Longleftrightarrow \quad \exists U\in LU, \vec{d} \in \mathbb{N}_0^k, \vec{0} \preceq \vec{d} \preceq \vec{n}-\vec{a} : U\ket{\psi} \in \mathcal{F}({\boldsymbol{\mathcal{N}}},\vec{a},\vec{d})
\end{aligned}
\end{equation*}
\end{restatable}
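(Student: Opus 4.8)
The plan is to establish the two directions of the equivalence separately, leveraging the QFI decomposition in Eq.~\ref{eq:qfiarbitrarymain} and the combinatorial propositions on integer vectors (Props.~\ref{prop:gcdmultiple}, \ref{thm:gcdorder}, \ref{thm:gcdorder2}) exactly as in the proof of Thm.~\ref{thm:noprivacyarb}. The ``$\Longleftarrow$'' direction is essentially Prop.~\ref{prop:privatefamily}: any state in $\mathcal{F}(\boldsymbol{\mathcal{N}},\vec{a},\vec{d})$ is private, and privacy is LU-invariant by Property 4 of the privacy measure together with Corol.~\ref{thm:unitaryequivalence} (a local unitary on the state is equivalent to a local change of dynamics, which only rotates the QFI by a matrix that fixes $\vec{a}$ up to the identification we have fixed). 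So the real content is ``$\Longrightarrow$'': if $\mathcal{P}(\mathcal{Q}(U_{\thetab}\ket{\psi}),\vec{a})=1$ then $\ket{\psi}$ is LU-equivalent to a member of some family $\mathcal{F}(\boldsymbol{\mathcal{N}},\vec{a},\vec{d})$.

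For the forward direction I would proceed as in Thm.~\ref{thm:noprivacyarb}: write $\ket{\psi}$ in the $\{\ket{\mathcal{G}_j}\}$ basis adapted to a GHZ-like stabilizer table (Eq.~\ref{eq:arbitrarystate}), so that $\boldsymbol{\mathcal{Q}} = C\boldsymbol{\mathfrak{Q}}C^T$ with $\boldsymbol{\mathfrak{Q}} = \Lambda - \vec{v}\vec{v}^T$ and the rows of $C$ given by symmetrized vectorial Hamming weights $\vec{h}^*_{\boldsymbol{\mathcal{N}}}(i) = \vec{n}-2\vec{c}_i$. Privacy forces $\boldsymbol{\mathcal{Q}} = \lambda_1\vec{a}\vec{a}^T$, a rank-one matrix pointing along $\vec{a}$. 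Now, in contrast to the no-privacy zone, here $\vec{a}\preceq\vec{n}$, so $\vec{a}$ \emph{does} appear among the achievable symmetrized weights; the argument must therefore pin down \emph{which} rows of $C$ can carry nonzero $\lambda_j$. Running the rank case-analysis table of Thm.~\ref{thm:noprivacyarb}, the surviving case is $\mathrm{rank}\,\boldsymbol{\mathfrak{Q}} = \dim\supp\boldsymbol{\mathfrak{Q}}-1$ with $\mathrm{rank}\,C_{\supp\boldsymbol{\mathfrak{Q}}} = 2$; by Thm.~\ref{thm:rank} the null space of $\boldsymbol{\mathfrak{Q}}$ is spanned by $D\vec{1}$ for a $\pm1$ diagonal $D$, so the positive eigenspace is spanned by the $D\vec{w}$ with $\vec{w} = \tfrac{1}{\sqrt2}(\vec{e}_i-\vec{e}_j)$, and $D\vec{w}\,C_{\supp\boldsymbol{\mathfrak{Q}}}$ is proportional to $\vec{h}_{\boldsymbol{\mathcal{N}}}(i)+\vec{h}_{\boldsymbol{\mathcal{N}}}(j)-\vec n$ or $\vec{h}_{\boldsymbol{\mathcal{N}}}(i)-\vec{h}_{\boldsymbol{\mathcal{N}}}(j)$. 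Requiring this to be proportional to $\vec{a}$ and invoking Props.~\ref{thm:gcdorder}, \ref{thm:gcdorder2} (for the regime $\vec{n}\not\preceq 2\vec{a}$ or $\vec{n}\prec 2\vec{a}$, which is precisely the hypothesis of the Minimal-plus-Ancilla zone) forces there to be \emph{exactly two} contributing indices $i,j$ whose vectorial Hamming weights differ by exactly $\vec{a}$ — i.e.\ $\vec{c}_i = \vec{d}$ and $\vec{c}_j = \vec{d}+\vec{a}$ for some admissible $\vec{d}$ with $\vec 0\preceq\vec d\preceq\vec n-\vec a$. All other amplitudes $\alpha_r$ must vanish, which says exactly that $\ket{\psi}$ (in this basis) is supported on $[s]_{\boldsymbol{\mathcal{N}}}\cup[\tilde s]_{\boldsymbol{\mathcal{N}}}$ with $\vec h_{\boldsymbol{\mathcal{N}}}(s)=\vec d$, $\vec h_{\boldsymbol{\mathcal{N}}}(\tilde s)=\vec d+\vec a$; translating back through the GHZ-adapted basis change (an LU) gives membership in $\mathcal{F}(\boldsymbol{\mathcal{N}},\vec a,\vec d)$. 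One must also separately rule out the degenerate sub-case where $D\vec w\,C_{\supp\boldsymbol{\mathfrak{Q}}}=\vec 0$, which as in Thm.~\ref{thm:noprivacyarb} corresponds to no information and hence $\mathcal{P}$ undefined / not equal to $1$, and check that both $\alpha,\beta\neq 0$ (otherwise the QFI is zero, not rank one along $\vec a$).

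The main obstacle I anticipate is the bookkeeping in the ``exactly two indices'' step: one needs to argue not merely that each \emph{pair} of contributing rows has weight-difference proportional to $\vec a$, but that the whole support of $\boldsymbol{\mathfrak{Q}}$ collapses to two rows — i.e.\ a set of $\geq 3$ distinct vectorial Hamming weights in which every pairwise sum-minus-$\vec n$ or difference is a multiple of $\vec a$ cannot occur in the Minimal-plus-Ancilla regime. This is where Props.~\ref{thm:gcdorder} and \ref{thm:gcdorder2} do the heavy lifting, but one has to feed them the right hypotheses (in particular the case split $\vec n\prec 2\vec a$ versus $\vec n\not\preceq 2\vec a$ must be handled, matching the zone definition), and one must also confirm that $\vec{d}$ and $\vec{a}+\vec{d}$ genuinely lie in the range $[\vec 0,\vec n]$ so that the families are nonempty. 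A secondary subtlety is making the LU reduction airtight: we fixed the dynamics to $Z$-generators via Corol.~\ref{thm:unitaryequivalence}, so ``$\ket{\psi}$ private for arbitrary separable $U_{\thetab}$'' must first be converted to ``$U^\dagger\ket{\psi}$ private for the fixed $Z$-dynamics'' for the appropriate local $U$, after which the combinatorial argument applies verbatim; I would state this reduction explicitly at the start of the proof and then refer to Appendix~\ref{appendix:privatestateproofs} for the fully detailed case analysis, as the statement is marked \texttt{restatable}.
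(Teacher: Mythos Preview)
Your overall strategy matches the paper's: reduce via Corol.~\ref{thm:unitaryequivalence} to $Z$-dynamics, use the decomposition $\boldsymbol{\mathcal{Q}}=C\boldsymbol{\mathfrak{Q}}C^T$, run the same rank table, and in the surviving case analyse the image of the orthogonal complement of the null vector $D\vec 1$ under $C_{\supp\mathfrak{Q}}$. The reverse implication via Prop.~\ref{prop:privatefamily} is exactly how the paper handles it.

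There are, however, two concrete gaps in the key step. First, invoking Props.~\ref{thm:gcdorder} and~\ref{thm:gcdorder2} is not what is needed here: those are \emph{non-existence} statements tailored to the no-privacy zone (hypotheses $\vec a,\vec b\prec\vec m$ with $\vec m$ the target, resp.\ $\vec n\not\preceq\vec m$). In the Minimal-plus-Ancilla regime the relevant proportionalities $\vec h^*_{\boldsymbol{\mathcal N}}(i)\pm\vec h^*_{\boldsymbol{\mathcal N}}(j)=\delta\vec a$ \emph{can} hold; the paper instead bounds the integer $\delta$ directly (using $gcd(\vec a)=1$ via Prop.~\ref{prop:gcdmultiple} together with $-\vec n\preceq\vec h^*_{\boldsymbol{\mathcal N}}\preceq\vec n$ and the zone constraint) to $\tilde\delta\in\{0,\pm1\}$, and then \emph{solves} for the admissible weights, obtaining $\vec h_{\boldsymbol{\mathcal N}}(i)=\vec a+\vec b_j$, $\vec h_{\boldsymbol{\mathcal N}}(j)=\vec b_j$. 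Second, you only treat the $W^-$-type vectors $\vec e_i-\vec e_j$; when the diagonal signs of $D$ differ one must also analyse $W^+=\{\vec e_i+\vec e_j:\operatorname{sign}D_{ii}=-\operatorname{sign}D_{jj}\}$. It is precisely the $W^+$ equations that produce the additional constraint $\vec b_i+\vec b_j=\vec b$ tying together the two weight classes, and the paper uses this (plus the observation that $\vec n-2(\vec a+\vec b_i)=-(\vec n-2\vec b_j)$, so the two apparent solutions collapse) to identify the support with a single family $\mathcal F(\boldsymbol{\mathcal N},\vec a,\vec b_i)$. Your ``exactly two contributing indices'' is then an undercount: the support of $\mathfrak{Q}$ can contain many rows, but they must all lie in two vectorial-Hamming-weight classes related by this constraint; superposing across \emph{different} $\vec b_i$ would push $\operatorname{rank}C_{\supp\mathfrak{Q}}>2$, which is how the paper excludes mixtures of families. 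Filling in these two pieces (the direct $\delta$-solve and the $W^+$ branch) would make your argument complete and essentially identical to the paper's.
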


The proofs for the above two statements, Prop.~\ref{prop:privatefamily} and Thm.~\ref{thm:onecopyplusancillaarb} are provided over Appendix~\ref{appendix:privatestateproofs}, following similar arguments as the proof for Thm.~\ref{thm:noprivacyarb}. Moreover, one can find an expression for QFI of the private state very similar to the one for the GHZ state, as detailed in the proof of Prop.~\ref{prop:privatefamily}.

Finally, we can discuss the case where we have multiples of the number of resources. For this case, we will not provide a complete set of private states, in the sense we will not prove that the states built in such fashion are the only private states. However, we are able to prove that every state built as so is indeed private. The way to think of this case is to consider that we have logical qubits such that the encoding dynamics by default already encode the desired function. This means exactly choosing our logical qubits to be the ones such that $\alpha \ket{0_L} + \beta \ket{1_L}$ generates a private state:

\begin{equation}
\begin{aligned}
	&\ket{0_L} \in \mathcal{D}(\boldsymbol{\mathcal{N}},\vec{d}), \\
	&\ket{1_L} \in \mathcal{D} (\boldsymbol{\mathcal{N}},\vec{a}+\vec{d}),
\end{aligned}
\end{equation}
for some $\vec{d}$, using the notation introduced in Defs.~\ref{def:distdickestates},\ref{def:privatefamily}. Using for each logical qubit $\vec{a}+\vec{b}$ resources, encompasses the case of $\vec{b}=\vec{0}$ which is the single GHZ state distributed among the $\vec{a}$ resources, and the states in Minimal plus ancilla region. Using these states, and our regular estimation scenario of $(\rho,U_{\thetab},\vec{a}\cdot\vec{\thetab})$ it is easy to understand that now our dynamics when applied to a superposition of $\ket{0_L}$ and $\ket{1_L}$ will always be sensitive only to the desired function $f(\thetab) = \vec{a}\cdot\vec{\thetab}$. Moreover, not all logical qubits need to be defined in the same way, as long as each of them are private. Let us define the states which are built like this:

\begin{definition}[Private Logical States]\label{def:privatelogicalstates}
Let $\mathfrak{N} = \{ \boldsymbol{\mathcal{N}}_1 , \boldsymbol{\mathcal{N}}_2, \cdots , \boldsymbol{\mathcal{N}}_d \} $ be a set of partitions such that each $\boldsymbol{\mathcal{N}}_l$ defines the partition set of a private state, with associated vector of resources $\vec{n}_l = \vec{a} + \vec{b}_l \succeq \vec{a} \  \forall l$. For each partition, one can define two logical states, such that:
\begin{equation*}
\begin{aligned}
    &\ket{0_L}_l \in \mathcal{D}(\boldsymbol{\mathcal{N}}_l,\vec{d}_l), \\
    &\ket{1_L}_l \in  \mathcal{D}(\boldsymbol{\mathcal{N}}_l,\vec{a} + \vec{d}_l) .
\end{aligned}
\end{equation*}
The set of logical private states can be defined as follows:
\begin{equation*}
\begin{aligned}
    \mathfrak{F} = \left\{ \ket{\psi} = \sum_{j=0}^{2^d-1} \alpha_j \ket{{j_1} {j_2} \cdots {j_d}} : \alpha_j \in \mathbb{C}, \ket{j_l} \in \left\{ \ket{0_L}_l, \ket{1_L}_l \right\},\braket{\psi | \psi} = 1 \right\}.
\end{aligned}
\end{equation*}
\end{definition}

As one can see, the problem is now combinatorial. One can arrange the qubits in multiple partitions, and pick multiple private logical states inside of these partitions. Let us make a statement about the privacy of states built like this, which we prove in Appendix~\ref{appendix:privatestateproofs}:

\begin{restatable}[Multiple Privacy]{theorem}{multiplecopiesprivacy}\label{thm:multiplecopiesprivacy}
All states which are described as a private logical state (Def.~\ref{def:privatelogicalstates}) define private states for quantum distributed sensing, up to LU operations.
\end{restatable}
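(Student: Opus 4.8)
\textbf{Proof strategy for Theorem~\ref{thm:multiplecopiesprivacy}.} The plan is to show directly that the privacy measure of Def.~\ref{def:privacymeasure} equals $1$, by checking that the encoding unitary turns any private logical state into a state that depends on $\thetab$ only through the single scalar $f(\thetab)=\vec{a}\cdot\thetab$ (plus an overall phase), which forces its QFI matrix to be rank one in the direction $\vec{a}$. As in the ``only private states'' theorems, I would first invoke Corol.~\ref{thm:unitaryequivalence} to reduce, up to a local unitary on each qubit, to the case in which every local generator is $Z$; this is the source of the ``up to LU'' clause in the statement. With $Z$-generators, a computational basis string $\ket{r}$ is an eigenvector of the full encoding, $U_{\thetab}\ket{r}=e^{-i\,\thetab\cdot\vec{h}^*_{\boldsymbol{\mathcal{N}}}(r)}\ket{r}$, where $\boldsymbol{\mathcal{N}}$ is the global node-partition obtained by merging the block partitions $\boldsymbol{\mathcal{N}}_1,\dots,\boldsymbol{\mathcal{N}}_d$ of Def.~\ref{def:privatelogicalstates} node by node.

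The key observation is that within each logical block $l$ the bit strings appearing in $\ket{0_L}_l$ (resp.\ $\ket{1_L}_l$) all lie in the same class $[\,\cdot\,]_{\boldsymbol{\mathcal{N}}_l}$, hence share the vectorial Hamming weight $\vec{d}_l$ (resp.\ $\vec{a}+\vec{d}_l$), by Def.~\ref{def:distdickestates} and Prop.~\ref{prop:invariancepermS}. Consequently the phase contributed by block $l$ under $e^{-i\theta_\mu\boldsymbol{G}_\mu}$ depends only on the logical value $j_l\in\{0,1\}$ that the block carries, and the exponent increment from $j_l=0$ to $j_l=1$ is exactly $-2a_\mu\theta_\mu$, since $\ket{1_L}_l$ carries $a_\mu$ more excitations than $\ket{0_L}_l$ in node $\mu$. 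Summing over all nodes and blocks gives, for each logical basis element of Def.~\ref{def:privatelogicalstates},
\begin{equation*}
U_{\thetab}\ket{j_1 j_2\cdots j_d} = e^{-i\Phi_0(\thetab)}\, e^{\,2 i\,(\sum_l j_l)\,f(\thetab)}\,\ket{j_1 j_2\cdots j_d},\qquad \Phi_0(\thetab)=\sum_{l}\thetab\cdot(\vec{n}_l-2\vec{d}_l),
\end{equation*}
so that $U_{\thetab}\ket{\psi}=e^{-i\Phi_0(\thetab)}\ket{\chi(f(\thetab))}$ with $\ket{\chi(\phi)}=\sum_j\alpha_j e^{2 i w_j\phi}\ket{j_1\cdots j_d}$ and $w_j=\sum_l j_l$.

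From here the conclusion is routine. By global-phase invariance of the QFI, $\boldsymbol{\mathcal{Q}}(U_{\thetab}\ket{\psi})=\boldsymbol{\mathcal{Q}}(\ket{\chi(f(\thetab))})$, and since $\ket{\chi}$ factors through the scalar $\phi=\vec{a}\cdot\thetab$ we have $\ket{\partial_{\theta_\mu}\chi}=a_\mu\ket{\chi'}$; plugging into Eq.~\ref{eq:qfipuremulti} gives $\boldsymbol{\mathcal{Q}}_{\mu\nu}=q(\phi)\,a_\mu a_\nu$ with $q(\phi)=4\big(\braket{\chi'|\chi'}-|\braket{\chi|\chi'}|^2\big)$ the single-parameter QFI of $\phi\mapsto\ket{\chi}$. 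When $q(\phi)>0$ this matrix is rank one with range $\mathrm{span}(\vec{a})$, so $\mathcal{P}(\boldsymbol{\mathcal{Q}},\vec{a})=\vec{a}^T\boldsymbol{\mathcal{Q}}\vec{a}/\Tr\boldsymbol{\mathcal{Q}}=1$ (using $\|\vec{a}\|=1$); the degenerate case $q(\phi)=0$ occurs only when every logical configuration of nonzero amplitude has the same total logical weight $\sum_l j_l$, in which case the state carries no Fisher information at all and is vacuously private. Finally, to match conditions (i)--(ii): for any dishonest set $D$ the principal submatrix of $\boldsymbol{\mathcal{Q}}$ governing the honest parameters is $q(\phi)\,\vec{a}_H\vec{a}_H^{\,T}$, and since $f(\thetab)=\vec{a}_H\cdot\thetab_H+\vec{a}_D\cdot\thetab_D$ with $\thetab_D$ known to $D$, the only combination of honest parameters accessible to $D$ is $f(\thetab)$ itself.

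I expect the only genuine work to be bookkeeping: defining the merged partition $\boldsymbol{\mathcal{N}}$ cleanly, verifying that the per-block, per-node phases separate exactly into a $\thetab$-global term $\Phi_0$ plus the $f(\thetab)$-term, and stating the $q=0$ boundary case consistently with Def.~\ref{def:privacymeasure}. Unlike Thms.~\ref{thm:noprivacyarb}--\ref{thm:onecopyplusancillaarb}, no number-theoretic input (Props.~\ref{prop:gcdmultiple}--\ref{thm:gcdorder2}) is needed here, because we are proving sufficiency rather than uniqueness; the $s$-state structure of the logical codewords makes the encoding act block-diagonally on the logical space by construction, which is what makes the argument short.
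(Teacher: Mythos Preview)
Your proposal is correct and follows essentially the same approach as the paper: reduce to $Z$ generators via Corol.~\ref{thm:unitaryequivalence}, observe that each logical basis element $\ket{j_1\cdots j_d}$ is an eigenvector of the encoding with eigenphase linear in $\vec{a}\cdot\thetab$ (the paper writes this as $\boldsymbol{Z}_\mu\ket{j}=(h^*(j)\,a_\mu+\text{const}_\mu)\ket{j}$), and conclude $\boldsymbol{\mathcal{Q}}\propto\vec{a}\vec{a}^{T}$. The only cosmetic difference is that you package the eigenvalue computation at the level of the encoded state $U_{\thetab}\ket{\psi}$ and invoke the chain rule $\partial_{\theta_\mu}=a_\mu\,\partial_\phi$, whereas the paper plugs the generator eigenvalues directly into $\braket{\psi|\boldsymbol{Z}_\mu\boldsymbol{Z}_\nu|\psi}-\braket{\psi|\boldsymbol{Z}_\mu|\psi}\braket{\psi|\boldsymbol{Z}_\nu|\psi}$; the resulting scalar prefactor $q(\phi)$ is the same variance $\sum_j|\alpha_j|^2 h^*(j)^2-(\sum_j|\alpha_j|^2 h^*(j))^2$ that the paper obtains. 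Your closing paragraph on conditions (i)--(ii) is superfluous here, since in the paper's framework $\mathcal{P}=1$ already subsumes those conditions.
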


This has an important consequence: we can treat this case as the single-parameter scenario. In particular, one could use the robustness results of the single-parameter case and construct equivalent states, but instead using the logical qubits. Moreover, we can use results on how logical qubits are affected by error to predict how a state made of those logical qubits evolves under error. 

This section finishes with a set of private states for distributed quantum sensing with separable dynamics, and proof that, for limited resources, they constitute the only set of private states. If we extend our resources, we find a construction of sets of private states, by logically encoding multiple qubits into logical qubits that are only sensitive to our target function.

\subsection{Privacy for Arbitrary States with General Hamiltonians \label{section:generalhamiltonian}}

In a similar manner as for separable hamiltonians, the problem complexity can be reduced by choosing an appropriate basis. The basis we will use in this case will be the basis introduced in Eq.~\ref{eq:hamiltonianbasis}, constituted by the eigenvectors of the Hamiltonian. This already implies that finding a private state requires finding eigenstates of the Hamiltonian, which is not necessarily trivial as it was for the separable Hamiltonian scenario. By choosing this basis one can build a very similar system to Eq.~\ref{eq:qfiarbitrary2} in the following way:

\begin{equation}
\begin{aligned}
		\mathcal{Q}_{\mu \nu} (\rho_{\thetab}) &= 4\re{ \sum_{\vec{l},\vec{m}} \alpha_{\vec{l}}^* \alpha_{\vec{m}} \braket{ \lambda_{\vec{l}} |   \boldsymbol{H}_\mu \boldsymbol{H}_\nu | \lambda_{\vec{m}} } - \sum_{\vec{l},\vec{m}} \alpha_{\vec{l}}^* \alpha_{\vec{m}} \braket{ \lambda_{\vec{l}} |   \boldsymbol{H}_\mu | \lambda_{\vec{m}} } \sum_{\vec{p},\vec{q}} \alpha_{\vec{p}}^* \alpha_{\vec{q}} \braket{ \lambda_{\vec{p}} |   \boldsymbol{H}_\nu | \lambda_{\vec{q}} } } \\
		&= 4\re{ \sum_{\vec{l},\vec{m}} \alpha_{\vec{l}}^* \alpha_{\vec{m}} \lambda^\mu_{l_\mu} \lambda^\nu_{m_\nu} \braket{ \lambda_{\vec{l}}  | \lambda_{\vec{m}} } - \sum_{\vec{l},\vec{m}} \alpha_{\vec{l}}^* \alpha_{\vec{m}}  \lambda^\mu_{m_\mu}\braket{ \lambda_{\vec{l}} | \lambda_{\vec{m}} } \sum_{\vec{p},\vec{q}}\alpha_{\vec{p}}^* \alpha_{\vec{q}}   \lambda^\nu_{q_\nu} \braket{ \lambda_{\vec{p}} |   \lambda_{\vec{q}} } } \\
		&=4 \sum_{\vec{m}} | \alpha_{\vec{m}}|^2 \lambda^\mu_{m_\mu} \lambda^\nu_{m_\nu} - \sum_{\vec{m}} |\alpha_{\vec{m}} |^2 \lambda^\mu_{m_\mu} \sum_{\vec{q}} | \alpha_{\vec{q}} |^2  \lambda^\nu_{q_\nu}   \\
		\boldsymbol{\mathcal{Q}} (\rho_{\thetab})&= C\mathfrak{Q} C^T,
\end{aligned}
\label{eq:generalQFI}
\end{equation}
where we used the fact that the basis states are $\ket{\lambda_{\vec{j}}}$ with $\vec{j}$ being given by some labeling function in a space with dimension of the product of the local Hilbert spaces, $e.g.$ $\vec{j} \in \{0, \cdots,  |\mathcal{H}_1| \} \times \cdots \times \{0, \cdots,  |\mathcal{H}_k| \}$. In this case, for each of the vectors $\vec{j}$ with associated $\alpha_{\vec{j}}$ in $\mathfrak{Q}$, there will exist an associated $c_{\vec{j}} $ in $C$, given by  $c_{\vec{j}} = (\lambda^1_{j_1},\lambda^2_{j_2},\cdots,\lambda^k_{j_k}) \in \mathcal{O}$. We provide a simple example:

\begin{example}
Let $k=3$ meaning 3 nodes, each holding one qubit and having access to a $Z$ hamiltonian (the local dynamics are given by $U_{\theta_\mu} = e^{-i\theta_\mu Z}$). Each local set of eigenvalues is the same, $\{-1,+1\}$, meaning $\mathcal{O} = \{-1,+1\} \times \{-1,+1\} \times \{-1,+1\} $. The set of possible eigenstates of the distributed Hamiltonian are then $\ket{j_1 j_2 j_3}$ where $\ket{j_i} \in \{\ket{0}, \ket{1}\}$. The possible set of functions can be mapped to $c_{\vec{j}} = ((-1)^{j_1}, (-1)^{j_2}, (-1)^{j_3})$. They form the vertices of a cube centered in 0 with side equal to 2.
\end{example}

Note that the matrix $\mathfrak{Q}$ can be written again in the same form as before, with a slight change: $\mathfrak{Q} = \text{diag}(\vec{\alpha}) - \vec{\alpha}\vec{\alpha}^T = \Lambda - \vec{\alpha}\vec{\alpha}^T$, where we used that each entry of $\vec{\alpha}$ corresponds to the probability amplitude $|\alpha_{\vec{j}}|^2$ for each and every $\vec{j}$. This vector will have the length of the amount of vectors $\vec{j}$ there are, $i.e.$ the total Hilbert space dimension. Given Thm.~\ref{thm:rank}, we know this matrix has rank equal to rank $\Lambda-1$.

From here we can already make some statements about finding private states, although we cannot go very far as it highly depends on the structure of $\mathcal{O}$, which naturally depends on the Hamiltonian. In order to better analyze this let us introduce the set of vectors going from each point of $\mathcal{O}$ to another, as it will become apparent later that these are the vectors defining linear functions for which we can find private states.

\begin{definition}
Let $\mathcal{O}$ be a discrete subset inside an orthotope in $k$ dimensions over some ordered ring $\mathbb{K}$. Define the higher order discrete subset $\mathcal{O}^2_\pm$ as:
\begin{equation}
    \mathcal{O}^2_\pm = \{ \vec{z} \in \mathbb{K}^k : \vec{z} = \vec{x} \pm \vec{y} , \forall \vec{x},\vec{y} \in \mathcal{O}\}.
\end{equation}
\end{definition}

One can further extend the class of equivalence of linear functions to a class of equivalence of the discrete subset of the orthotopes. As before $f(\thetab) \sim g(\thetab)$ if $\exists \alpha \in \mathbb{R}: f(\thetab)=\alpha g(\thetab)$, one can say two sets of vectors are equivalent $\mathcal{O}_1 \sim \mathcal{O}_2$ if $\forall \vec{v} \in \mathcal{O}_1 \exists \alpha \in \mathbb{R}, \vec{w} \in \mathcal{O}_2: \vec{v} = \alpha \vec{w} $ and the correspondence is bijective. We say one subset contains another up to equivalence $\mathcal{O}_1 \subsetsim \mathcal{O}_2$ if $\mathcal{O}_1 \sim \tilde{\mathcal{O}}_1 \subseteq \mathcal{O}_2$. Using this, we can perform a statement on a necessary condition in order to create private state in this scenario:

\begin{restatable}[Existence of Private States]{theorem}{existenceprivate}\label{thm:existenceprivate}
Let a distributed sensing scenario be governed by local encoding dynamics in the form of Eq.~\ref{eq:generaldynamics}, with associated discrete subset $\mathcal{O}$. For each vector $\vec{a}$ belonging to the higher order discrete subset $\mathcal{O}^2_-$, one can find a private state with respect to the linear function $f(\thetab)=\vec{a}\cdot\vec{\boldsymbol{\theta}}$. Moreover, no vector outside the equivalence class of $\mathcal{O}^2_-$ has a private state.
\end{restatable}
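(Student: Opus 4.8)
The plan is to recognise Eq.~\eqref{eq:generalQFI} as the statement that, in the Hamiltonian eigenbasis $\{\ket{\lambda_{\vec{j}}}\}$, the QFI matrix of a pure state $\ket{\psi}=\sum_{\vec{j}}\alpha_{\vec{j}}\ket{\lambda_{\vec{j}}}$ is a positive multiple of the covariance matrix of the $\mathbb{R}^{k}$-valued random vector $\vec{c}$ that takes the value $c_{\vec{j}}=(\lambda^{1}_{j_{1}},\dots,\lambda^{k}_{j_{k}})\in\mathcal{O}$ with probability $|\alpha_{\vec{j}}|^{2}$; in matrix form this is exactly $\boldsymbol{\mathcal{Q}}=C\mathfrak{Q}C^{T}$ with $\mathfrak{Q}=\mathrm{diag}(p)-pp^{T}$ the covariance of the probability vector $p=(|\alpha_{\vec{j}}|^{2})_{\vec{j}}$ and $C$ the matrix whose $\vec{j}$-th row is $c_{\vec{j}}$. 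Then privacy, $\mathcal{P}(\boldsymbol{\mathcal{Q}},\vec{a})=1$, says this covariance is rank one along $\vec{a}$, and a covariance matrix is rank one precisely when the random vector is supported on (at least two points of) an affine line, whose direction is then the only direction the QFI sees. Both halves of the theorem follow from this picture.

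For the existence half I would take $\vec{a}=\vec{x}-\vec{y}$ with $\vec{x},\vec{y}\in\mathcal{O}$ and $\vec{a}\neq\vec{0}$. Because $\mathcal{O}=\prod_{\mu}\mathcal{O}^{\mu}$, each $x_{\mu}$ and each $y_{\mu}$ is an eigenvalue of $\boldsymbol{H}_{\mu}$, so there exist product eigenstates $\ket{\lambda_{\vec{j}}},\ket{\lambda_{\vec{k}}}$ of $\{\boldsymbol{H}_{\mu}\}_{\mu}$ with $c_{\vec{j}}=\vec{x}$ and $c_{\vec{k}}=\vec{y}$ (degeneracies of the $\boldsymbol{H}_{\mu}$ are harmless: pick any eigenvector of each eigenspace). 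Set $\ket{\psi}=\alpha\ket{\lambda_{\vec{j}}}+\beta\ket{\lambda_{\vec{k}}}$ with $\alpha,\beta\neq0$, $|\alpha|^{2}+|\beta|^{2}=1$. Then $p$ is supported on the two indices $\vec{j},\vec{k}$ with values $|\alpha|^{2},|\beta|^{2}$, so $\mathfrak{Q}$ restricted to that support equals $|\alpha|^{2}|\beta|^{2}(\vec{e}_{\vec{j}}-\vec{e}_{\vec{k}})(\vec{e}_{\vec{j}}-\vec{e}_{\vec{k}})^{T}$, and hence $\boldsymbol{\mathcal{Q}}=C\mathfrak{Q}C^{T}$ is a positive multiple of $(c_{\vec{j}}-c_{\vec{k}})(c_{\vec{j}}-c_{\vec{k}})^{T}=\vec{a}\vec{a}^{T}$. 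This is rank one along $\vec{a}$, so $\mathcal{P}(\boldsymbol{\mathcal{Q}},\vec{a})=1$ (after normalising $\vec{a}$), and by $f\sim\alpha f$ the whole equivalence class of $\mathcal{O}^{2}_{-}$ is covered.

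For the converse I would suppose $\ket{\psi}=\sum_{\vec{j}}\alpha_{\vec{j}}\ket{\lambda_{\vec{j}}}$ is private for $f(\thetab)=\vec{a}\cdot\vec{\thetab}$, i.e.\ $\boldsymbol{\mathcal{Q}}=\lambda_{1}\vec{a}\vec{a}^{T}$ with $\lambda_{1}>0$, and let $T=\{\vec{j}:\alpha_{\vec{j}}\neq0\}$. By Thm.~\ref{thm:rank}, $\mathfrak{Q}$ restricted to $T$ is positive semidefinite of rank $|T|-1$, and the direct check $\mathfrak{Q}\vec{1}=p-p\,(p\cdot\vec{1})=\vec{0}$ pins its kernel on $T$ to $\mathrm{span}\{\vec{1}_{T}\}$. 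Diagonalising $\mathfrak{Q}|_{T}=\sum_{i}\mu_{i}\vec{u}_{i}\vec{u}_{i}^{T}$ with $\mu_{i}>0$ and $\vec{u}_{i}\perp\vec{1}_{T}$, one gets $\boldsymbol{\mathcal{Q}}=\sum_{i}\mu_{i}\vec{w}_{i}\vec{w}_{i}^{T}$ with $\vec{w}_{i}=\sum_{\vec{j}\in T}(\vec{u}_{i})_{\vec{j}}\,c_{\vec{j}}$; since a rank-one sum of positive rank-one PSD terms has all its vectors collinear, every $\vec{w}_{i}$ is parallel to $\vec{a}$, and because each $\vec{e}_{\vec{j}}-\vec{e}_{\vec{k}}$ (for $\vec{j},\vec{k}\in T$) lies in $\mathrm{span}\{\vec{u}_{i}\}=\vec{1}_{T}^{\perp}$, the vector $c_{\vec{j}}-c_{\vec{k}}$ is parallel to $\vec{a}$ for all $\vec{j},\vec{k}\in T$. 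If all $c_{\vec{j}}$ with $\vec{j}\in T$ coincided, every $\vec{w}_{i}$ would vanish and $\boldsymbol{\mathcal{Q}}=0$, contradicting $\lambda_{1}>0$; hence some pair gives $\vec{0}\neq c_{\vec{j}}-c_{\vec{k}}\parallel\vec{a}$, and since $c_{\vec{j}}-c_{\vec{k}}\in\mathcal{O}^{2}_{-}$ this shows $\vec{a}$ lies in the equivalence class of $\mathcal{O}^{2}_{-}$, i.e.\ no vector outside that class admits a private state.

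The hard part will be the converse, specifically the argument that the image of $\boldsymbol{\mathcal{Q}}$ — obtained by pushing the $(|T|-1)$-dimensional positive part of $\mathfrak{Q}|_{T}$ through the linear map with rows $c_{\vec{j}}$ — can be one-dimensional only when that line is a genuine difference of two points of $\mathcal{O}$. This hinges on the exact kernel description $\ker\mathfrak{Q}|_{T}=\mathrm{span}\{\vec{1}\}$ supplied by Thm.~\ref{thm:rank}, together with the elementary fact that a rank-one sum of positive PSD rank-one matrices has collinear generators; once those are in place the conclusion drops out. The degeneracy of the $\boldsymbol{H}_{\mu}$ in the existence half is a non-issue, since $\mathcal{O}$ records only distinct eigenvalues and any eigenvector of the relevant eigenspace works.
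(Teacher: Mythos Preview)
Your proposal is correct and follows essentially the same route as the paper's proof: both use the decomposition $\boldsymbol{\mathcal{Q}}=C\mathfrak{Q}C^{T}$ from Eq.~\eqref{eq:generalQFI}, invoke Thm.~\ref{thm:rank} to pin the kernel of $\mathfrak{Q}$ on the support to $\mathrm{span}\{\vec{1}\}$, observe that the orthogonal complement is spanned by the vectors $\vec{e}_{\vec{i}}-\vec{e}_{\vec{j}}$, and conclude that the image of $\boldsymbol{\mathcal{Q}}$ is spanned by differences $c_{\vec{i}}-c_{\vec{j}}\in\mathcal{O}^{2}_{-}$ (so privacy forces $\vec{a}$ into that equivalence class, while the two-term superposition gives existence). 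Your covariance-matrix framing and the explicit ``rank-one sum of positive PSD terms has collinear generators'' step make the converse cleaner than the paper's somewhat terse span argument, but the mathematics is the same; the only cosmetic slip is that with your convention (rows of $C$ equal to $c_{\vec{j}}$) the correct factorisation is $\boldsymbol{\mathcal{Q}}=C^{T}\mathfrak{Q}C$ rather than $C\mathfrak{Q}C^{T}$.
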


Let us provide again another more complex example to help visualize what we are doing, by plotting an illustration of the functions such that a private state exists.

\begin{example}\label{ex:cube} \textbf{(Cube in 3-D)}
Let the three local hamiltonians $\{H_1,H_2,H_3\}$ have possible eigenvalues $\{\lambda_j^\mu\}_{j=1,...,|H_\mu|}^{\mu = 1, 2, 3}$. All of the vectors in blue with accompanying points correspond to the discrete subset $\mathcal{O}$ and correspond to all the possible combinations of the eigenvalues in each local hamiltonian $(\lambda^1_i,\lambda^2_j,\lambda^3_k)$. These are the private linear functions one has access to, since for each of them we prove the existence of a private state. The 3-cube in red (given that the dimension $k=3$) corresponds to the $3-$orthotope $\overline{\mathcal{O}}$. The blue and black dots on the right side correspond to the set of $\mathcal{O}\cap \overline{\mathcal{O}}$. The black dots on the right side also correspond to the vertices of the orthotope  $V(\overline{\mathcal{O}})$, which can be seen as the intersection with the $k$-sphere. These correspond to the maximum of the 2-norm of the vectors inside the orthotope.
\begin{figure}[ht]
    \centering
\tdplotsetmaincoords{60}{120}
\begin{tikzpicture}
	[tdplot_main_coords,
        grid/.style={very thin,gray},
	axis/.style={->,black,thick},
	cube/.style={opacity=.2,dashed,fill=red},
        axis hidden/.style={opacity=.4,black,thick},
        vector/.style={->,blue,thin,opacity=.4},
        scale=0.9]
	
	\draw[axis] (2,0,0) -- (6,0,0) node[anchor=west]{$x$};
	\draw[axis] (0,2,0) -- (0,4,0) node[anchor=west]{$y$};
	\draw[axis] (0,0,2) -- (0,0,4) node[anchor=west]{$z$};
    \draw[axis hidden] (-2,0,0) -- (2,0,0);
	\draw[axis hidden] (0,-2,0) -- (0,2,0);
	\draw[axis hidden] (0,0,-2) -- (0,0,2);

	\draw[cube] (-2,-2,-2) -- (-2,2,-2) -- (2,2,-2) -- (2,-2,-2) -- cycle;
	
	\draw[cube] (-2,-2,-2) -- (-2,2,-2) -- (-2,2,2) -- (-2,-2,2) -- cycle;

	\draw[cube] (-2,-2,-2) -- (2,-2,-2) -- (2,-2,2) -- (-2,-2,2) -- cycle;

	\draw[cube] (2,-2,-2) -- (2,2,-2) -- (2,2,2) -- (2,-2,2) -- cycle;

	\draw[cube] (-2,2,-2) -- (2,2,-2) -- (2,2,2) -- (-2,2,2) -- cycle;

	\draw[cube] (-2,-2,2) -- (-2,2,2) -- (2,2,2) -- (2,-2,2) -- cycle;

    \node[draw=none,shape=circle,fill, inner sep=1pt] (d1) at (0,2,0){};
    \node[draw=none,shape=circle,fill, inner sep=1pt] (d1) at (0,-2,0){};
    \node[draw=none,shape=circle,fill, inner sep=1pt] (d1) at (2,0,0){};
    \node[draw=none,shape=circle,fill, inner sep=1pt] (d1) at (-2,0,0){};
    \node[draw=none,shape=circle,fill, inner sep=1pt] (d1) at (0,0,2){};
    \node[draw=none,shape=circle,fill, inner sep=1pt] (d1) at (0,0,-2){};

    \foreach \x in {-2,-0.5,0.5,2}
	\foreach \y in {-2,-0.5,0.5,2}
            \foreach \z in {-2,-1,1,2}
		{
			\draw[vector] (0,0,0) -- (\x,\y,\z);
            		\node[draw=none,shape=circle,fill, inner sep=1pt,blue] (d1) at (\x,\y,\z){};
		} 
    
\end{tikzpicture}
\hfil
\begin{tikzpicture}
	[tdplot_main_coords,
        grid/.style={very thin,gray},
	axis/.style={->,black,thick},
	cube/.style={opacity=.5,dashed,fill=red},
        axis hidden/.style={opacity=.4,black,thick},
        vector/.style={->,blue,thin},
        scale=0.9]
    \tdplotsetrotatedcoords{0}{0}{0};
    \draw[dashed,tdplot_rotated_coords,gray] (0,0,0) circle (3.4641);
 
    \tdplotsetrotatedcoords{90}{90}{90};
    \draw[dashed,tdplot_rotated_coords, gray] (3.4641,0,0) arc (0:360:3.4641);
 
    \tdplotsetrotatedcoords{0}{90}{90};
    \draw[dashed,tdplot_rotated_coords,gray] (3.4641,0,0) arc (0:360:3.4641);
 
	\draw[axis] (2,0,0) -- (6,0,0) node[anchor=west]{$x$};
	\draw[axis] (0,2,0) -- (0,4,0) node[anchor=west]{$y$};
	\draw[axis] (0,0,2) -- (0,0,4) node[anchor=west]{$z$};

	\draw[cube] (2,-2,-2) -- (2,2,-2) -- (2,2,2) -- (2,-2,2) -- cycle;

	\draw[cube] (-2,2,-2) -- (2,2,-2) -- (2,2,2) -- (-2,2,2) -- cycle;

	\draw[cube] (-2,-2,2) -- (-2,2,2) -- (2,2,2) -- (2,-2,2) -- cycle;

    \node[draw=none,shape=circle,fill, inner sep=1pt] (d1) at (0,2,0){};
    \node[draw=none,shape=circle,fill, inner sep=1pt] (d1) at (2,0,0){};
    \node[draw=none,shape=circle,fill, inner sep=1pt] (d1) at (0,0,2){};

    \foreach \x in {-2,-0.5,0.5,2}
	\foreach \y in {-2,-0.5,0.5,2}{
		\node[draw=none,shape=circle,fill, inner sep=1pt,blue] (d1) at (\x,\y,2){};
	}

    \foreach \x in {-2,-0.5,0.5,2}
	\foreach \y in {-2,-1,1,2}{
      		 \node[draw=none,shape=circle,fill, inner sep=1pt,blue] (d1) at (2,\x,\y){};
	} 
	
    \foreach \x in {-2,-1,1,2}
	\foreach \y in {-2,-0.5,0.5,2}{
		\node[draw=none,shape=circle,fill, inner sep=1pt,blue] (d1) at (\y,2,\x){};
	} 

    \node[draw=none,shape=circle,fill, inner sep=2pt,black] (d1) at (2,2,2){};
    \node[draw=none,shape=circle,fill, inner sep=2pt,black] (d1) at (2,-2,2){};
    \node[draw=none,shape=circle,fill, inner sep=2pt,black] (d1) at (2,2,-2){};
    \node[draw=none,shape=circle,fill, inner sep=2pt,black] (d1) at (2,-2,-2){};
    \node[draw=none,shape=circle,fill, inner sep=2pt,black] (d1) at (-2,2,2){};
    \node[draw=none,shape=circle,fill, inner sep=2pt,black] (d1) at (-2,-2,2){};
    \node[draw=none,shape=circle,fill, inner sep=2pt,black] (d1) at (-2,2,-2){};

    \shade[ball color = lightgray,opacity = 0.1] (0,0,0) circle (3.4641cm);

\end{tikzpicture}
\end{figure}
\end{example}

Moreover, one may wonder about the possibility of finding private states for functions within $\mathcal{O}$, not only in $\mathcal{O}_-^2$. This is again highly dependent on the Hamiltonian. For this reason we will give a sufficient condition for one to find these Hamiltonians, which relates with the inclusion of $\mathcal{O}$ in the higher order subset $\mathcal{O}^2_-$:

\begin{restatable}{proposition}{symmetrichamiltonians}\label{prop:symmetrichamiltonians}
Let $\{\boldsymbol{H}_\mu\}_{\mu \in V}$ be the Hamiltonians describing some local general encoding dynamics. 
\begin{equation}
    \forall \ \mu \in V \ \exists \ X_\mu \in \mathbb{1}\otimes \mathcal{P}_\mu: \{X_\mu, \boldsymbol{H}_\mu\} = 0
    \implies \mathcal{O} \subsetsim \mathcal{O}^2_-
\end{equation}
Where $\mathbb{1}\otimes \mathcal{P}_\mu$ are the Pauli strings where only the elements in $\mu$ can be non-identity. This means the functions in $\mathcal{O}$ are available privately by choosing an appropriate state.
\end{restatable}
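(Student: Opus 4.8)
The plan is to use the anticommutation relation to force the \emph{local spectra} to be symmetric about the origin, push this symmetry through the product structure of $\mathcal{O}$, and then read off the containment $\mathcal{O}\subsetsim\mathcal{O}^2_-$, at which point Theorem~\ref{thm:existenceprivate} does the remaining work.

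First I would establish that each local Hamiltonian $\boldsymbol{H}_\mu$ has a spectrum symmetric about $0$. Let $\ket{v}$ be an eigenvector of $\boldsymbol{H}_\mu$ with eigenvalue $\lambda\in\mathcal{O}^\mu$. Since $X_\mu$ is a (signed) Pauli string it is unitary, hence invertible, so $X_\mu\ket{v}\neq 0$; and from $\{X_\mu,\boldsymbol{H}_\mu\}=0$ we get $\boldsymbol{H}_\mu X_\mu\ket{v} = -X_\mu\boldsymbol{H}_\mu\ket{v} = -\lambda\, X_\mu\ket{v}$, so $X_\mu\ket{v}$ is an eigenvector with eigenvalue $-\lambda$, giving $-\lambda\in\mathcal{O}^\mu$. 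Because $X_\mu\in\mathbb{1}\otimes\mathcal{P}_\mu$ acts trivially outside node $\mu$, this argument never disturbs the partition. Applying it in both directions yields $\mathcal{O}^\mu = -\mathcal{O}^\mu$ for every $\mu\in V$ (degeneracy is irrelevant, since $\mathcal{O}^\mu$ is the list of \emph{distinct} eigenvalues).

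Next, since $\mathcal{O}=\prod_\mu\mathcal{O}^\mu$, applying the previous step coordinate by coordinate gives $\mathcal{O}=-\mathcal{O}$. Hence for any $\vec{x}\in\mathcal{O}$ we have $-\vec{x}\in\mathcal{O}$ and therefore $2\vec{x}=\vec{x}-(-\vec{x})\in\mathcal{O}^2_-$. The map $\vec{x}\mapsto 2\vec{x}$ is a bijection from $\mathcal{O}$ onto $2\mathcal{O}:=\{2\vec{x}:\vec{x}\in\mathcal{O}\}$ and is multiplication by a fixed nonzero scalar, so $\mathcal{O}\sim 2\mathcal{O}$, while $2\mathcal{O}\subseteq\mathcal{O}^2_-$; by the definition of $\subsetsim$ this is exactly $\mathcal{O}\subsetsim\mathcal{O}^2_-$. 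Finally Theorem~\ref{thm:existenceprivate} supplies a private state for every coefficient vector in $\mathcal{O}^2_-$, in particular for each $2\vec{x}$ with $\vec{x}\in\mathcal{O}$, and since $2\vec{x}\sim\vec{x}$ that same state is private for $f(\thetab)=\vec{x}\cdot\vec{\thetab}$; this gives the stated conclusion that all functions indexed by $\mathcal{O}$ are available privately.

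I do not expect a genuine obstacle. The only points that need care are: (i) confirming the bijectivity clause in the $\sim$/$\subsetsim$ relations rather than just set inclusion, and (ii) checking that the anticommuting witness $X_\mu$, being supported on node $\mu$, does not spoil the product decomposition of $\mathcal{O}$ — both are immediate once the local spectral symmetry is in hand, so the substance of the argument is entirely the one-line eigenvector computation in the second paragraph.
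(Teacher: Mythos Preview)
Your proposal is correct and follows essentially the same route as the paper: show each local spectrum is symmetric about zero, deduce $\mathcal{O}=-\mathcal{O}$, and then observe $2\vec{x}=\vec{x}-(-\vec{x})\in\mathcal{O}^2_-$ so that $\mathcal{O}\sim 2\mathcal{O}\subseteq\mathcal{O}^2_-$. The only cosmetic difference is that the paper packages the spectral-symmetry step as a separate auxiliary proposition (proved via a determinant identity $\det(H-\lambda\mathbb{1})=\det(-H-\lambda\mathbb{1})$), whereas you give the equivalent one-line eigenvector computation $\boldsymbol{H}_\mu X_\mu\ket{v}=-\lambda X_\mu\ket{v}$ inline; the subsequent inclusion argument and the appeal to Theorem~\ref{thm:existenceprivate} are identical.
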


We prove Thm.~\ref{thm:existenceprivate} and Prop.~\ref{prop:symmetrichamiltonians} in Appendix~\ref{appendix:general}. The latter is supposed to give some symmetry conditions for the Hamiltonian, under which the functions available are directly extractable from the original set $\mathcal{O}$. Note that the form of $\mathcal{O}$ is not always a cube centered in the origin, even though $\mathcal{O}^2_-$ is. Prop.~\ref{prop:symmetrichamiltonians} also implies that $\mathcal{O}$ will be symmetric with respect to the origin. From here, one can deduce a simple, yet important consequence that relates with the idea of concentration of information. 

\begin{restatable}[Information Concentration]{corollary}{concentration}\label{corol:concentration}
Given a distributed sensing scenario governed by local encoding dynamics in the form of Eq.~\ref{eq:generaldynamics}, the maximum information extractable, given by functions of parameters in the vertices of $\overline{\mathcal{O}}^2_-$, is accessible only by a private state.
\end{restatable}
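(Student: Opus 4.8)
The plan is to read the statement off the variance form of the QFI in Eq.~\ref{eq:generalQFI} together with Popoviciu's inequality. For a pure state $\ket{\psi}=\sum_{\vec j}\alpha_{\vec j}\ket{\lambda_{\vec j}}$ written in the eigenbasis of the $\boldsymbol{H}_\mu$, recall $\boldsymbol{\mathcal Q}(\rho_{\thetab})=C\mathfrak Q C^{T}$ with $\mathfrak Q=\mathrm{diag}(\vec\alpha)-\vec\alpha\vec\alpha^{T}$, $\alpha_{\vec j}=|\alpha_{\vec j}|^{2}$, and $C$ built from the vectors $c_{\vec j}=(\lambda^{1}_{j_{1}},\dots,\lambda^{k}_{j_{k}})\in\mathcal O$. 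Then for any direction $\vec b$ I would use $\vec b^{T}\boldsymbol{\mathcal Q}\,\vec b=(C^{T}\vec b)^{T}\mathfrak Q\,(C^{T}\vec b)=\mathrm{Var}(Z_{\vec b})$, up to the fixed positive prefactor of Eq.~\ref{eq:generalQFI}, where $Z_{\vec b}$ is the real random variable equal to $c_{\vec j}\cdot\vec b$ with probability $|\alpha_{\vec j}|^{2}$. A vertex $\vec a$ of $\overline{\mathcal O}^{2}_{-}$ is a vector with $|a_{\mu}|=\lambda^{\mu}_{\max}-\lambda^{\mu}_{\min}$ for every $\mu$; being of the form $\vec x-\vec y$ with $x_{\mu},y_{\mu}$ extreme local eigenvalues it lies in $\mathcal O^{2}_{-}$, so a private state for $f(\thetab)=\vec a\cdot\vec\thetab$ exists by Thm.~\ref{thm:existenceprivate}, and (cf. Ex.~\ref{ex:cube}) $\vec a$ is a $2$-norm maximal point of $\overline{\mathcal O}^{2}_{-}$, which is the meaning of ``maximal information''.

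Next I would bound the QFI such a vertex function can provide. Coordinates with $a_{\mu}=0$ force $\lambda^{\mu}_{\max}=\lambda^{\mu}_{\min}$ (a trivial local Hamiltonian) and can be dropped, so assume $a_{\mu}\neq0$ for all $\mu$. Because $c_{\vec j}\cdot\vec a=\sum_{\mu}a_{\mu}\lambda^{\mu}_{j_{\mu}}$ separates over nodes, $Z_{\vec a}$ attains its maximum $y_{\max}$ exactly on the set $S_{\max}$ of indices $\vec j$ for which, for every $\mu$, $\lambda^{\mu}_{j_{\mu}}$ is the extreme local eigenvalue on the $\mathrm{sign}(a_{\mu})$ side, and its minimum $y_{\min}$ exactly on the opposite set $S_{\min}$; a short computation gives $y_{\max}-y_{\min}=\sum_{\mu}|a_{\mu}|(\lambda^{\mu}_{\max}-\lambda^{\mu}_{\min})=\norm{\vec a}^{2}$ using the vertex condition. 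Popoviciu's inequality then gives $\mathrm{Var}(Z_{\vec a})\le(y_{\max}-y_{\min})^{2}/4$, with equality iff $|\alpha_{\vec j}|^{2}$ is supported on $S_{\max}\cup S_{\min}$ with weight $\tfrac12$ on each piece. Hence the largest QFI any state yields for the vertex function $\vec a$ is a fixed positive number, it is attained, and every maximizing state satisfies $\supp(\vec\alpha)\subseteq S_{\max}\cup S_{\min}$.

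Then I would show every such maximizing state is automatically private. Take a maximizer and any $\vec b$ with $\vec a\cdot\vec b=0$. On $S_{\max}$ each local eigenvalue $\lambda^{\mu}_{j_{\mu}}$ is pinned to one value --- $\lambda^{\mu}_{\max}$ or $\lambda^{\mu}_{\min}$ according to $\mathrm{sign}(a_{\mu})$ --- even when the corresponding local eigenspace is degenerate, so $c_{\vec j}\cdot\vec b$ is constant on $S_{\max}$, say $z_{+}$, and likewise constant $z_{-}$ on $S_{\min}$, with $z_{+}-z_{-}=\sum_{\mu}\mathrm{sign}(a_{\mu})(\lambda^{\mu}_{\max}-\lambda^{\mu}_{\min})b_{\mu}=\sum_{\mu}a_{\mu}b_{\mu}=0$. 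Therefore $Z_{\vec b}$ is constant on $\supp(\vec\alpha)$, so $\mathrm{Var}(Z_{\vec b})=0$ and $\vec b^{T}\boldsymbol{\mathcal Q}\,\vec b=0$. As this holds for every $\vec b$ orthogonal to $\vec a$ and $\boldsymbol{\mathcal Q}$ is positive semidefinite, $\boldsymbol{\mathcal Q}=\alpha\,\vec a\vec a^{T}$ with $\alpha>0$, i.e. $\mathcal P(\boldsymbol{\mathcal Q},\vec a)=1$ by Def.~\ref{def:privacymeasure}. Together with the existence of such states from Thm.~\ref{thm:existenceprivate}, this shows the maximal information associated with the vertices of $\overline{\mathcal O}^{2}_{-}$ is achievable, but only through private states.

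The step I expect to be the main obstacle is the bookkeeping around degenerate local eigenvalues: the extremal subspace $\mathsf{span}(S_{\max}\cup S_{\min})$ can be high-dimensional, so one must argue privacy for a whole family of states rather than a single vector, and the reason this still goes through is precisely that $Z_{\vec b}$ depends only on the local eigenvalues (constant on each of $S_{\max}$, $S_{\min}$), never on which eigenvector is chosen inside a degenerate block; it is also here that the vertex condition is essential, since for a non-vertex $\vec a\in\mathcal O^{2}_{-}$ the identity $z_{+}-z_{-}=\vec a\cdot\vec b$ fails and a maximizing state need not be private. A minor technical point is matching the overall prefactor of Eq.~\ref{eq:generalQFI} and the normalization $\norm{\vec a}=1$ so that ``rank-one along $\vec a$'' coincides with the $\mathcal P=1$ criterion of Def.~\ref{def:privacymeasure}.
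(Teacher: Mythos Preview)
Your proof is correct, and it takes a genuinely different, more explicit route than the paper. The paper's argument is brief: it invokes the decomposition $\mathcal{Q}=C\mathfrak{Q}C^{T}$, notes that $\norm{\vec a}_{2}$ is maximal at the vertices of $\overline{\mathcal O}^{2}_{-}$, computes the maximal QFI achieved by private states as $\norm{\vec c_{i}-\vec c_{j}}_{2}^{2}$, and then closes with the assertion that the privacy measure already encodes the fact that private states maximise information along the target direction, so the maximum must be private. The last step --- going from ``private states attain the maximum'' to ``\emph{only} private states attain it'' --- is not spelled out in the paper.

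You instead read $\vec b^{T}\mathcal{Q}\,\vec b$ as (a constant times) the variance of $Z_{\vec b}=c_{\vec j}\cdot\vec b$ under the probability weights $|\alpha_{\vec j}|^{2}$, and apply Popoviciu's inequality. This buys you two things the paper's proof does not make explicit: first, a sharp characterisation of \emph{all} maximisers (support on $S_{\max}\cup S_{\min}$ with total weight $1/2$ on each), which is exactly the ``only if'' direction; second, a direct verification of privacy via $\mathrm{Var}(Z_{\vec b})=0$ for every $\vec b\perp\vec a$, using the vertex identity $z_{+}-z_{-}=\vec a\cdot\vec b$. Your treatment of degenerate local eigenvalues is the right observation: $Z_{\vec b}$ depends only on the eigenvalue tuple $c_{\vec j}$, not on which eigenvector is chosen inside a degenerate block, so the argument goes through for the whole family of maximising states. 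Your remark that the vertex condition is essential --- without it $z_{+}-z_{-}=\vec a\cdot\vec b$ fails --- also clarifies why the corollary is specific to vertices, a point the paper leaves implicit.
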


The proof for Corol.~\ref{corol:concentration} can also be found over Appendix \ref{appendix:general}. This has important consequences, and provides an intuition that has been in the background the whole time: one can either try and estimate one function with maximum precision, or a set of functions with less precision each. Moreover, one can retrieve the solutions for separable Hamiltonians from Thm.~\ref{thm:existenceprivate}, by looking at the imposed condition on $gcd(\vec{a})=1$. Taking into consideration that the orthotope $\mathcal{O}\subseteq \mathbb{Z}^k$, the existence of private states is a consequence of integers' properties.

\section{Robustness \label{section:robustness}}

So far, we have been able to calculate the QFI for special types of quantum states. We started with the case of stabilizer states with separable Hamiltonians. We then generalized for any arbitrary pure state, still under the assumption of separable Hamiltonians. We then lifted this last assumption to general local Hamiltonians, at the expense of not being able to identify the state, and instead proving that private states still exist and for which functions they do exist. The next logical step is to consider mixed states as well. These types of states are often associated with the presence of noise in the state, hence the name of this section. Robustness should measure the ability of a quantum state to retain some of its properties upon being affected by noise \cite{Demkowicz-Dobrzanski2017,Ouyang2022a,Shettell2022,Hayashi2022a}. In our case, our goal is to maximize precision, which is connected with the maximization of the QFI. Robustness is then linked to how the QFI varies as noise is applied onto the system. In this section, we will start by finding a way to calculate the QFI of mixed states and verify how the QFI changes under different noises types, keeping in mind the privacy as well. Moreover, we assume that the noise is applied to the state prior to any encoding dynamics, as this noise should be the one that could possibly break privacy.  

\subsection{QFI for Mixed States}

Starting with the equation for the QFI in the multiparameter scenario:

\begin{equation}
\begin{aligned}
	\mathcal{Q}_{\mu\nu} (\rho_{\thetab}) =  \Tr \mathcal{R}_{\rho_{\thetab}}^{-1} (\partial_{\theta_\mu}\rho_{\thetab}) \rho_{\thetab} \mathcal{R}_{\rho_{\thetab}}^{-1} (\partial_{\theta_\nu}\rho_{\thetab}),
\end{aligned}
\label{eq:qfimultimixed1}
\end{equation}
where the super operator $ \mathcal{R}_{\rho_{\thetab}}^{-1} (\hat{O})$ is the inverse of the super operator $ \mathcal{R}_{\rho_{\thetab}} (\hat{O})$, defined by the following set of equations:
\begin{equation}
\begin{aligned}
	\mathcal{R}_\rho (O)&= \frac{\rho O + O \rho}{2}, \\
	\mathcal{R}_\rho^{-1} (O)&= \sum_{j,k} \frac{2}{\lambda_j + \lambda_k} O_{jk} \ketbra{\mathcal{G}_j}{\mathcal{G}_k}.
\end{aligned}
\end{equation}
Given a decomposition of the density matrix in its eigenvectors $\rho = \sum_k \lambda_k \ket{\mathcal{G}_k}\bra{\mathcal{G}_k}$ and the matrix elements $O_{jk} = \bra{\mathcal{G}_j} \hat{O} \ket{\mathcal{G}_k}$. Developing Eq.~\ref{eq:qfimultimixed1}, one can arrive at the following expression, similar to \cite{Rezakhani2019}:
\begin{equation}
\begin{aligned}
	\mathcal{Q}_{\mu\nu} (\rho_\thetab) = \sum_{\substack{n \in \supp \\ k\in\supp}} 2\frac{(\lambda_n - \lambda_k)^2}{\lambda_n + \lambda_k}  \braket{\mathcal{G}_n | \boldsymbol{G}_\mu | \mathcal{G}_k} \braket{ \mathcal{G}_k | \boldsymbol{G}_\nu | \mathcal{G}_n} + \sum_{\substack{n \in \nul \\ k\in\supp}}  4  \lambda_k  \braket{\mathcal{G}_n | \boldsymbol{G}_\mu | \mathcal{G}_k} \braket{ \mathcal{G}_k | \boldsymbol{G}_\nu | \mathcal{G}_n} ,
	\label{eq:noisyqfi}
\end{aligned}
\end{equation}
where we used $\supp$ and $\null$ to denote the indices $k$ such that $\lambda_k$ is either positive or zero. This can be further simplified when $\rho = \sum_k \lambda_k \ket{\mathcal{G}_k}\bra{\mathcal{G}_k} $ is given in an orthonormal basis, $i.e.$ such that every $\braket{\mathcal{G}_j | \mathcal{G}_k} = \delta_{jk}$. Even if this is not the case, it is a well established property of quantum information that one can always find a basis for $\ket{\mathcal{G}_j}$ that makes a certain density operator have such form. Nonetheless, one arrives at:
\begin{equation}
\begin{aligned}
	\mathcal{Q}_{\mu\nu} (\rho_\thetab) = \sum_{\substack{n>k \in \supp \\ n\neq k}} 4\left[ \frac{(\lambda_n - \lambda_k)^2}{\lambda_n + \lambda_k} - (\lambda_n + \lambda_k) \right] \re{ a_{nk}^\mu a_{kn}^\nu} + \sum_{\substack{n\in\supp}}  \lambda_n \mathcal{Q}_{\nu\mu} (\ket{\mathcal{G}_n}) ,
\end{aligned}
\label{eq:qfinoisysimple}
\end{equation}
where $a_{nk}^\mu = \braket{\mathcal{G}_n | \boldsymbol{G}_\mu | \mathcal{G}_k}$. The entire derivation of these equations can be found in Appendix~\ref{appendix:qfimixedstates}. In the following sections we will assume that the encoding dynamics are generated by $Z$ operators, without loosing generality. If they are generated by other generator $G$, assume that dephasing noise is given by an equivalent quantum channel that has a mixture of error in the direction of $G$. For the bit-flip noise, assume the error is in the direction of some $G^\perp$. Again, we also assume that the noise is applied to the state, prior to evolution under the dynamics that encode the parameters $\thetab$.

\subsection{Dephasing Noise}

Dephasing noise is characterized by applying a quantum channel with a certain probability of a $Z$ error. Taking a GHZ state with $\vec{n}$ qubits distributed over the network, the usual encoding dynamics in the $Z$ direction and the usual target function $f(\thetab) = \vec{a}\cdot\vec{\thetab}$, such that $\vec{a}=\vec{n}$, $i.e.$ we are working with a perfectly private state. Since the GHZ state is made up of eigenvectors of the $Z$ operator, dephasing noise acts identically on whichever qubit of the GHZ state. Namely consider the two orthogonal states, differeing from applying one $Z$ operation on whichever qubit, $\ket{GHZ^\pm} = 1/\sqrt{2}(\ket{0}^{\otimes n} \pm \ket{1}^{\otimes n}) \equiv \ket{\mathcal{G}_0^\pm}$, where $n=\normone{\vec{n}}$. Applying a dephasing noise channel to a $\ket{GHZ^{\pm}}$ results in introducing a mixture between these two states:

\begin{equation}
\begin{aligned}
	\mathcal{D}_i (\rho, p) &= (1-p) \rho + p Z_i \rho Z_i, \qquad Z \ket{\mathcal{G}_0^\pm} = \ket{\mathcal{G}_0^\mp}.
\end{aligned}
\end{equation}

Applying this recursively for each qubit, we can find the following expression for an initial GHZ state :
\begin{equation}
\begin{aligned}
	\mathfrak{D} (\rho, \vec{p}) &= \mathcal{D}_{1}^{p_1} \circ \mathcal{D}_{2}^{p_2} \circ \dots \circ \mathcal{D}_{n}^{p_n} (\rho) , \\
	\mathfrak{D} (\ketbra{\mathcal{G}_0^\pm}{\mathcal{G}_0^\pm}, \vec{p}) &=   E(\vec{p} ) \ketbra{\mathcal{G}_0^\pm}{\mathcal{G}_0^\pm} + O(\vec{p}) \ketbra{\mathcal{G}_0^\mp}{\mathcal{G}_0^\mp},
\end{aligned}
\end{equation}
where the even, $E(\cdot)$, and ddd functions, $O(\cdot)$, correspond to the Kraus operators of the total channel with even and odd number of $Z$ operators, and were introduced in \cite{Bugalho2023} to calculate the fidelity of a GHZ state.
One can use Eq. \ref{eq:noisyqfi} to calculate the QFI of a mixed state where we have a mixture between $\ket{\mathcal{G}_0^\pm}$, given by $\rho = \lambda_+ \ketbra{\mathcal{G}_0^+}{\mathcal{G}_0^+} + \lambda_- \ketbra{\mathcal{G}_0^-}{\mathcal{G}_0^-}$. Moreover, using the $Z$ dynamics as $U_\thetab = \Motimes_{\mu} e^{i\theta_\mu\sum_{j\in\mu}Z_j} = \Motimes_{\mu} e^{i \theta_\mu \boldsymbol{Z}_\mu}$:
\begin{equation}
\begin{aligned}
	\mathcal{Q}_{\mu\nu} (\rho_\thetab) &= 4\frac{(\lambda_+ - \lambda_-)^2}{\lambda_+ + \lambda_-}  \braket{\mathcal{G}_0^+ | \boldsymbol{Z}_\mu | \mathcal{G}_0^-} \braket{ \mathcal{G}_0^- | \boldsymbol{Z}_\nu | \mathcal{G}_0^+} \\
	 &= 4(\lambda_+-\lambda_-)^2 a_\mu a_\nu ,
\end{aligned}
\end{equation}
which preserves the privacy. This is a consequence of the dynamics being generated by the same operation as the error, namely by $Z$ operators on each qubit. Moreover, if $\vec{p}$ is uniform, as in every qubit dephases at the same rate, we get that $\lambda_+ - \lambda_- = (1-2p)^n$, where $n$ is the number of qubits suffering dephasing.

\subsection{Bit-flip Noise}

Bit-flip noise on the other hand is characterized by applying a quantum channel with a certain probability of $X$ error. Under the same estimation scenario as before, we will arrive at the following density matrix:

\begin{equation}
\begin{aligned}	
	\rho &= \lambda_0 \ketbra{\mathcal{G}_0^+}{\mathcal{G}_0^+} + \sum_{i=1}^{2^{n-1}-1} \lambda_i \ketbra{\mathcal{G}_i^+}{\mathcal{G}_i^+}, 
\end{aligned}
\end{equation}
where $\ket{\mathcal{G}_m} =1/\sqrt{2} ( \ket{m} +  \ket{\overline{m}} ) $ and $0\leq m \leq 2^{(n-1)}-1$ to prevent repetition of states, given the invariance of $m\rightarrow \overline{m}$. Taking the derivation of the QFI in Eq.~\ref{eq:qfinoisysimple} we get that all terms $a_{nk}^\mu$ are going to be zero cause the dynamics are generated by $Z$ operators, and the errors are generated by $X$ operators. This way, no operator containing only $\{\mathbb{1},Z\}$ terms will ever take one $\ket{\mathcal{G}_j} $ to another $\ket{\mathcal{G}_k} $, since this map is generated by some operator containing only $\{\mathbb{1},X\}$ terms. This results in the QFI being given by:

\begin{equation}
\begin{aligned}	
    \boldsymbol{\mathcal{Q}} (\rho_\thetab) &= 4 \sum_{j=0}^{2^{n-1}-1} \lambda_j \vec{h}_{\boldsymbol{\mathcal{N}}}^* (j) \vec{h}_{\boldsymbol{\mathcal{N}}}^* (j)^T  ,\\
    \lambda_j &= g(\vec{p},j) + g(\vec{p},\overline{j}),
\end{aligned}
\label{eq:bitflipnoise}
\end{equation}
where we introduce the function that maps a vector of reals and a bitstring into a real number, namely:
\begin{equation}
\begin{aligned}	
    g: \mathbb{R}^k \times \mathbb{F}_2^k &\longrightarrow \mathbb{R} \\
    g(\vec{p},j) &\longmapsto \prod_{i=1}^k p_i^{j_i} (1-p_i)^{1-j_i}.
\end{aligned}
\end{equation}

One can observe in Eq.~\ref{eq:bitflipnoise}, complete privacy is lost as soon as one qubit suffers a bit-flip. This is due to the fact that now the generators of the dynamics are $Z$ operators and the errors only contain $X$ operators. However, one can still make a statement using the fifth property of the privacy measure: if the noise can be maintained under a certain threshold, then one can bound the loss of privacy. Equivalently, the amount of error will bound the amount of information leaked.

\subsection{Depolarizing Noise}

Depolarizing noise can be seen as a very symmetric noise, as it decoheres in every direction with identical probabilities. In particular, taking a GHZ state and applying a depolarizing channel in each qubit, one arrives at a state which is very similar to a generalized Werner state \cite{Werner1989}, apart from the identity not really being an identity matrix, but a diagonal matrix. Nonetheless, using the description of the depolarizing channel as in \cite{Bugalho2023}:

\begin{equation}
\begin{aligned}	
    \mathcal{D}_i^p(\rho) & =p \rho+\frac{1-p}{3}\left(X_i \rho X_i^{\dagger}+Y_i \rho Y_i^{\dagger}+ Z_i \rho Z_i^{\dagger}\right) \\
    & =\frac{1+2 p}{3} \rho+\frac{2(1-p)}{3} \Lambda_i\left(Y_i \rho Y_i^{\dagger}\right).
\end{aligned}
\end{equation}

The density matrix can be expressed as:
\begin{equation}
\begin{aligned}	
	\rho &= \lambda_0^+ \ketbra{\mathcal{G}_0^+}{\mathcal{G}_0^+} + \lambda_0^- \ketbra{\mathcal{G}_0^-}{\mathcal{G}_0^-} + \sum_{i=1}^{2^{n}-2} \lambda_i \ketbra{i}{i} ,
\end{aligned}
\end{equation}
since $\braket{i|j} = \braket{i|\mathcal{G}_0^\pm}=0$ for $i\neq 0,2^n-1$, and, similar to before, the dynamics are in $Z$, all the terms $a_{0^\pm j}^\mu = 0$. Moreover, the QFI of each of the $\ketbra{i}{i}$ with $i>0,$ is zero, as no information can be obtained from states which are just a binary string. This allows us to write the QFI matrix as:
\begin{equation}
\begin{aligned}	
    \boldsymbol{\mathcal{Q}} (\rho_\thetab) &= 4\frac{(\lambda_0^+ - \lambda_0^-)^2}{\lambda_0^+ + \lambda_0^-} \vec{a} \vec{a}^T + \sum_{j=1}^{2^{n}-2} \lambda_j \boldsymbol{\mathcal{Q}}(\ket{j}) \\
    &= 4\frac{(\lambda_0^+ - \lambda_0^-)^2}{\lambda_0^+ + \lambda_0^-} \vec{a} \vec{a}^T ,
\end{aligned}
\end{equation}
where $\lambda_0^\pm$ can be found be solving the following set of equations:
\begin{equation}
\begin{cases}	
    \lambda_0^+ + \lambda_0^- = g((1+2\vec{p})/3,0) + g(2(1-\vec{p})/3,0) \\
    \lambda_0^+ - \lambda_0^- = g((4\vec{p}-1)/3,0).
\end{cases}
\end{equation}

Importantly, this type of noise again preserves privacy. Unlike previous errors, this error contains both the same operators as in the encoding dynamics and orthogonal to the ones generating the dynamics. The privacy preservation in this case comes from the symmetry of this channel acting on the GHZ state, which is also highly symmetric.

\subsection{Amplitude-Damping Noise}

Amplitude-damping noise can be seen as a consequence of spontaneous emission of excited photons, in a way that the excited states (in our case $\ket{1}$), with some probability, $p_i$ per qubit, go to the ground state (in our case $\ket{0}$). Taking a GHZ state and applying an amplitude-damping channel in each qubit, one arrives at something very similar to the depolarizing channel, with some slight changes:

\begin{equation}
\begin{aligned}	
	\rho &= \lambda_0^+ \ketbra{\mathcal{G}_0^+}{\mathcal{G}_0^+} + \lambda_0^- \ketbra{\mathcal{G}_0^-}{\mathcal{G}_0^-} + \sum_{i=0}^{2^{n}-1} \lambda_i \ketbra{i}{i} \\
    &= \lambda_0 \ketbra{0}{0} + \lambda_0^+ \ketbra{\mathcal{G}_0^+}{\mathcal{G}_0^+} + \lambda_0^- \ketbra{\mathcal{G}_0^-}{\mathcal{G}_0^-} + \lambda_1 \ketbra{1}{1} + \sum_{i=1}^{2^{n}-2} \lambda_i \ketbra{i}{i}.
\end{aligned}
\label{eq:amplitudedamping}
\end{equation}
One can diagonalize the system into the following:
\begin{equation}
\begin{aligned}	
	\rho &= \tilde{\lambda}_0^+ \ketbra{\tilde{\mathcal{G}}_0^+}{\tilde{\mathcal{G}}_0^+} + \tilde{\lambda}_0^- \ketbra{\tilde{\mathcal{G}}_0^-}{\tilde{\mathcal{G}}_0^-} + \sum_{i=1}^{2^{n}-2} \lambda_i \ketbra{i}{i} .
\end{aligned}
\end{equation}
The QFI can then be obtained by:
\begin{equation}
\begin{aligned}	
    \boldsymbol{\mathcal{Q}} (\rho_\thetab) &= 4\frac{(\tilde{\lambda}_0^+ - \tilde{\lambda}_0^-)^2}{\tilde{\lambda}_0^+ + \tilde{\lambda}_0^-} \left|\braket{ \tilde{\mathcal{G}}_0^+|Z| \tilde{\mathcal{G}}_0^- } \right|^2 \vec{a} \vec{a}^T ,
\end{aligned}
\end{equation}
where $\tilde{\lambda}_0^\pm$ and $\tilde{\mathcal{G}}_0^\pm$ can be found from the solution of the following eigensystem, written in the basis $\{ \ket{0}^{\otimes n} \equiv \ket{0},\ket{1}^{\otimes n}\equiv \ket{1}\}$:
\begin{equation}
\begin{pmatrix}	
    a & b \\
    b & c
\end{pmatrix} =
\begin{pmatrix}	
    1 + \prod_i p_i & \prod_i \sqrt{1-p_i} \\
    \prod_i \sqrt{1-p_i} & \prod_i (1-p_i)
\end{pmatrix} .
\end{equation}

Yet again, this noise still preserves privacy. The reason behind this comes again from the symmetric form of the GHZ state with respect to the Kraus operators of the amplitude damping channel, observed from Eq.~\ref{eq:amplitudedamping}.

\subsection{Particle Loss}

To check what happens to the state under particle loss, let us start with the simple case where we have $\vec{n}=\vec{a}$ resources. This means the only private state is given by the GHZ state, with the distribution of resources given by $\vec{a}$, in the regular setting. As it is known, the GHZ state has no resilience against particle loss, and all the information is lost upon particle loss. A GHZ state, $\ket{\mathcal{G}_0^+}$, after loosing whichever particle is transformed into:

\begin{equation}
\begin{aligned}
	\mathcal{E}_i (GHZ) &= \Tr_i \ketbra{\mathcal{G}_0^+}{\mathcal{G}_0^+} \\
	&=\braket{0_i \ketbra{\mathcal{G}_0^+}{\mathcal{G}_0^+} 0_i} + \braket{1_i \ketbra{\mathcal{G}_0^+}{\mathcal{G}_0^+} 1_i} \\
	&= \frac{1}{2} \left[ \ketbra{0_{\mathcal{N}\setminus i}}{0_{\mathcal{N}\setminus i}}  + \ketbra{1_{\mathcal{N}\setminus i}}{1_{\mathcal{N}\setminus i}} \right].
\end{aligned}
\end{equation}
Using again the equation for the noisy QFI, we get the QFI becomes $0$ for whichever parameter, as this state is invariant under the encoding dynamics.

Throughout the paper, we have defined families of private states when we have more ancilla qubits at our disposal, besides the minimum amount to create a private state. One can observe that there are states in these private families that show robustness against particle loss. To do this, suppose we have a state belonging to one of these families, $\mathcal{F}(\boldsymbol{\mathcal{N}},\vec{a},\vec{d})$:
\begin{equation}
\begin{aligned}
	\ket{\psi} &= \alpha \ket{\psi_0} + \beta \ket{\psi_1}, \qquad \ket{\psi_0} \in \mathcal{D}(\boldsymbol{\mathcal{N}},\vec{d}), \ket{\psi_1} \in \mathcal{D}(\boldsymbol{\mathcal{N}},\vec{a}+\vec{d})  .
 \label{eq:particlelossinitial}
\end{aligned}
\end{equation}

Let us first analyze what happens to the states in the private family $\mathcal{F}(\boldsymbol{\mathcal{N}},\vec{a},\vec{d})$ when a qubit is traced. Since we can choose whichever basis for the partial trace, let us measure the qubit $j$ in the $Z$ basis:
\begin{equation}
\begin{aligned}
	&\ket{\psi_0} \in \mathcal{D}(\boldsymbol{\mathcal{N}},\vec{d}) \implies \braket{0_j | \psi_0} \in \mathcal{D}(\boldsymbol{\mathcal{N}}\setminus j,\vec{d}) , \\
    &\ket{\psi_1} \in \mathcal{D}(\boldsymbol{\mathcal{N}},\vec{a} + \vec{d})  \implies \braket{0_j | \psi_1} \in  \mathcal{D}(\boldsymbol{\mathcal{N}}\setminus j,\vec{a}+\vec{d}) , \\
    &\hspace{2cm}\Downarrow\\[-.5ex]
    &\alpha' \braket{0_j | \psi_0} + \beta' \braket{0_j | \psi_1} \in \mathcal{F}(\boldsymbol{\mathcal{N}}\setminus j,\vec{a},\vec{d}).
\end{aligned}
\label{eq:loss0}
\end{equation}
This holds true if both $\braket{0_j | \psi_0}$ and $\braket{0_j | \psi_1}$ are larger then zero. If not, it will lead to a state with no information available. The same can be said if one measures the state $\ket{\psi}$ in $\ket{1}_j$:
\begin{equation}
\begin{aligned}
	&\ket{\psi_0} \in \mathcal{D}(\boldsymbol{\mathcal{N}},\vec{d}) \implies \braket{1_j | \psi_0} \in \mathcal{D}(\boldsymbol{\mathcal{N}}\setminus j,\vec{d}-\vec{e}_j) , \\
    &\ket{\psi_1} \in \mathcal{D}(\boldsymbol{\mathcal{N}},\vec{a} + \vec{d})  \implies \braket{1_j | \psi_1} \in  \mathcal{D}(\boldsymbol{\mathcal{N}}\setminus j,\vec{a}+\vec{d}-\vec{e}_j) , \\
    &\hspace{2cm}\Downarrow\\[-.5ex]
    &\alpha' \braket{1_j | \psi_0} + \beta' \braket{1_j | \psi_1} \in \mathcal{F}(\boldsymbol{\mathcal{N}}\setminus j,\vec{a},\vec{d} -\vec{e}_j),
\end{aligned}
\label{eq:loss1}
\end{equation}
where we used $\vec{e}_j$ to denote the vector that is one for the node $\mu$ such that qubit $j\in \mu$ and zero everywhere else.
It is intuitive to understand that there is a generalization for this at each qubit that is lost. The resilience against any particle loss can be made larger by increasing the amount of ancillas. The privacy is guaranteed as, after loosing each qubit, we obtain a mixture of orthogonal states, which either have information about $\vec{a}$, as they belong to a family of private states, or they do not have information at all. To prove this consider:

\begin{proposition}\label{prop:orthogonalfamilies}
Let $\ket{\psi} \in  \mathcal{F}(\boldsymbol{\mathcal{N}} ,\vec{a},\vec{d})$, $\ket{\phi} \in  \mathcal{F}(\boldsymbol{\mathcal{N}} ,\vec{a},\vec{d}')$.
\[
	\vec{d} \neq \vec{d}' \implies \braket{\psi|\phi} = 0
\]
\end{proposition}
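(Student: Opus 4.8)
The plan is to observe that each family $\mathcal{F}(\boldsymbol{\mathcal{N}},\vec{a},\vec{d})$ is supported on only two blocks of the computational basis and that different blocks are mutually orthogonal. First I would unpack the definitions: by Def.~\ref{def:privatefamily} and Def.~\ref{def:distdickestates}, $\ket{\psi}=\alpha\ket{u}+\beta\ket{v}$ with $\ket{u}\in\mathcal{D}(\boldsymbol{\mathcal{N}},\vec{d})$ and $\ket{v}\in\mathcal{D}(\boldsymbol{\mathcal{N}},\vec{a}+\vec{d})$, so $\ket{u}$ is a superposition of basis states $\ket{r}$ with $\vec{h}_{\boldsymbol{\mathcal{N}}}(r)=\vec{d}$ and $\ket{v}$ a superposition of basis states of weight $\vec{a}+\vec{d}$; likewise $\ket{\phi}=\alpha'\ket{u'}+\beta'\ket{v'}$ with weights $\vec{d}'$ and $\vec{a}+\vec{d}'$. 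The elementary fact I need is that $\mathcal{D}(\boldsymbol{\mathcal{N}},\vec{s})\perp\mathcal{D}(\boldsymbol{\mathcal{N}},\vec{s}')$ whenever $\vec{s}\neq\vec{s}'$, which is immediate since a bit string determines its vectorial Hamming weight, so the two subspaces are spanned by disjoint sets of orthonormal computational basis vectors. Expanding $\braket{\psi|\phi}$ into the four terms $\propto\braket{u|u'},\braket{u|v'},\braket{v|u'},\braket{v|v'}$, each term vanishes unless the corresponding pair of weight labels coincides.

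Next I would do the case analysis. The weight labels of $\ket{\psi}$ are $\{\vec{d},\vec{a}+\vec{d}\}$ and those of $\ket{\phi}$ are $\{\vec{d}',\vec{a}+\vec{d}'\}$. The match $\vec{d}=\vec{d}'$ is excluded by hypothesis, and $\vec{a}+\vec{d}=\vec{a}+\vec{d}'$ is equivalent to it; so the only way a surviving term can appear is through a cross match, $\vec{d}=\vec{a}+\vec{d}'$ or $\vec{d}'=\vec{a}+\vec{d}$, i.e.\ $\vec{d}-\vec{d}'=\pm\vec{a}$. I would rule this out using the admissibility bounds $\vec{0}\preceq\vec{d},\vec{d}'\preceq\vec{n}-\vec{a}$ from Def.~\ref{def:privatefamily}: if $\vec{d}=\vec{a}+\vec{d}'$ then $\vec{a}\preceq\vec{d}\preceq\vec{n}-\vec{a}$, hence $2\vec{a}\preceq\vec{n}$, contradicting the Minimal-plus-Ancilla regime in which the $\mathcal{F}$-families are the relevant private set ($\vec{n}\prec 2\vec{a}$ or $\vec{n}\not\preceq 2\vec{a}$). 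In the particle-loss use of the proposition the two ancilla vectors are $\vec{d}$ and $\vec{d}-\vec{e}_\mu$, which differ only by a single node unit, and $\vec{e}_\mu=\pm\vec{a}$ can only happen if the target collapses to one node's parameter, which $gcd(\vec{a})=1$ already excludes unless $\vec{a}=\vec{e}_\mu$. Hence no term survives and $\braket{\psi|\phi}=0$.

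The calculation itself is routine bookkeeping; the one genuinely load-bearing step is showing the cross labels cannot coincide, since otherwise the two families would literally share a nonzero vector. Making this airtight requires invoking the resource regime in which $\mathcal{F}(\boldsymbol{\mathcal{N}},\vec{a},\vec{d})$ is the relevant private set — equivalently, that there are not enough resources for two logical copies — or, more transparently, stating the proposition with the (then automatic) hypothesis $\vec{d}-\vec{d}'\neq\pm\vec{a}$. Everything else follows purely from the orthogonality of distinct vectorial-Hamming-weight sectors.
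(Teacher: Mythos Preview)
Your approach is the same as the paper's: both rest on the single observation that the sectors $\mathcal{D}(\boldsymbol{\mathcal{N}},\vec{s})$ for distinct $\vec{s}$ are spanned by disjoint sets of computational basis vectors and are therefore orthogonal. The paper's proof is literally one sentence to that effect.

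Where you differ is in rigor, not method. You correctly notice that orthogonality of the weight sectors by itself does not finish the argument, because the four weight labels $\{\vec{d},\vec{a}+\vec{d}\}$ and $\{\vec{d}',\vec{a}+\vec{d}'\}$ could overlap via $\vec{d}-\vec{d}'=\pm\vec{a}$, leaving a surviving cross term. The paper's one-line proof does not address this case at all; it is implicitly relying on the surrounding context (the Minimal-plus-Ancilla regime, and the specific $\vec{d}$ versus $\vec{d}-\vec{e}_\mu$ pairing that arises under single-qubit loss), where your computation $\vec{a}\preceq\vec{d}\preceq\vec{n}-\vec{a}\Rightarrow 2\vec{a}\preceq\vec{n}$ indeed rules out the overlap. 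So your added case analysis is not a different route but a genuine completion of the paper's sketch. One minor correction: $\gcd(\vec{a})=1$ does not by itself exclude $\vec{a}=\vec{e}_\mu$ (e.g.\ $\vec{a}=(1,0,\dots,0)$ has $\gcd=1$); what rules out the cross match in the particle-loss application is that it would force the target function to depend on a single node, a degenerate case outside the intended distributed setting, or more cleanly the regime constraint you already invoked.
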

\begin{proof}
The proof comes straightforward from realizing that distributed $s$-states are orthogonal if their generating strings $s$ have different vectorial Hamming-weights.
\end{proof}
\begin{proposition}\label{prop:spanfamily}
Let $\Psi = \{ \ket{\psi_1}, \dots, \ket{\psi_n} \}$ be a collection of states of a private family, $\ket{\psi_j} \in  \mathcal{F}(\boldsymbol{\mathcal{N}} ,\vec{a},\vec{d}) \ \forall \ j = 1,\dots,n$.
\[
	\ket{\Psi} = \sum_{j=1}^n \alpha_j \ket{\psi_j} \in \mathcal{F}(\boldsymbol{\mathcal{N}} ,\vec{a},\vec{d}) \ \forall \  \alpha_j \in \mathbb{C}: \sum_{j=1}^n |\alpha_j|^2
\]
\end{proposition}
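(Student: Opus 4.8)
The plan is to exploit the fact that $\mathcal{D}(\boldsymbol{\mathcal{N}},\vec{s})$ is simply the set of unit vectors of a fixed linear subspace, so that a superposition of states from $\mathcal{F}(\boldsymbol{\mathcal{N}},\vec{a},\vec{d})$ remains supported on the direct sum of two such subspaces, hence stays of the required form. First I would record the subspace picture: for a weight vector $\vec{s}$, set $V_{\vec{s}} := \mathsf{span}\{\ket{r} : \vec{h}_{\boldsymbol{\mathcal{N}}}(r)=\vec{s}\}$; by Def.~\ref{def:distdickestates}, $\mathcal{D}(\boldsymbol{\mathcal{N}},\vec{s})$ is exactly the set of normalized vectors of $V_{\vec{s}}$. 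Since $gcd(\vec{a})=1$ forces $\vec{a}\neq\vec{0}$, we have $\vec{d}\neq\vec{a}+\vec{d}$, so by the same observation behind Prop.~\ref{prop:orthogonalfamilies} (basis strings of distinct vectorial Hamming weight are orthogonal) the subspaces $V_{\vec{d}}$ and $V_{\vec{a}+\vec{d}}$ are orthogonal. Consequently $\mathcal{F}(\boldsymbol{\mathcal{N}},\vec{a},\vec{d})$ is precisely the set of unit vectors of $V_{\vec{d}}\oplus V_{\vec{a}+\vec{d}}$ whose two orthogonal components are both nonzero.

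Next I would expand each $\ket{\psi_j}=a_j\ket{u_j}+b_j\ket{v_j}$ with $\ket{u_j}\in\mathcal{D}(\boldsymbol{\mathcal{N}},\vec{d})$, $\ket{v_j}\in\mathcal{D}(\boldsymbol{\mathcal{N}},\vec{a}+\vec{d})$ and $a_j,b_j\in\mathbb{C}\setminus\{0\}$, and use linearity:
\[
\ket{\Psi}=\sum_{j}\alpha_j\ket{\psi_j}=\Big(\sum_j\alpha_j a_j\ket{u_j}\Big)+\Big(\sum_j\alpha_j b_j\ket{v_j}\Big),
\]
where the first bracket lies in $V_{\vec{d}}$ and the second in $V_{\vec{a}+\vec{d}}$. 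Writing these as $c\ket{u}$ and $e\ket{v}$ with $\ket{u}\in\mathcal{D}(\boldsymbol{\mathcal{N}},\vec{d})$ and $\ket{v}\in\mathcal{D}(\boldsymbol{\mathcal{N}},\vec{a}+\vec{d})$ normalized, orthogonality of the two subspaces gives $|c|^2+|e|^2=\braket{\Psi|\Psi}$, so after normalizing $\ket{\Psi}$ one is exactly in the form $c\ket{u}+e\ket{v}$ with $|c|^2+|e|^2=1$; whenever $c,e\neq 0$ this is by definition a member of $\mathcal{F}(\boldsymbol{\mathcal{N}},\vec{a},\vec{d})$, and its privacy is then immediate from Prop.~\ref{prop:privatefamily}.

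The single point that needs care — and essentially the only substantive case — is when one of the two brackets vanishes (say the $\ket{u_j}$'s cancel). Then $\ket{\Psi}\in V_{\vec{a}+\vec{d}}$ is a lone distributed $s$-state, a boundary point of $\mathcal{F}$ rather than an interior element; but such a state is an eigenstate of $U_{\thetab}$ up to a global phase, so its QFI vanishes, which is harmless in the robustness application where this proposition is invoked. I would therefore record the conclusion with this caveat (equivalently: $\ket{\Psi}\in\mathcal{F}(\boldsymbol{\mathcal{N}},\vec{a},\vec{d})$, or else $\ket{\Psi}$ carries no information). Apart from this, the argument is pure bookkeeping with the orthogonal decomposition $V_{\vec{d}}\oplus V_{\vec{a}+\vec{d}}$ and requires no inequality or number-theoretic input beyond what has already been established.
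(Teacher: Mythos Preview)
Your proposal is correct and takes essentially the same approach as the paper: both argue that since each $\ket{\psi_j}$ is supported only on basis strings in $\mathcal{D}(\boldsymbol{\mathcal{N}},\vec{d})$ and $\mathcal{D}(\boldsymbol{\mathcal{N}},\vec{a}+\vec{d})$, so is any linear combination. You are simply more explicit about the underlying subspace decomposition and about the degenerate case where one component cancels, which the paper's one-line proof leaves implicit.
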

\begin{proof}
The proof comes again straightforward as $\ket{\Psi}$ will only contain vectors belonging to $\mathcal{D}(\boldsymbol{\mathcal{N}},\vec{d})$ and $\mathcal{D}(\boldsymbol{\mathcal{N}},\vec{a}+\vec{d})$.
\end{proof}

Using this we are now able to prove the following theorem:
\begin{theorem}
All the private states in the region of minimal plus ancilla amount of resources are able to either remain completely private under $\vec{e}$-qubit loss, if the amount of ancillas are at least $\vec{b} \succeq \vec{e}$, or provide no information at all.
\end{theorem}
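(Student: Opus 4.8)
The plan is to reduce to the $\vec{a}$-Private-Ancilla families, iterate the single-qubit-loss identities of Eqs.~\ref{eq:loss0}--\ref{eq:loss1}, and then read off the block structure of the resulting mixed-state QFI from Eq.~\ref{eq:noisyqfi}. First I would invoke Thm.~\ref{thm:onecopyplusancillaarb}: up to a local unitary (which only redefines the single-qubit generators, so I may keep the encoding $Z$-generated), every private state in this zone is $\ket{\psi}=\alpha\ket{\psi_0}+\beta\ket{\psi_1}$ with $\ket{\psi_0}\in\mathcal{D}(\boldsymbol{\mathcal{N}},\vec{d})$, $\ket{\psi_1}\in\mathcal{D}(\boldsymbol{\mathcal{N}},\vec{a}+\vec{d})$ and $\vec{0}\preceq\vec{d}\preceq\vec{b}=\vec{n}-\vec{a}$. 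Losing a qubit is then a partial trace in the computational basis, and losing the set $E$ with per-node count $\vec{e}$ yields $\rho'=\sum_{\vec{x}\in\{0,1\}^{E}}\ketbra{\phi_{\vec{x}}}{\phi_{\vec{x}}}$ with $\ket{\phi_{\vec{x}}}=(\bra{\vec{x}}\otimes\mathbb{1})\ket{\psi}$, a diagonal sum over the traced qubits.

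Next I would iterate Eqs.~\ref{eq:loss0}--\ref{eq:loss1} to show that if $\vec{x}$ has per-node Hamming weight $\vec{f}\preceq\vec{e}$, then $\ket{\phi_{\vec{x}}}$ lies in $V_{\vec{f}}:=\mathrm{span}\big(\mathcal{D}(\boldsymbol{\mathcal{N}}\setminus E,\vec{d}-\vec{f})\cup\mathcal{D}(\boldsymbol{\mathcal{N}}\setminus E,\vec{a}+\vec{d}-\vec{f})\big)$, discarding any shell whose weight vector leaves the box $\vec{0}\preceq\cdot\preceq\vec{n}-\vec{e}$. The crucial claim is that distinct $V_{\vec{f}}$ are mutually orthogonal, so that $\rho'=\bigoplus_{\vec{f}}\rho'_{\vec{f}}$ is block diagonal. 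By Prop.~\ref{prop:orthogonalfamilies} two distributed-$s$-states are orthogonal unless their strings share a vectorial Hamming weight, so orthogonality can fail only if $\vec{f}-\vec{f}'=\pm\vec{a}$; but in the minimal-plus-ancilla zone $2\vec{a}\not\preceq\vec{n}$, i.e.\ $\vec{b}\not\succeq\vec{a}$, so there is a coordinate $\mu$ with $e_\mu\le b_\mu<a_\mu$, and $|f_\mu-f'_\mu|\le e_\mu<a_\mu$ rules this out. This is precisely the step where both hypotheses --- membership in the minimal-plus-ancilla zone and $\vec{b}\succeq\vec{e}$ --- are used.

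With the block structure in hand I would diagonalize $\rho'$ block by block and feed it into Eq.~\ref{eq:noisyqfi}. For a block where both shells survive, write $V_{\vec{f}}=\mathcal{A}_{\vec{f}}\oplus\mathcal{B}_{\vec{f}}$ for the $(\vec{d}-\vec{f})$- and $(\vec{a}+\vec{d}-\vec{f})$-shell subspaces; the key observation is that $\boldsymbol{Z}_\mu=\sum_{i\in\mu}Z_i$ acts as a scalar on each Hamming shell, hence on $V_{\vec{f}}$ equals $c^{\mu}_{\mathcal{B}}\,\mathbb{1}+2a_\mu P_{\mathcal{A}_{\vec{f}}}$ with $c^{\mu}_{\mathcal{A}}-c^{\mu}_{\mathcal{B}}=2a_\mu$. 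Then for eigenvectors in the same block, $a^{\mu}_{nk}=\bra{\mathcal{G}_n}\boldsymbol{Z}_\mu\ket{\mathcal{G}_k}=2a_\mu\bra{\mathcal{G}_n}P_{\mathcal{A}_{\vec{f}}}\ket{\mathcal{G}_k}$ for $n\ne k$, so the coherence terms of Eq.~\ref{eq:noisyqfi} are $\propto a_\mu a_\nu$; cross-block terms vanish because $\boldsymbol{Z}_\mu$ preserves each $V_{\vec{f}}$; and each eigenvector of $\rho'_{\vec{f}}$ is either in $\mathcal{F}(\boldsymbol{\mathcal{N}}\setminus E,\vec{a},\vec{d}-\vec{f})$ --- so its QFI is $\propto\vec{a}\vec{a}^{T}$ by Prop.~\ref{prop:privatefamily} --- or a single-shell distributed-$s$-state with zero QFI (single-shell blocks contribute nothing for the same reason). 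Summing over blocks gives $\boldsymbol{\mathcal{Q}}(\rho'_{\thetab})=\kappa\,\vec{a}\vec{a}^{T}$ with $\kappa\ge 0$, which is exactly the dichotomy: $\mathcal{P}=1$ when $\kappa>0$, and no information survives when $\kappa=0$.

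I expect the main obstacle to be the second step: tracking, after an arbitrary multi-qubit loss, precisely which pairs of Hamming shells can coincide, and confirming that the zone hypothesis rules out the single dangerous collision $\vec{f}-\vec{f}'=\pm\vec{a}$ that would otherwise produce coherences transverse to $\vec{a}$. Everything after that --- the scalar action of $\boldsymbol{Z}_\mu$ on shells and hence the rank-one form of the noisy QFI --- is routine bookkeeping.
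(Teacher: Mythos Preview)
Your proposal is correct and follows essentially the same route as the paper: reduce to the $\vec{a}$-Private-Ancilla families via Thm.~\ref{thm:onecopyplusancillaarb}, iterate the single-qubit-loss identities Eqs.~\ref{eq:loss0}--\ref{eq:loss1}, group the resulting mixture into mutually orthogonal family blocks (Props.~\ref{prop:orthogonalfamilies},~\ref{prop:spanfamily}), and read off the mixed-state QFI. Your treatment is in fact more explicit than the paper's in the two places you flag as the obstacle --- you spell out the collision case $\vec{f}-\vec{f}'=\pm\vec{a}$ and rule it out via the zone hypothesis $\vec{b}\not\succeq\vec{a}$, and you compute the coherence terms $a^{\mu}_{nk}\propto a_\mu$ directly --- whereas the paper absorbs both into a brief appeal to Prop.~\ref{prop:orthogonalfamilies} and Eq.~\ref{eq:qfinoisysimple}.
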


\begin{proof}
Any state belonging to the family of private states on $\mathcal{F}(\boldsymbol{\mathcal{N}},\vec{a},\vec{d})$, with associated vector of resources $\vec{n}=\vec{a}+\vec{b}$ already verifies privacy, by construction (Prop.~\ref{prop:privatefamily}). The evolution of these states under loss can be seen from Eqs.~\ref{eq:loss0},\ref{eq:loss1}. Note that after the partial trace of each qubit we will have a mixture of states that belong to private families. We can group them according to the families they belong to, ensuring the states in the mixture are orthogonal and then using Prop.~\ref{prop:orthogonalfamilies} and Eq.~\ref{eq:qfinoisysimple} to verify privacy. In the case they belong to the same family, diagonalizing them makes, by Prop.~\ref{prop:spanfamily}, new orthogonal states belonging to the same family. Nonetheless in this last case by analyzing Eq.~\ref{eq:qfinoisysimple} we also verify that this remains private. Applying the same procedure as in these equations to all qubits $j\in\vec{e}$, we get that the final eigenstates of the density matrix all belong to either private families, or are states with zero information at hand. 
\end{proof}

\subsection{Generalizing Results}

When addressing the most general case, we use the private logical states introduced in Def.~\ref{def:privatelogicalstates}. For this, there is a condition to be able to use noise resiliency strategies already reproduced for single-parameter quantum metrology, and achieve the same outcome for multiparameter private metrology. The property is the following:

\begin{condition}
We say a logical qubit basis $\{\ket{0_L},\ket{1_L}\}$ is private against an error generated by a quantum channel $\mathcal{E}(\cdot)$ if:
\begin{equation}
    \forall \alpha, \beta \in \mathcal{C}: \ket{\psi} = \alpha \ket{0_L} + \beta \ket{1_L}, \mathcal{E}(\ketbra{\psi}{\psi}) = \sum_i \lambda_i \ketbra{\psi_i}{\psi_i},
\end{equation}
such that $\braket{\psi_i | \psi_j} = \delta_{ij}$ and $\exists \gamma : \braket{\psi_i | \boldsymbol{G}_\mu | \psi_j} = \gamma a_\mu$, for all $i,j$. 
\end{condition}

This allows us to say that, if a logical qubit basis is private against an error, then any state generated in that basis will also be private, under that same error. The consequence follows trivially from Eq.~\ref{eq:qfinoisysimple}. This is already the case for the states in the minimal privacy zone under dephasing, depolarising and amplitude-damping noises and for minimal plus ancillas private states under particle loss.

\section{Discussion of the Results}

We have condensed the results of this work for the existence of private families of states with the corresponding assumptions and theorems in Table~\ref{tab:results}.

\begin{table*}
\begin{tabular}{c|c|c|c|c}
Target & Hamiltonian & Resources & Private Family & Theorem \\ \hline \hline
$\vec{a} \in \mathbb{R}^k$ & Controlable (Fig.~\ref{fig:circuit} d)) & $\vec{n}=\vec{1}$ & GHZ State & Thm.~\ref{thm:oneprivacyarb} \\ \hline
\multirow{4}{*}{ $\vec{a} \in \mathbb{N}^k$ } & \multirow{4}{*}{ Separable (Fig.~\ref{fig:circuit} a)) } & Zone (I) & None & Thms.~\ref{thm:noprivacystab}, \ref{thm:noprivacyarb} \\
& & Zone (II) & GHZ State & Thms.~\ref{thm:oneprivacystab}, \ref{thm:oneprivacyarb} \\
& & Zone (III) & Ancilla-Private States & Thm.~\ref{thm:onecopyplusancillaarb} \\
& & Zone (IV) & Private Logical States & Thm.~\ref{thm:multiplecopiesprivacy} \\ \hline
$\vec{a} \in \mathcal{O}_-^2$ & General (Fig.~\ref{fig:circuit} b)) & Depends on $H$ & $\exists$ Private State & Thm.~\ref{thm:existenceprivate} \\
\end{tabular}
\caption{Main results regarding the existence of private states under the assumptions on the encoding dynamics, the target functions and the resources utilized.}
\label{tab:results}
\end{table*}

Throughout this paper we presented several families of states, which hold an important property for distributed sensing scenarios. Privacy comes as a natural consequence of choosing a state which makes available information about one single function of parameters. In all the results for pure states, the decomposition of the QFI in Eqs.~\ref{eq:qfiarbitrarymain} and ~\ref{eq:generalQFI} was crucial, and simplified the problem. Given its generality, one could potentially use this as means to efficiently calculate the QFI matrix, by following a similar procedure, and grouping states according to their vectorial Hamming weight.

The first no-go result is that, if one is not able to control the encoding dynamics, there is a minimum amount of resources such that one can estimate privately the target linear function, and this naturally depends on the Hamiltonian of the encoding dynamics. In particular, for separable Hamiltonians, with integer resources, one is only able to measure a linear function that is proportional to an integer combination of the parameters. The second result still for this class of Hamiltonians is, if we have exactly the amount of required resources, there is only one family of states that is able to estimate the target function privately, and this family is SLOCC equivalent to a GHZ state. Moreover, the GHZ state is the one for which the QFI is maximized, and therefore, the precision is also maximized. This result also affects the controllable encoding dynamics case, implying the privacy of a GHZ state with arbitrary number of qubit in each node. If one has $n_\mu$ on mode $\mu$, and control over a linear parameter, $t_\mu$, locally, then as long as one can make $n_\mu t_\mu = a_\mu$ for all $\mu$, then one finds a private state.

Another important result is that, by adding extra qubits and superpositions of states that differ by local permutations, we can gain some redundancy, allowing us to create private states that maintain privacy and information even after qubit loss. This has important consequences for practical applications in near-term experiments. 

Finally, we provide methods to analyze when multiples of the vector of resources are available. In this scenario, the scaling of the precision with the amount of resources is a pertinent question. In most cases, this is identical to dealing with the single parameter scenario by choosing logical qubits that are constituted by the basis of private states. 

Moving to more general Hamiltonians, not necessarily separable, but still local in the sense that each node can only encode their own parameter, we found a structure on the space of available private functions and a way to generate the private states. However, in this scenario, it requires knowledge about the eigenstates of the Hamiltonian which is not a trivial problem, although there are some strategies to find them \cite{Santagati2018,Meister2022}. Most importantly, this approach for general Hamiltonians provides an interesting insight to the structure of finding states for sensing. In particular it relates the maximum precision at which one can estimate a function employing a quantum sensing scheme and the privacy of such estimation, as they walk hand in hand. This was expected as the best local strategy is often to use locally entangled states, where the amount of information is the same, but scattered across each local parameter. When analyzing the best strategy for a given target function, privacy can also be perceived as an intuitive idea of concentrating all the information along a target function. 

Overall, these results obtained clarify exactly which states one should build in order to perform secure and private from construction protocols for distributed quantum sensing. The continuity the privacy measure with respect to the QFI information, allows one to infer and create protocols that provide bounds on the QFI to establish bounds on the privacy of a state. These bounds in the privacy can then be converted into security concerns by an appropriate protocol, as we have discussed, that uses state certification rounds intercalated with sensing rounds.

This work opens up a different way to look at sensing protocols, relating resources and hamiltonian dynamics to information directly. While the framework has been consistently that of quantum sensing, where measurements are repeated and from them estimators are created, one could potentially apply our results to distributed quantum computation. In particular, we dealt with only linear functions, which seems to be a consequence of the Hamiltonian dynamics not being able to encode arbitrary functions of parameters. A possible direction of future work could be coming up with protocols to access non-linear functions.

Another possible direction would be to find all of the private states when multiples of the vector of resources are available. Since technically from the definition of a Private-Ancilla state, these are not limited to $\vec{n} \preceq 2\vec{a}$, it is possible that finding all states involves constructing a hierarchy of private states by the logical encoding formalism taking the logical qubits inside each of the $(m\vec{a})$-Private-Ancilla states, $m=1,2,\dots$, and verify possible repetitions of families by playing with different encodings.

Finally, the structure of the private orthotope, and the fact that it is an orthotope, might hint at some more fundamental results. It would be interesting to understand exactly are the implications of this, outside of the privacy in the sensing scenario, and into an information perspective by using the intuition of concentration of information.

\section*{Acknowledgements}

The authors acknowledge the support from the EU Quantum Flagship project QIA (101102140) and France 2030 under the French National Research Agency projects HQI ANR-22-PNCQ-0002 and the PEPR integrated project EPiQ ANR-22-PETQ-0007. L.B. and Y.O. thank the support from Funda\c c\~ao para a Ci\^encia e a Tecnologia (FCT, Portugal), namely through project UIDB/04540/2020. L.B. acknowledges the support of FCT through scholarship BD/05268/2021.

\clearpage

\appendix
\title

\section{Unitary Equivalence of Encoding Dynamics \label{appendix:unitaryequivalence}}

In this appendix we go over the proofs of the interplay between the dynamics and the initial state. Starting by restating the first theorem:

\unitaryequiv*

\begin{proof}
Since $G$ and $G'$ are generators for single qubit unitary operators, then they can be decomposed into a linear combination of Pauli operators. Let them be $G = \vec{a} \cdot \vec{\sigma}$ and $G' = \vec{b} \cdot \vec{\sigma}$, such that $|\vec{a}|=|\vec{b}|=1$. Using the expression for the adjoint action on the Pauli vector
\begin{equation}
	R_{\vec{n}} (-\alpha) \ \vec{\sigma} \  R_{\vec{n}} (\alpha) = \vec{\sigma} \cos(\alpha) + \vec{n} \times \vec{\sigma} \sin(\alpha) + \vec{n} \ \vec{n}\cdot \vec{\sigma} (1-\cos(\alpha))
\end{equation} 
Taking the inner product with $\vec{a}$ we get that:
\begin{align*}
	R_{\vec{n}} (-\alpha) \ \vec{a} \cdot \vec{\sigma} \  R_{\vec{n}} (\alpha) &= \vec{a}\cdot\vec{\sigma} \cos(\alpha) + \vec{a} \cdot (\vec{n} \times \vec{\sigma}) \sin(\alpha)  + \vec{a}\cdot \vec{n} \ \vec{n}\cdot \vec{\sigma} (1-\cos(\alpha)) \\
	&= \left[ \vec{a} \cos(\alpha) + \vec{a}\times \vec{n} \sin(\alpha) + \vec{n} \ \vec{a}\cdot\vec{n} (1-\cos(\alpha))\right] \cdot \vec{\sigma} \\
	&= \left[ \mathbb{1} \cos(\alpha) + \vec{n} \vec{n}^T (1-\cos(\alpha)) + [\vec{n}]_\times \sin(\alpha) \right] \vec{a} \cdot \vec{\sigma} \\
	&= (R_{\vec{n}}^{xyz} (\alpha) \vec{a}) \cdot \vec{\sigma} \\
	&= \vec{b} \cdot \vec{\sigma},
\end{align*}
where $R_{\vec{n}}^{xyz} (\alpha)$ is the rotation in $\mathbb{R}^3$ around the axis $\vec{n}$, by an angle of $\alpha$. This in turn means that we can always find values for $\vec{n}$ and $\alpha$ such that one can transform the vector $\vec{a}$ into vector $\vec{b}$. Setting $W = R_{\vec{n}} (\alpha)$ we can always find a unitary transformation that transforms one generator into another.
\end{proof}

Which consequently leads to:

\unitaryequivalence*

\begin{proof}
Follows trivially from previous theorem:
\begin{align*}
	W^\dagger U_\theta W &=  W^\dagger e^{-i\theta G} W \\
	&= W^\dagger \left[ \mathbb{1} + \frac{-i\theta}{1!} G + \frac{(-i\theta)^2}{2!} G^2 +  ... \right] W \\
	&=W^\dagger W + \frac{-i\theta}{1!} W^\dagger G W + \frac{(-i\theta)^2}{2!} W^\dagger G^2 W  + ... \\
	&= \mathbb{1} + \frac{-i\theta}{1!} W^\dagger G W + \frac{(-i\theta)^2}{2!} W^\dagger G W W^\dagger G  W  + ... \\
	&= \mathbb{1} + \frac{-i\theta}{1!} G' + \frac{(-i\theta)^2}{2!} G'^2  + ... \\
	&= e^{-i\theta G'}.
\end{align*}
\end{proof}

\section{Privacy Measure \label{appendix:privacymeasure}} 
In this appendix we look at the measure of privacy introduced in Def.~\ref{def:privacymeasure}, and prove the properties it should follow, introduced in the main text. Looking at the second and third properties, they should come by construction, as we verify below. For the last two, it is not obvious, but we still get them:

\begin{enumerate}
\setcounter{enumi}{1}
	\item From the fact that $\mathcal{Q}$ is a positive semi-definite matrix, we get that $\mathcal{P} (\mathcal{Q},\vec{a}) \leq 1$. The minimum value follows from:
		\begin{equation}
		\begin{aligned}
			\frac{ \vec{a}^T \mathcal{Q} \vec{a}}{\Tr \mathcal{Q}} & =   \frac{ \vec{a}^T \sum_i \beta_i \vec{b}_i \vec{b}_i^T \vec{a}}{\Tr \mathcal{Q}}  \\
			&=  \frac{ \sum_i \beta_i ( \vec{a} \cdot \vec{b}_i )^2 }{\Tr \left[ \sum_i \beta_i \vec{b}_i \vec{b}_i^T \right] } \\
			&=  \frac{ \sum_i \beta_i ( \vec{a} \cdot \vec{b}_i )^2 }{ \sum_i \beta_i  } \\
			&\leq 1,
		\end{aligned}
		\end{equation}
		as $\vec{a} \cdot \vec{b}_i \leq 1$ always, choosing normalized vectors $\vec{b}_i$.
	\item The fact that only a QFI of the type $\mathcal{Q} = \alpha \vec{a}\vec{a}^T$ saturates the bounds comes trivially from before, as $\vec{a}\cdot \vec{b}_i = 1$ iff $\vec{b}_i = \vec{a}$. Moreover, even if the QFI does not come in a diagonal form, we can still show that:
		\begin{equation}
		\begin{aligned}
			\frac{ \vec{a}^T \mathcal{Q} \vec{a}}{\Tr \mathcal{Q}} & =   \frac{ \vec{a}^T \sum_{ij} \beta_{ij} \vec{b}_i \vec{b}_j^T \vec{a}}{\Tr \mathcal{Q}}  \\
			&=  \frac{ \vec{a}^T \sum_{i} \tilde{\beta}_{i} \vec{\tilde{b}}_i \vec{\tilde{b}}_i^T \vec{a}}{\Tr \mathcal{Q}} \\
			&=  \frac{ \sum_i \tilde{\beta}_i ( \vec{a} \cdot \vec{\tilde{b}}_i )^2 }{ \sum_i \tilde{\beta}_i  } \\
			&\leq 1,
		\end{aligned}
		\end{equation}
		as the matrix of the QFI $\mathcal{Q}$ is always real and symmetric, so there is always a diagonalization for it. Moreover, since it is positive semi-definite matrix all $\tilde{\beta}_i$ are positive. The only way the bound is saturated is if $\exists i: \vec{\tilde{b}}_i = \vec{a}$ and $\tilde{\beta}_j = 0 \forall j\neq i$.
	\item 
		\begin{equation}
		\begin{aligned}
			\mathcal{P} (B\mathcal{Q}B^{T},B\vec{a}) &=  \frac{\Tr B\mathcal{Q}B^T - \vec{a}^T B^T B\mathcal{Q}B^T B \vec{a}}{\Tr B\mathcal{Q}B^T} \\
			&=\frac{\Tr \mathcal{Q}B^TB - \vec{a}^T \mathcal{Q} \vec{a}}{\Tr \mathcal{Q}B^TB} \\
			&=\frac{\Tr \mathcal{Q} - \vec{a}^T \mathcal{Q} \vec{a}}{\Tr \mathcal{Q}} \\
			&= \mathcal{P} (\mathcal{Q},\vec{a}) ,
		\end{aligned}
		\end{equation}
		where we made the assumption that $B$ is a orthonormal change of basis matrix, $i.e.$ $BB^T=\mathbb{1}$. In case it is not, the two measures are not equivalent, but they still present a valid skewed measure.
	\item 
		\begin{equation}
		\begin{aligned}
			\mathcal{P} (\mathcal{Q}+\epsilon A,\vec{a}) - \mathcal{P} (\mathcal{Q},\vec{a}) &=  \frac{\Tr \left[\mathcal{Q}+\epsilon A\right] - \vec{a}^T (\mathcal{Q}+\epsilon A) \vec{a}}{\Tr \left[\mathcal{Q}+\epsilon A\right]}  -\frac{\Tr \mathcal{Q} - \vec{a}^T \mathcal{Q} \vec{a}}{\Tr \mathcal{Q}} \\
			&= 1-\frac{\vec{a}(\mathcal{Q}+\epsilon A)\vec{a}^T}{\Tr \mathcal{Q}+\epsilon \Tr A} - \left(1-\frac{\vec{a}\mathcal{Q}\vec{a}^T}{\Tr \mathcal{Q}}  \right) \\
			&=  \frac{\vec{a}\mathcal{Q}\vec{a}^T}{\Tr \mathcal{Q}} -  \frac{\vec{a}\mathcal{Q}\vec{a}^T}{\Tr \mathcal{Q}+\epsilon \Tr A} -\epsilon \frac{\vec{a}A\vec{a}^T}{\Tr \mathcal{Q}+\epsilon \Tr A} \\
			|\mathcal{P} (\mathcal{Q}+\epsilon A,\vec{a}) - \mathcal{P} (\mathcal{Q},\vec{a}) | &\leq \left|  \frac{\vec{a}\mathcal{Q}\vec{a}^T}{\Tr \mathcal{Q}+\epsilon \Tr A}  + \epsilon \frac{\vec{a}A\vec{a}^T}{\Tr \mathcal{Q}+\epsilon \Tr A} - \frac{\vec{a}\mathcal{Q}\vec{a}^T}{\Tr \mathcal{Q}}\right| \\
			&\leq  \left|  \frac{\vec{a}\mathcal{Q}\vec{a}^T}{\Tr \mathcal{Q}}  + \epsilon \frac{\vec{a}A\vec{a}^T}{\Tr \mathcal{Q}} - \frac{\vec{a}\mathcal{Q}\vec{a}^T}{\Tr \mathcal{Q}}\right| \\
			&= 	 \left| \epsilon \frac{\vec{a}A\vec{a}^T}{\Tr \mathcal{Q}} \right| \equiv \delta		,
		\end{aligned}
		\end{equation}
	providing the continuity of the privacy upon change of the initial state.
\end{enumerate}

Using this measure of private one can try and understand what the different values of $\mathcal{P}$ mean. The complete private case $\mathcal{P}=1$ can be verified to naturally satisfy all conditions for the privacy. If the QFI matrix is of the form $\mathcal{Q}= \alpha \vec{a}\vec{a}^T$ then is only possible to find an estimator for the function $f(\thetab) = \vec{a}\cdot \vec{\thetab}$. This also means that whatever estimators for other functions would be ill-defined and not accessible (infinite variance). Consequently all parties can not estimate but the target function, and the dishonest parties having access to all the values in the dishonest subset, could at most gather information about the honest parties function $f(\thetab_H)$ but not their local values. Here and henceforth, taking a vector of distributed parameters $\vec{a}$, we will denote $\vec{a}_H$ has the subset of the same vector belonging to the honest parties.

For the case of $\mathcal{P}=\epsilon$ let us provide two examples to understand how $\epsilon$ would affect the information that might be leaked in the case of dishonest parties:

\begin{example}
Suppose the QFI matrix is given by:
\[
	\mathcal{Q} = \epsilon \vec{a}\vec{a}^T + (1-\epsilon)\vec{b}\vec{b}^T,
\]
such that $\vec{a}_H \cdot \vec{b}_H = 0$, $\vec{a}\cdot\vec{b} = 0$, denoting by $\vec{x}_H$ the vector restricted to the honest parties. First verify that $\mathcal{P}=\epsilon$ for the estimation of the function $f(\thetab)=\vec{a}\cdot \vec{\thetab}$. Then, since the dishonest parties can share their own values and therefore have access to all $\theta_\mu \in D$, they also have access to $\vec{b}_H \cdot \vec{\thetab}$ which is a different from $f(\thetab_H)$, which is the private function. This is not private, since it violates the third condition of privacy. If $\epsilon=0$, then their information about this function is the maximum possible for this estimation $(\Tr \mathcal{Q})$.
\end{example}

\begin{example}
Suppose we have $k$ parties and $d\leq k/2$ dishonest parties among them. Le the QFI matrix be given by:
\[
	\mathcal{Q} = \epsilon \vec{a}\vec{a}^T + \frac{1-\epsilon}{d}\sum_{j=1}^d \vec{b}_j\vec{b}_j^T,
\]
Such that $\vec{a} \cdot \vec{b}_j = 0$ for all vectors $\vec{b}_j$. In particular, one can choose these vectors $\vec{b}_j$ such that $\vec{b}_{j_H}$, which is the correspondent to slicing the vectors $\vec{b}_j$ for the indices correspondent to honest parties, form a $d$-dimensional orthogonal subspace. Again, we first verify that $\mathcal{P} = \epsilon$ as every $\vec{b}_j$ is orthogonal to $\vec{a}$. Second, we verify that the set of information available by considering every dishonest party knows the dishonest subspace, is given by an orthogonal basis of $d$ vectors in the honest subspace. Namely they have information about all $\vec{b}_{j_H}\cdot\vec{\thetab}$ after using the dishonest information. In particular, if $d=k/2$, then they have $d$ linearly independent vectors to estimate the honest subspace with $k-d = d$ functions. This means they also have access to each individual parameter of the honest parties. Again, if $\epsilon \rightarrow 0$ then they are more efficient, in the sense the variance at which they can do this the same amount of resources is smaller.
\end{example}

Note however that in both of these cases, for the parties to implement a non-private strategy and acquire information about the honest parties, they would have to distribute a non-private state. If one utilizes a state certification protocol in conjunction with this \cite{Shettell2022a}, then one can guarantee privacy and security.

\section{Integer Vectors Properties \label{appendix:integers}}

\gcdmultiple*
\begin{proof}
    If $\alpha \in \mathbb{Z}$, then $\vec{a} \in \mathbb{Z}^k \implies \vec{b} \in \mathbb{Z}^k$. To prove it is the only option choose a $b_i$ such that $b_i = \alpha a_i$, where $\alpha \notin \mathbb{Z} = c/d$. For $b_j$ to be integer for every other $j$, then $gcd(\vec{a},d)>1$, as in, every $a_i$. However, $gcd$ is associative, meaning $gcd(\vec{a},d)=gcd(gcd(\vec{a}),d)=1$. This means $\alpha$ has to be integer.
\end{proof}

\gcdorder*
\begin{proof}
Since $\vec{a},\vec{b} \prec \vec{m}, \vec{0} \prec \vec{a} + \vec{b} \prec 2\vec{m}, -\vec{m} \prec \vec{a} - \vec{b} \prec \vec{m}$. $gcd(\vec{m})=1 \implies $ if $ \alpha \vec{m} \propto \vec{m}$ then $\alpha \in \mathbb{Z}$. 
\begin{equation}
\begin{aligned}
	\Leftrightarrow \begin{cases}
	2 \left[ \vec{m} - \left(\vec{a}+ \vec{b} \right) \right]  \propto \vec{m}\\
	2 \left(\vec{a}- \vec{b} \right) \propto \vec{m}
	\end{cases} \Leftrightarrow 
	\begin{cases}
	\left[ \vec{m} - \left(\vec{a}+ \vec{b} \right) \right]  \propto \vec{m}\\
	\left(\vec{a}- \vec{b} \right) \propto \vec{m}
	\end{cases} \text{ but }
	\begin{cases}
	-\vec{m} \prec \vec{m} - \left(\vec{a}+ \vec{b} \right) \prec \vec{m}\\
	-\vec{m} \prec \vec{a} - \vec{b} \prec  \vec{m}
	\end{cases} .
\end{aligned}
\end{equation}
\end{proof}

\gcdorderm*

\begin{proof}
The proof follows similarly. Since $\vec{a},\vec{b} \prec \vec{n}, \vec{0} \prec \vec{a} + \vec{b} \prec 2\vec{n}, -\vec{n} \prec \vec{a} - \vec{b} \prec \vec{n}$. $gcd(\vec{m})=1 \implies $ if $ \alpha \vec{m} \propto \vec{m}$ then $\alpha \in \mathbb{Z}$. 
\begin{equation}
\begin{aligned}
	\Leftrightarrow \begin{cases}
	2 \left[ \vec{n} - \left(\vec{a}+ \vec{b} \right) \right]  \propto \vec{m}\\
	2 \left(\vec{a}- \vec{b} \right) \propto \vec{m}
	\end{cases} \Leftrightarrow 
	\begin{cases}
	\left[ \vec{n} - \left(\vec{a}+ \vec{b} \right) \right]  \propto \vec{m}\\
	\left(\vec{a}- \vec{b} \right) \propto \vec{m}
	\end{cases} \text{ but }
	\begin{cases}
	-\vec{n} \prec \vec{n} - \left(\vec{a}+ \vec{b} \right) \prec \vec{n}\\
	-\vec{n} \prec \vec{a} - \vec{b} \prec  \vec{n}
	\end{cases} .
\end{aligned}
\end{equation}
From here we verify that as $\vec{n}\not\preceq \vec{m}, \exists j: n_j < m_j$ and $gcd(\vec{m})=1$ and therefore no proportionality can be found.
\end{proof}

\section{QFI Matrix for Arbitrary States \label{appendix:derivationQFI}}

Starting from the definition of an arbitrary pure state of Eq.~\ref{eq:arbitrarystate} can write the elements of the QFI matrix (see Eq.~\ref{eq:qfipuremulti}) as:
\begin{equation}
\begin{aligned}
	\mathcal{Q}_{\mu \nu} (\rho_{\thetab}) &= 4\re{ \sum_{l,m=1}^{2^n} \alpha_l^* \alpha_m \braket{ \mathcal{G}_l |   \boldsymbol{G}_\mu \boldsymbol{G}_\nu | \mathcal{G}_m } - \sum_{l,m=1}^{2^n} \alpha_l^* \alpha_m  \braket{ \mathcal{G}_l |\boldsymbol{G}_\mu |  \mathcal{G}_m } \sum_{p,q=1}^{2^n} \alpha_p^* \alpha_q\braket{ \mathcal{G}_p | \boldsymbol{G}_\nu |  \mathcal{G}_q }  },
\end{aligned}
\label{eq:qfiarb1}
\end{equation}
with $\boldsymbol{G}_\mu = \sum_{j\in\mu} G_j$. To calculate the different elements above consider the following labeling for the basis states: let $\tilde{s}_l^\pm = (G_1^\perp)^{l_1} \otimes (G_2^\perp)^{l_2} \otimes \cdots \otimes (G_{n-1}^\perp)^{l_{n-1}} \otimes (G_n)^{\pm}$, where $G_n^+ = \mathbb{1}$ and $G_n^- = G_n$. Let $l \in \{0,1,\cdots,2^{n-1}-1\}$ and $l_1l_2\cdots l_{n-1}$ be the binary string associated to $l$. From our definition of $G,G^\perp$, we also have that:
\begin{equation}
\begin{aligned}
	 G_j \circ G_j^\perp \circ G_j &= -G_j^\perp ,\qquad  G_j \circ G_j \circ G_j =  G_j ,\\
    G_j^\perp \circ G_j \circ G_j^\perp &= -G_j ,\qquad  G_j^\perp \circ G_j^\perp \circ G_j^\perp =  G_j^\perp.
\end{aligned}
\end{equation}
Using this when applying $\tilde{s}_l^\pm$ to the stabilizer table in Eq.~\ref{eq:tablegenerators1}, we get a minus sign for the lines $j$ such that $l_j=1$. This yields:

\begin{equation}
\begin{aligned}
    \braket{ \mathcal{G}_l^\pm | \sum_{j \in \mu} \sum_{k \in \nu} G_j G_k | \mathcal{G}_m^\pm } &= h^*_{\mathcal{N}_\mu} (l) h^*_{\mathcal{N}_\nu} (m) \braket{ \mathcal{G}_l^\pm | \mathcal{G}_m^\pm }\\
    \braket{ \mathcal{G}_l^\pm | \sum_{j \in \mu} G_j |  \mathcal{G}_m^\pm } &= h^*_{\mathcal{N}_\mu} (l)   \braket{ \mathcal{G}_l^\pm | \mathcal{G}_m^\mp },
\end{aligned}
\end{equation}
where $h^*_{\mathcal{N}_\mu}$ is the $\mu$ element of the vector-Hamming weight in Def.~\ref{def:vectorialhammingweight}. Note that states differing on $\pm$ have the same Hamming-weight, as expected since $G_n$ commutes with all $G_jG_k$ terms. We can rewrite then Eq.~\ref{eq:qfiarb1} the QFI in the following way:

\begin{equation}
\begin{aligned}
	\mathcal{Q}_{\mu \nu} (\rho_{\thetab}) &= \sum_{l=0}^{2^{n-1}-1}  \left[ |\alpha_l^+|^2 + |\alpha_l^-|^2 \right] h^*_{\mathcal{N}_\mu} (l) h^*_{\mathcal{N}_\nu} (l) - \\
	&- \left[ \sum_{l=0}^{2^{n-1}-1} 2 \re{\alpha_l^{+^*} \alpha_l^-}  h^*_{\mathcal{N}_\mu} (l) \right] \left[ \sum_{q=0}^{2^{n-1}-1} 2\re{\alpha_{q}^{+^*} \alpha_q^-} h^*_{\mathcal{N}_\nu} (q)  \right] \\
	&= \sum_l \lambda_l h^*_{\mathcal{N}_\mu} (l) h^*_{\mathcal{N}_\nu} (l) - \left[ \sum_{l} v_l h^*_{\mathcal{N}_\mu} (l) \right] \left[ \sum_{q} v_q h^*_{\mathcal{N}_\nu} (q)  \right] .
\end{aligned}
\label{eq:qfiarbitrary1}
\end{equation}

Let $\boldsymbol{\mathcal{N}} = \{ \mathcal{N}_\mu\}_{\mu \in V}$ be the partition of the set of qubits into each of the nodes. Assembling the last equation for all entries $\mu\nu$ we get exactly:

\begin{equation}
\begin{aligned}
	\boldsymbol{\mathcal{Q}} (\rho_{\thetab}) &= \sum_m \lambda_m \vec{h}^*_{\boldsymbol{\mathcal{N}}} (m) \vec{h}^{*^T}_{\boldsymbol{\mathcal{N}}} (m)  - \sum_{m} v_m \vec{h}^*_{\boldsymbol{\mathcal{N}}} (m)  \sum_{q} v_q \vec{h}^{*^T}_{\boldsymbol{\mathcal{N}}} (q) \\
	&= \sum_{m,q} \mathfrak{Q}_{mq} \vec{h}^*_{\boldsymbol{\mathcal{N}}}  (m) \vec{h}^{*}_{\boldsymbol{\mathcal{N}}}  (q)^T \\
	&= C \boldsymbol{\mathfrak{Q}} C^T,
\end{aligned}
\label{eq:qfiarbitrary2}
\end{equation}
where $C$ is a matrix where each line is given by a vector $\vec{n}-2\vec{c}_i$, where $\vec{0} \preceq \vec{c}_i \prec \vec{n}$.
This is equivalent, under the choice of $G = Z, G^\perp = X$ to the GHZ state on $n$ qubits, and the basis formed by $\ket{\mathcal{G}_m^\pm} \propto \ket{m} \pm \ket{\overline{m}}$ where $m$ ranges between 0 and $2^{n-1}-1\equiv N$, is represented by the binary string associated to the number $m$ and $\overline{m}$ is the binary negation of $m$ ($e.g.$ $m=00101, \overline{m}=11010$). Furthermore, we also get two additional important properties: from Eq.~\ref{eq:qfiarbitrary2} we can see that the QFI matrix can be decomposed in $\mathfrak{Q} = \Lambda - \vec{v}\vec{v}^T$, where $\Lambda = diag(\lambda_0,...,\lambda_m,...,\lambda_{N})$ and $\vec{v} = (v_0, ..., v_m, ..., v_{N})$. On top of this, we can verify that:

\begin{equation}
\begin{aligned}
	|\alpha_j^+ \pm \alpha_j^-|^2 \geq 0 \implies |\alpha_j^+|^2 + |\alpha_j^-|^2 \geq \pm 2\re{\alpha_j^+  \alpha_j^{-^*}} \implies \lambda_j \geq |v_j|.
\end{aligned}
\end{equation}

\section{Matrix Rank Inequalities \label{appendix:matrixrank}}

Let us introduce some additional theorems regarding symmetric real matrices, which are the ones resultant from our problem. We start by the following theorem, which allows us to pick one and only one subspace in the set of states we can choose to construct a private state:

\begin{restatable}{theorem}{rankk}\label{thm:rank}
Let $A \in \mathbb{R}^{n\times n}$ be a symmetric matrix given by $A = \Lambda - \vec{v}\vec{v}^T$, where $\Lambda$ is a diagonal matrix with all entries real and non-negative summing to 1 and $\vec{v}$ is a vector in $\mathbb{R}^{n}$ such that $\norm{\vec{v}}\leq 1$ and in particular $\forall i \in \{1,\dots,n\}: \lambda_i \geq |v_i |$. $A$ is positive semi-definite and the rank of $A$ is at least
\begin{equation*}
	\rank \ \Lambda \geq \rank \ A \geq \rank \ \Lambda -1,
\end{equation*}
and $\rank \ A = \rank \ \Lambda -1$ iff. $\lambda_j = |v_j|$, $ \forall j$.
\end{restatable}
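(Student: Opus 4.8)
The plan is to bring $A$ to a normal form by a congruence transformation, from which positive semi-definiteness, both rank inequalities, and the equality characterization all fall out together. The first step is to discard the null indices of $\Lambda$: whenever $\lambda_i = 0$, the hypothesis $\lambda_i \geq |v_i|$ forces $v_i = 0$ as well, so the $i$-th row and column of $A = \Lambda - \vec{v}\vec{v}^T$ vanish. Writing $S = \{i : \lambda_i > 0\}$, both $A$ and $\Lambda$ are therefore supported on the $S\times S$ block, and $\rank A = \rank(A|_S)$, $\rank \Lambda = \rank(\Lambda|_S) = |S|$. Hence, after replacing $n$ by $|S|$ and $\Lambda, \vec{v}, A$ by their restrictions, I may assume $\Lambda$ is positive definite; note $\sum_i \lambda_i = 1$ is preserved, since the discarded entries were zero.

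With $\Lambda$ invertible, set $\vec{w} = \Lambda^{-1/2}\vec{v}$, so that $A = \Lambda^{1/2}\bigl(\mathbb{1} - \vec{w}\vec{w}^T\bigr)\Lambda^{1/2}$. Since $\Lambda^{1/2}$ is symmetric and invertible, $A$ is congruent to $\mathbb{1} - \vec{w}\vec{w}^T$, hence (Sylvester's law of inertia) has the same rank and the same signature. Now $\norm{\vec{w}}^2 = \vec{v}^T\Lambda^{-1}\vec{v} = \sum_i v_i^2/\lambda_i$, and the termwise bound $v_i^2/\lambda_i \leq \lambda_i$ (from $|v_i|\leq\lambda_i$) together with $\sum_i\lambda_i = 1$ gives $\norm{\vec{w}}^2 \leq 1$, with equality if and only if $v_i^2/\lambda_i = \lambda_i$, i.e. $|v_i| = \lambda_i$, for every $i$. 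The matrix $\mathbb{1} - \vec{w}\vec{w}^T$ acts as the identity on $\vec{w}^{\perp}$ and multiplies $\vec{w}$ by $1 - \norm{\vec{w}}^2$; it is thus positive semi-definite (so $A$ is too), and its rank equals $|S|$ when $\norm{\vec{w}} < 1$ (including $\vec{w} = 0$) and $|S| - 1$ when $\norm{\vec{w}} = 1$.

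Translating back to the unrestricted matrices: $\rank A \leq \rank \Lambda$ in all cases, and exactly one of $\rank A = \rank\Lambda$ or $\rank A = \rank\Lambda - 1$ holds, which already yields $\rank \Lambda \geq \rank A \geq \rank \Lambda - 1$; moreover the second case occurs precisely when $\norm{\vec{w}} = 1$, which by the computation above is exactly the condition $\lambda_j = |v_j|$ for all $j$. The step I expect to require the most care is the reduction to invertible $\Lambda$ --- one must verify that the null directions of $\Lambda$ genuinely contribute nothing to either rank and that normalization survives the restriction --- after which the congruence to $\mathbb{1} - \vec{w}\vec{w}^T$ makes the rest routine. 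As a fallback avoiding congruence one can compute $\ker A$ directly: $A\vec{x} = 0$ forces $\vec{x} = (\vec{v}^T\vec{x})\,\Lambda^{-1}\vec{v}$, so $\ker A \subseteq \mathsf{span}(\Lambda^{-1}\vec{v})$ is at most one-dimensional, and it is nonzero iff the scalar consistency condition $\vec{v}^T\Lambda^{-1}\vec{v} = 1$ holds --- which is once more $\norm{\vec{w}} = 1$.
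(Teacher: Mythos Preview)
Your proof is correct and takes a genuinely different route from the paper's. Both arguments begin with the same reduction to the support of $\Lambda$, but then diverge. The paper bounds $\vec{y}^T A\vec{y}$ below by $\vec{x}^T B\vec{x}$ with $B = \Lambda - \vec{\lambda}\vec{\lambda}^T$ and $x_i = |y_i|$, exhibits the explicit sum-of-squares identity $\vec{x}^T B\vec{x} = \sum_{i<j}\lambda_i\lambda_j(x_i - x_j)^2$, and then handles the equality case separately via the sign matrix $D = \operatorname{diag}(v_i/|v_i|)$, which conjugates $A$ to $B$. You instead pass in one step to the congruent matrix $\mathbb{1} - \vec{w}\vec{w}^T$, so that positive semi-definiteness, both rank bounds, and the iff characterization all reduce to the single scalar inequality $\norm{\vec{w}}^2 = \sum_i v_i^2/\lambda_i \leq \sum_i \lambda_i = 1$. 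Your argument is shorter and more structural, and makes the equality case transparent without a separate case analysis; the paper's version is more computational but yields an explicit sum-of-squares decomposition that could be useful elsewhere. Your kernel-based fallback is also sound and recovers the rank statement independently.
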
 

\begin{proof}~
We start by reducing the problem into considering only the non-zero entries of $\Lambda$ and $\vec{v}$, as all the zero entries of $\Lambda$ do not contribute for the rank and the zero entries of $\vec{v}$ have corresponding values in $\Lambda$ which are already eigenvalues of $A$. Also, we can never have a non-zero entry in $\vec{v}$ associated to a zero value in $\Lambda$, from the conditions stated. Denote the vector $\vec{\lambda}$ as the vector of all the positive diagonal values of $\Lambda$. The condition imposed on $\vec{v}$ means that $\forall i \in \{1,\dots,n\}: \lambda_i \geq |v_i |$, which allows us to say:
\begin{equation}
\begin{aligned}
	\forall \vec{y} \neq \vec{0} \ , \exists \  \vec{x} \neq \vec{0}: \vec{y}^T\left( \Lambda - \vec{v}\vec{v}^T\right) \vec{y} &\geq  \vec{x}^T \left( \Lambda - \vec{\lambda}\vec{\lambda}^T \right) \vec{x} \\
    \vec{y}^T A \vec{y} &\geq  \vec{x}^T B \vec{x}.
\end{aligned}
\label{eq:inequality1}
\end{equation}
In particular, one could consider the vector $\vec{x}$ where $(\vec{x})_j = |(\vec{y})_j|$. By doing this, we can find bound the eigenvalues of $A$.
\begin{equation}
\begin{aligned}
	\vec{x}^T \left( \text{diag} (\vec{\lambda}) - \vec{\lambda}\vec{\lambda}^T \right) \vec{x} &= \sum_i x_i^2 \lambda_i - (\sum_i x_i \lambda_i)^2 \\
	&= \sum_i x_i^2 \lambda_i - (\sum_i x_i \lambda_i)(\sum_j x_j \lambda_j) \\
	&=  \sum_i x_i^2 \lambda_i (1-\lambda_i) - \sum_{i,  j\neq i} x_i \lambda_i x_j \lambda_j \\ 
	&=  \sum_i x_i^2 \lambda_i \sum_{j\neq i}\lambda_j - 2 \sum_{i,  j> i} x_i \lambda_i x_j \lambda_j \\
	&=  \sum_i \sum_{j> i} \lambda_i \lambda_j (x_i^2 - 2 x_i x_j +x_j^2 ) \\
	&=  \sum_i \sum_{j> i} \lambda_i \lambda_j (x_i - x_j )^2 .
\end{aligned}
\end{equation}
Since each $\lambda_i$ is by construction positive and non-zero, this means this expression is always positive, and is only zero when $(x_i-x_j)^2 = 0 \ \forall i,j$, which only happens for the null vector $\vec{0}$ and the $\vec{1}/\sqrt{n}$. 

Suppose $\exists j: \lambda_j > |v_j | $. In this case, Eq. \ref{eq:inequality1} is never saturated, unless $(\vec{y})_j$ is zero, which implies $\vec{x}^T B \vec{x} > 0$. This means $A$ is definite positive, and consequently that it has the rank of $\Lambda$. The other possible case is if $\forall \  j: \lambda_j = |v_j | $. Consider the matrix $D = diag((\vec{v})_j/|(\vec{v})_j|)$ which transforms $\vec{v} \rightarrow \vec{\lambda}$, $D\vec{v}=\vec{\lambda}$. Notice this matrix is a change of basis as $DD^T=\mathbb{1}$ and moreover that $DAD^{T} = B$ in this case. This allows us to map the eigenvectors of $B$ into the eigenvectors of $A$, proving the lower bound on the rank, as $\vec{1}/\sqrt{n}$ is the only non-zero vector than spans the null space of $B$. This means the rank of $A$ is bounded between the rank of $\Lambda$ and rank $\Lambda$ -1, where the latest equality only happens iff.  $\forall \  j: \lambda_j = |v_j | $.
\end{proof}

\begin{proposition}
Let $A \in \mathbb{R}^{n\times n}$ be symmetric and $B \in \mathbb{R}^{k\times n}$. If $A$ has full rank, then $BAB^T$ has rank equal to the rank of $B$.
\begin{equation*}
	\rank \ A = n \quad  \implies \quad \rank \  BAB^T = \rank \ B
\end{equation*}
\label{prop:fullrank}
\end{proposition}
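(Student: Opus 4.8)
The plan is to reduce the rank of $BAB^T$ to the rank of $B^T$ by exploiting definiteness rather than mere invertibility of $A$. A preliminary remark: in the situation where this proposition is used (the table in the proof of Thm.~\ref{thm:noprivacyarb}), $A$ is a principal submatrix of $\mathfrak{Q} = \Lambda - \vec{v}\vec{v}^T$ taken on its support, hence positive semi-definite by Thm.~\ref{thm:rank}; so ``$A$ has full rank'' means here that $A$ is positive definite, and that is the hypothesis I would actually use. (The statement is genuinely false for an indefinite symmetric $A$: with $A=\mathrm{diag}(1,-1)$ and $B=(1,\,1)$ one gets $BAB^T=0$ while $\rank B=1$, so the positive-semidefiniteness inherited from $\mathfrak{Q}$ is essential.)

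First I would record the easy inequality $\rank\,BAB^T \le \rank\, B^T = \rank\, B$, valid for any $A$, since $BAB^T=(BA)B^T$ factors through $B^T$. For the reverse inequality I would show that $\ker(BAB^T)=\ker(B^T)$ as subspaces of $\mathbb{R}^k$. The inclusion $\ker(B^T)\subseteq\ker(BAB^T)$ is immediate. Conversely, if $BAB^Tx=0$ then $0=x^TBAB^Tx=(B^Tx)^TA(B^Tx)$, and positive-definiteness of $A$ forces $B^Tx=0$. Hence $\ker(BAB^T)=\ker(B^T)$, and rank--nullity in $\mathbb{R}^k$ gives $\rank\,BAB^T = k-\dim\ker(B^T)=\rank\,B^T=\rank\,B$.

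A more computational route, which I might give as an alternative, is to write $A=R^TR$ with $R$ the (invertible) symmetric positive-definite square root of $A$, so that $BAB^T=(RB^T)^T(RB^T)$; then $\rank\,BAB^T=\rank(RB^T)=\rank\,B^T=\rank\,B$, using the standard fact $\rank(M^TM)=\rank M$ over $\mathbb{R}$ together with the invertibility of $R$. There is no real computational obstacle here; the only point requiring care — the ``hard part'' in a purely conceptual sense — is to notice that one must invoke definiteness of $A$, which is legitimate precisely because the ambient matrix $\mathfrak{Q}$ is positive semi-definite by Thm.~\ref{thm:rank}, so that for its principal submatrices full rank is equivalent to positive definiteness.
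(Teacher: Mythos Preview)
Your proposal is correct, and your alternative ``computational route'' via the square root $A=R^TR$ is essentially identical to the paper's proof: there the authors diagonalize $A=Q\Lambda Q^T$, insert $\sqrt{\Lambda}$ to write $BAB^T=\tilde{B}\tilde{B}^T$ with $\tilde{B}=BQ\sqrt{\Lambda}Q^T$, and then invoke $\rank\tilde{B}\tilde{B}^T=\rank\tilde{B}=\rank B$. Your kernel argument $\ker(BAB^T)=\ker(B^T)$ is a genuinely different route --- it avoids any factorisation and goes straight through the quadratic form --- but both arguments rest on exactly the same hidden hypothesis.

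Your diagnosis that positive definiteness (not merely full rank) is what is really being used is spot on, and in fact the paper's own proof silently relies on it too: taking a real $\sqrt{\Lambda}$ and then applying $\rank MM^T=\rank M$ over $\mathbb{R}$ only works if the eigenvalues are non-negative. Your counterexample $A=\mathrm{diag}(1,-1)$, $B=(1,1)$ shows the stated proposition is literally false as written; the paper gets away with it because in every application $A=\supp\mathfrak{Q}$ is positive semi-definite by Thm.~\ref{thm:rank}, so full rank on the support is the same as positive definiteness. It is worth flagging this gap between the statement and its proof, as you have done.
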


\begin{proof}
Since $A$ is symmetric, consider its diagonalization $A = Q \Lambda Q^T$.
\begin{equation*}
	 BAB^T = B Q \Lambda Q^T B^T = B Q \sqrt{\Lambda} Q^T Q \sqrt{\Lambda} Q^T B^T = \tilde{B}\tilde{B}^T \implies \rank \  \tilde{B}\tilde{B}^T = \rank  \tilde{B} = \rank B,
\end{equation*}
where used properties of the rank, namely, $\rank B B^T = \rank B^T B = \rank B = \rank B^T$ and if $A$ is a $l\times m$ matrix with rank $m$ and $B$ is a $m\times n$ matrix, $\rank AB = \rank B$.
\end{proof}

\begin{proposition}
Let $A \in \mathbb{R}^{n\times n}$ be symmetric and $B \in \mathbb{R}^{k\times n}$. If $A$ has rank $n-1$, then:
\begin{equation*}
	\rank \ A = n-1 \quad  \implies \quad \rank \ B \geq \rank \ BAB^T \geq  \rank \ B -1 .
\end{equation*}
\label{prop:notfullrank}
\end{proposition}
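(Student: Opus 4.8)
The strategy is to diagonalize $A$ by an orthogonal congruence, peel off the single zero eigenvalue, and then reduce to the full-rank case already handled by Proposition~\ref{prop:fullrank}. First I would write the spectral decomposition $A = Q\Lambda Q^T$ with $Q$ orthogonal and $\Lambda$ diagonal; since $\rank A = n-1$, exactly one diagonal entry of $\Lambda$ vanishes, and after conjugating $Q$ by a permutation I may assume it is the last one, so $\Lambda = \mathrm{diag}(\Lambda', 0)$ with $\Lambda' \in \mathbb{R}^{(n-1)\times(n-1)}$ invertible (and symmetric, being diagonal). Setting $\tilde B = BQ$, invertibility of $Q$ gives $\rank \tilde B = \rank B$, and $BAB^T = \tilde B \Lambda \tilde B^T$, so the problem is unchanged if we replace $A$ by $\Lambda$ and $B$ by $\tilde B$.

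Next I would split $\tilde B$ into its first $n-1$ columns and its last column, $\tilde B = [\,\hat B \mid \vec w\,]$ with $\hat B \in \mathbb{R}^{k\times(n-1)}$. Because the last diagonal entry of $\Lambda$ is $0$, the block product collapses to $\tilde B \Lambda \tilde B^T = \hat B\,\Lambda'\,\hat B^T$. Now $\Lambda'$ is symmetric and has full rank $n-1$, so Proposition~\ref{prop:fullrank} applies directly (with $\hat B$ a $k\times(n-1)$ matrix), yielding $\rank\!\big(\hat B\,\Lambda'\,\hat B^T\big) = \rank \hat B$. Hence $\rank\big(BAB^T\big) = \rank \hat B$.

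Finally I would invoke the elementary fact that deleting a single column from a matrix changes its rank by at most one: since $\hat B$ is obtained from $\tilde B$ by removing the column $\vec w$, we have $\rank \tilde B - 1 \le \rank \hat B \le \rank \tilde B$. Combining with $\rank \tilde B = \rank B$ and $\rank(BAB^T) = \rank \hat B$ gives exactly
\[
	\rank B \;\ge\; \rank\big(BAB^T\big) \;\ge\; \rank B - 1,
\]
as claimed. The only point requiring a little care is making sure the orthogonal congruence by $Q$ (and the permutation isolating the zero eigenvalue) is legitimate and preserves the rank of $B$; everything else is a direct application of Proposition~\ref{prop:fullrank} and the column-deletion rank inequality, so I expect no real obstacle beyond bookkeeping.
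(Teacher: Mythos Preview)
Your proof is correct and shares the opening move with the paper---diagonalize $A$ via the spectral theorem---but then diverges in execution. The paper absorbs $\sqrt{A}$ into $B$ to write $BAB^T = \tilde B \tilde B^T$ with $\tilde B = B\sqrt{A}$, and then invokes Sylvester's rank inequality $\rank(B\sqrt{A}) \geq \rank B + \rank\sqrt{A} - n = \rank B - 1$ for the lower bound (the upper bound being $\rank \tilde B \leq \min\{\rank B, \rank\sqrt{A}\}$). You instead isolate the single zero eigenvalue explicitly, apply Proposition~\ref{prop:fullrank} to the remaining full-rank $(n-1)\times(n-1)$ block, and close with the elementary fact that deleting one column changes rank by at most one. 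Your route is slightly more self-contained, relying only on Proposition~\ref{prop:fullrank} and the column-deletion bound rather than citing Sylvester as an external tool; the paper's is more compressed. The two are essentially equivalent in strength---your column-deletion step is precisely the special case of Sylvester's inequality where one factor is the projection onto the first $n-1$ coordinates.
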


\begin{proof}
Consider the same as before, but in the last step apply the Sylvester inequality for the rank of $\sqrt{A}B$:
\begin{equation*}
	  \rank \ B \geq \min \{ \rank \ \sqrt{A} , \rank \ B \} \geq  \rank \ \tilde{B} \geq  \rank \ B + \rank \ \sqrt{A} - n = \rank \ B -1.
\end{equation*}
\end{proof}

\begin{proposition}
Let $V,W$ be two vector spaces. Let $V_1$ and $V_2$ be two $m\times n$ matrices with vectors in $V$. If $\mathsf{span} (V_1) = \mathsf{span}(V_2)$ then $\mathsf{span}(V_1 W_1) = \mathsf{span} (V_2 W_1)$ for any $n\times p$ matrix $W_1$ with vectors in $W$.
\label{prop:vectorspan}
\end{proposition}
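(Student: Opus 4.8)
The plan is to read $\mathsf{span}(V_1)$ as the span of the $m$ rows of $V_1$, a subspace of $V$ (identified with $\mathbb{R}^n$), and $\mathsf{span}(V_1 W_1)$ as the span of the rows of the product, a subspace of $W$ (identified with $\mathbb{R}^p$); this is the way the statement is used in the proof of Theorem~\ref{thm:noprivacyarb}, with $V_1 = DW$, $V_2 = P^{T}$ and $W_1 = C_{\supp\mathfrak{Q}}$. The single observation that does all the work is that right multiplication by $W_1$ is linear: define $T\colon V \to W$ by $T(x) = x\,W_1$ for a row vector $x$. A direct entrywise check shows that the $i$-th row of $V_1 W_1$ is exactly $T$ applied to the $i$-th row of $V_1$, and likewise for $V_2$, so the problem reduces to the fact that a linear map commutes with the formation of spans of finite families.

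So first I would prove the elementary lemma: for a linear map $T$ and vectors $v_1,\dots,v_\ell$ in its domain, $\mathsf{span}\{T(v_1),\dots,T(v_\ell)\} = T\big(\mathsf{span}\{v_1,\dots,v_\ell\}\big)$. The inclusion $\subseteq$ holds because $\sum_i c_i T(v_i) = T\big(\sum_i c_i v_i\big)$ lies in the image of the span; the inclusion $\supseteq$ holds because any element of $T\big(\mathsf{span}\{v_i\}\big)$ has the form $T\big(\sum_i c_i v_i\big) = \sum_i c_i T(v_i)$, which lies in $\mathsf{span}\{T(v_i)\}$. Both directions are just linearity applied termwise.

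With the lemma in hand the proposition is immediate. Writing $r^{(1)}_i$, $r^{(2)}_j$ for the rows of $V_1$, $V_2$, one has
\begin{equation*}
	\mathsf{span}(V_1 W_1) = \mathsf{span}\{T(r^{(1)}_i)\}_i = T\big(\mathsf{span}\{r^{(1)}_i\}_i\big) = T\big(\mathsf{span}(V_1)\big),
\end{equation*}
and identically $\mathsf{span}(V_2 W_1) = T\big(\mathsf{span}(V_2)\big)$; since $\mathsf{span}(V_1) = \mathsf{span}(V_2)$ by hypothesis, applying $T$ to both sides gives $\mathsf{span}(V_1 W_1) = \mathsf{span}(V_2 W_1)$.

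There is no genuinely hard step here — the statement is essentially a restatement of the linearity of matrix multiplication in its left argument. The only points requiring a little care are bookkeeping ones: fixing once and for all whether $\mathsf{span}$ means the row space or the column space (the column-span version follows by the same argument after transposing, i.e. with left multiplication by $W_1^{T}$), and noting that the map $x \mapsto x W_1$ is linear even though it carries vectors from $V$ to $W$.
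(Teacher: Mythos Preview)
Your proof is correct. The paper's own argument is the dual formulation: it observes that $\mathsf{span}(V_1)=\mathsf{span}(V_2)$ implies $V_2 = Q V_1$ for an invertible $Q$, whence $V_2 W_1 = Q(V_1 W_1)$ and left multiplication by an invertible matrix preserves the row span. Your route via the linear map $T(x)=xW_1$ and the identity $\mathsf{span}\{T(v_i)\}=T(\mathsf{span}\{v_i\})$ is equivalent in content but slightly more robust, since it does not depend on $Q$ being invertible (which, strictly speaking, need not hold when the rows of $V_1$ are linearly dependent; one only gets $V_2=QV_1$ and $V_1=Q'V_2$ for some $Q,Q'$, which is still enough). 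Either way the statement is a one-line consequence of linearity, and your version matches how it is applied in Theorem~\ref{thm:noprivacyarb}.
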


\begin{proof}
If $\mathsf{span} (V_1) = \mathsf{span}(V_2)$ then $\exists Q: V_2 = Q V_1$ and is invertible. The proof concludes by realising that $\mathsf{span} (QA) = \mathsf{span}(A)$ for any invertible $Q$, given the definition of a span of a set of vectors.
\end{proof}

\section{Private States - Proofs \label{appendix:privatestateproofs}}

\onecopy*

\begin{proof}
The proof in the indirect way comes trivially from calculating the privacy for the family of states $F_{GHZ}$, together with Corol.~\ref{thm:unitaryequivalence}, we get that difference up to a LU operation.
Doing the same as before, now we have a different $\mathfrak{Q}$ and $C$ matrices, as $\vec{n} = \vec{a}$. When looking at the matrix $\mathfrak{Q}$ note that the first coefficient $\lambda_0$ and $v_0$ are associated with the span of the family of states $F_{GHZ}$, as by varying them, we get all the states in $F_{GHZ}$. Doing the same analysis as before, we get that now $\vec{a} \in C$ and $dim_{\vec{a}} (C) = 1$, $i.e.$ it only appears once in the lines of $C$. Constructing the same table to analyze privacy:

\begin{table}[H]
\centering
\begin{tabular}{c|c|c|c|c}
	rank $\supp\mathfrak{Q}$ & rank $C_{\supp\mathfrak{Q}}$ & rank $C\mathfrak{Q}C^T$ & Reasoning & Privacy  \\ \hline
	any & 1 & $\leq 1$ & Trivial & Iff. $C_{\supp\mathfrak{Q}} = \vec{a}$ \\
	dim $\supp\mathfrak{Q}$ & $\geq 2$ & $\geq 2$ & Prop. \ref{prop:fullrank} & Never  \\
	dim $\supp\mathfrak{Q}$-1 & 2 & 1 or 2 & Prop. \ref{prop:notfullrank} & Check \\
	dim $\supp\mathfrak{Q}$-1 & $>2$ & $>1$ & Prop. \ref{prop:notfullrank} & Never  
\end{tabular}
\end{table}

Checking again the case of rank $\supp\mathfrak{Q}$ = dim $\supp\mathfrak{Q}-1$ with rank $C_{\supp\mathfrak{Q}}$ = 2. Using the same reasoning we find a vector $D\vec{w}$ in the orthogonal subspace of $D\vec{1}$, $\mathsf{span} P^\perp$, acting on two different lines of $C_{\supp\mathfrak{Q}}$:
\begin{equation}
\begin{aligned}
	D\vec{w} C_{\supp \mathfrak{Q}} &= 1/\sqrt{2} \left( \vec{h}_{\boldsymbol{\mathcal{N}}}^* (i) \pm \vec{h}_{\boldsymbol{\mathcal{N}}}^* (j) \right) \\
	&=\begin{cases}
	2/\sqrt{2} \left[ \vec{n} - \left(\vec{h}_{\boldsymbol{\mathcal{N}}} (i) + \vec{h}_{\boldsymbol{\mathcal{N}}} (j) \right) \right] \\
	2/\sqrt{2} \left(\vec{h}_{\boldsymbol{\mathcal{N}}} (i) - \vec{h}_{\boldsymbol{\mathcal{N}}} (j) \right) 
	\end{cases}.
\end{aligned}
\end{equation}
From here there are two options:
\begin{enumerate}[label=(\roman*)]
    \item $i$ or $j$ correspond to the GHZ state, and in that case we have that $\vec{h}_{\boldsymbol{\mathcal{N}}} (i) \pm \vec{h}_{\boldsymbol{\mathcal{N}}} (j) = \vec{0} \equiv \vec{h}_{\boldsymbol{\mathcal{N}}} (j) = \vec{0}$, meaning that $i=j$ and contradicting rank-$2$ of $C_{\supp\mathfrak{Q}}$;
    \item neither $i$ nor $j$ correspond to the GHZ state, and we recover that $C_{\supp\mathfrak{Q}}$ only has vectors $\vec{c}_i \prec \vec{a}$, which is the case for the non-existence of a private state.
\end{enumerate}
This means also this case is never private. Consequently, the only private state is the one spanned by a matrix with a unique non-zero entry correspondent to $\lambda_0-v_0^2$. Moreover, this is maximized for $v_0^2 = 0$, meaning a perfect GHZ state, up to a LU operation.

\end{proof}

\privatefamily*

\begin{proof}
This is a direct consequence of the invariance of the Hamming-weight introduced in Props.~\ref{prop:invarianceperm},\ref{prop:invariancepermS}. Looking at the terms that appear in the QFI, fix the encoding dynamics basis at $Z$, as we have chosen our distributed $s$-states (Def.~\ref{def:distdickestates}) in the computational basis. Notice this does not affect the generality of our proof, as again we are always an LU operation from any other basis (Corol.~\ref{thm:unitaryequivalence}). Using this fact, and the fact that the vectorial Hamming-weight is invariant under the distributed permutations, we get that:
\begin{equation}
\begin{aligned}
	\sum_{j\in\mu} Z_j \ket{\psi} &= h_{\mathcal{N}_\mu}(\ket{s({b_1}_\mu,a_\mu+{b_0}_\mu)}) \ket{\psi} = \left(a_\mu+{b_0}_\mu - {b_1}_\mu \right) \ket{\psi} ,\\
    \sum_{j\in\mu} Z_j \ket{\tilde{\psi}} &= h_{\mathcal{N}_\mu}(\ket{s(a_\mu+{b_1}_\mu,{b_0}_\mu)}) \ket{\tilde{\psi}} = \left(-a_\mu +{b_0}_\mu - {b_1}_\mu \right) \ket{\tilde{\psi}},
\end{aligned}
\end{equation}
Where we used that $\vec{b} = \vec{b}_0 + \vec{b}_1$ which correspond to the amount of 0's and 1's respectively. Note that $\vec{d}=\vec{b}_1$ as well. After assembling in the QFI matrix we get:
\begin{equation}
\begin{aligned}
	\boldsymbol{\mathcal{Q}} &= \vec{a} \vec{a}^T \left[ |\alpha^2| + |\beta|^2 - (|\alpha|^2-|\beta|^2)^2 \right],
\end{aligned}
\end{equation}
which means private. Moreover, it is maximised for $|\alpha|=|\beta|$.
\end{proof}

\onecopyplusancillaarb*

\begin{proof}
The inverse direction is a direct consequence from Thm.~\ref{prop:privatefamily}. The direct proof comes again from the QFI. Construct the same decomposition into $C\mathfrak{Q}C^T$ as before, where the basis of $\mathfrak{Q}$ are the families correspondent to $\boldsymbol{F}_{GHZ}$. 

\begin{table}[H]
\centering
\begin{tabular}{c|c|c|c|c}
	rank $\supp\mathfrak{Q}$ & rank $C_{\supp\mathfrak{Q}}$ & rank $C\mathfrak{Q}C^T$ & Reasoning & Privacy  \\ \hline
	any & 1 & $\leq 1$ & Trivial & Iff. $C_{\supp\mathfrak{Q}} = \vec{a}$ \\
	dim $\supp\mathfrak{Q}$ & $\geq 2$ & $\geq 2$ & Prop. \ref{prop:fullrank} & Never  \\
	dim $\supp\mathfrak{Q}$-1 & 2 & 1 or 2 & Prop. \ref{prop:notfullrank} & Check \\
	dim $\supp\mathfrak{Q}$-1 & $>2$ & $>1$ & Prop. \ref{prop:notfullrank} & Never  
\end{tabular}
\end{table}

Now, for the first time we have that $\dim \{\vec{c}_j \in C: \vec{c}_j= \vec{a} \} \equiv \mathfrak{Q}_{\vec{a}}$ is either 0 or larger than 1. It is larger than 1 if $\vec{b} = \vec{n}-\vec{a}$ has all entries divisible by two, and one can find the family of states such that $\vec{b}_0 = \vec{b}_1$. This is a consequence of the symmetric Hamming weight that remains the same when adding the same amount of 0s and 1s to a bitstring.

Looking at the case of rank $\supp\mathfrak{Q}$ = dim $\supp\mathfrak{Q}-1$ with rank $C_{\supp\mathfrak{Q}}$ = 2, we can find a way to construct all the others families. Using the same reasoning as before, let $D$ be the matrix  such that $D\vec{1}$ spans the null space of $\boldsymbol{\mathcal{Q}}$. Denote the orthogonal space of this vector by $P^\perp$. Let $W$ be the set of vectors given by $W^\pm = \{w| w= \vec{e}_i \pm \vec{e}_j , \text{sign} (D_{ii}) = \mp \text{sign}(D_{jj})\}$, and Prop.~\ref{prop:vectorspan} allows us to use this vectors to continue our proof, as $\textsf{span}(P^\perp)=\mathsf{span}(W^+\cup W^-)$. Looking first at $W^-$,the options let us check which vectors $\vec{c}_i - \vec{c}_j$ are proportional to $\vec{a}$:
\begin{equation}
\begin{aligned} 
	\vec{h}_{\boldsymbol{\mathcal{N}}}^* (i) - \vec{h}_{\boldsymbol{\mathcal{N}}}^* (j) &= \delta \vec{a}, \qquad \delta = \{  \pm 1, \pm 2 \} \\
	\vec{n} - 2\vec{h}_{\boldsymbol{\mathcal{N}}} (i) - \left( \vec{n} - 2 \vec{h}_{\boldsymbol{\mathcal{N}}} (j) \right) &=  \delta \vec{a}, \qquad \delta = \{  \pm 1, \pm 2 \}  \\
	\vec{h}_{\boldsymbol{\mathcal{N}}} (i) - \vec{h}_{\boldsymbol{\mathcal{N}}} (j)  &=  \tilde{\delta}  \vec{a}  , \qquad \tilde{\delta} = \{ \pm 1\} \\
	\implies \vec{h}_{\boldsymbol{\mathcal{N}}} (i) = \vec{a} + \vec{b}_j &\wedge \vec{h}_{\boldsymbol{\mathcal{N}}} (j) = \vec{b}_j ,
\end{aligned}
\end{equation}
where we used the fact that $\vec{h}_{\boldsymbol{\mathcal{N}}} (\cdot) $, is an positive integer vector and $gcd(\vec{a})=1$. Moreover, we introduce the notation $\vec{b}_i$ as a vector between $\vec{0} \preceq \vec{b}_i \prec \vec{b}$. Substituting for $\vec{h}_{\boldsymbol{\mathcal{N}}}^* (i)$ and $\vec{h}_{\boldsymbol{\mathcal{N}}}^* (j)$ and using $\vec{b}_i =\vec{b}-\vec{b}_j$ we observe: 
\begin{equation}
\begin{aligned} 
	\vec{h}_{\boldsymbol{\mathcal{N}}}^* (i) &= \vec{n} - 2( \vec{a} + \vec{b}_j) =  \vec{a} + \vec{b} - 2( \vec{a} + \vec{b}_j) = - \vec{a} + \vec{b} - 2 \vec{b}_j = - (\vec{a} + \vec{b} - 2 \vec{b}_i ) = -(\vec{n}-2\vec{b}_i)  \\
	\vec{h}_{\boldsymbol{\mathcal{N}}}^* (j) &= \vec{n} -2 \vec{b}_j  =  \vec{a} + \vec{b} - 2 \vec{b}_j.
\end{aligned}
\end{equation}
The equations for the subspace $W^+$ are:
\begin{equation}
\begin{aligned} 
	\vec{h}_{\boldsymbol{\mathcal{N}}}^* (i) + \vec{h}_{\boldsymbol{\mathcal{N}}}^* (j) &= \delta \vec{a}, \qquad \delta = \{ 0, \pm 1, \pm 2 \} \\
	\vec{n} - 2\vec{h}_{\boldsymbol{\mathcal{N}}} (i) + \left( \vec{n} - 2 \vec{h}_{\boldsymbol{\mathcal{N}}} (j) \right) &=  \delta \vec{a}, \qquad \delta = \{ 0, \pm 1, \pm 2 \}  \\
	\vec{n} - \left(\vec{h}_{\boldsymbol{\mathcal{N}}} (i) + \vec{h}_{\boldsymbol{\mathcal{N}}} (j) \right) &=  \tilde{\delta}  \vec{a}  , \qquad \tilde{\delta} = \{ 0, \pm 1\} \\
	\implies \left( \vec{h}_{\boldsymbol{\mathcal{N}}} (i) + \vec{h}_{\boldsymbol{\mathcal{N}}} (j)  = \vec{a} + \vec{b} \right) &\vee \left( \vec{h}_{\boldsymbol{\mathcal{N}}} (i)  + \vec{h}_{\boldsymbol{\mathcal{N}}} (j) = \vec{b} \right) \vee \left(  \vec{h}_{\boldsymbol{\mathcal{N}}} (i) + \vec{h}_{\boldsymbol{\mathcal{N}}} (j) = 2\vec{a} +\vec{b} \right).
\end{aligned}
\end{equation}
This means the overall solution are the vectors inside $C_{\supp\mathfrak{Q}}$, such that:
\begin{equation}
\begin{aligned} 
	\begin{cases}
	 \vec{c}_i = \vec{n} - 2 \left(\vec{a} + \vec{b}_i \right) \vee \vec{c}_i = \vec{n} - 2  \vec{b}_i, \\
	 \vec{c}_j = \vec{n} - 2 \left(\vec{a} + \vec{b}_j \right) \vee \vec{c}_j = \vec{n} - 2  \vec{b}_j,
	\end{cases}, \qquad  \vec{b}_i + \vec{b}_j = \vec{b}.
\end{aligned}
\label{eq:vecbibj}
\end{equation}
Substituting we actually see that $\vec{n} - 2 \left(\vec{a} + \vec{b}_i \right) = -( \vec{n} - 2  \vec{b}_j )$, meaning they are in fact spanned by a state in the same family (see Prop.~\ref{prop:samestate}). This means choosing only the combination $\vec{c}_i = \vec{n} - 2  \vec{b}_i \wedge \vec{c}_j = \vec{n} - 2  \vec{b}_j$ is sufficient. The set of states with the same vectorial Hamming-weight, are exactly the states belonging to the equivalence class, by definition. From the construction of matrix $\mathfrak{Q}$, we get that each line is associated with the following family of states:
\begin{equation}
\begin{aligned} 
	&\overline{F}_{GHZ_m} = \left\{ \alpha \ket{m} + \beta \ket{\overline{m}}, \alpha,\beta \in \mathbb{C}, |\alpha|^2+|\beta|^2=1 \right\} \\
	&\lambda_i = |\alpha_i|^2+|\alpha_{i+1}|^2 ,	v_i = 2\re{\alpha_i \alpha_{i+1}^*}, \alpha = \frac{\alpha_i + \alpha_{i+1}}{\sqrt{2}}, \beta = \frac{\alpha_i - \alpha_{i+1}}{\sqrt{2}} \\
	&\lambda_i = |v_i|  \implies \begin{cases} 
	\alpha = 1 , \beta = 0 , v_i/|v_i| = 1 \\
	\alpha = 0 , \beta = 1 , v_i/|v_i| = -1 
	\end{cases}.
\end{aligned}
\end{equation}
This means we can only take states $i$ such that, up to permutation of qubits (see Prop.~\ref{prop:permivariance}), that have $\vec{h}_{\boldsymbol{\mathcal{N}}}(i) = \vec{b}_i$, together with states $j$ that $\vec{h}_{\boldsymbol{\mathcal{N}}}(\overline{j}) = \vec{a} + \vec{b}_j$. Given that we can only choose $\vec{b}_i, \vec{b}_j$ such that $\vec{b}_i + \vec{b}_j = \vec{b}$ (Eq.~\ref{eq:vecbibj}) and $\vec{h}_{\boldsymbol{\mathcal{N}}}(\overline{j})  = \vec{n} - \vec{h}_{\boldsymbol{\mathcal{N}}}(j) =  \vec{n} -(\vec{a} + \vec{b}_j) = \vec{b}-\vec{b}_j$, then this are exactly the family of states given by $\mathcal{F}(\boldsymbol{\mathcal{N}},\vec{a},\vec{b}_i)$. By doing the same for each possible set of $\vec{b}_i,\vec{b}_j$, one generates each of the possible families of private states. Note however that one cannot create a superposition of states of different families, as that means the rank of $C_{\supp\mathfrak{Q}}$ would be larger than 2, which would always mean a non-private state (check table). Moreover, one recovers the fact that if $\vec{b}$ is divisible by 2, then for $\vec{b}_i = \vec{b}_j = \vec{b}/2$ is the case of a rank-1 $C_{\supp \mathfrak{Q}}$. Given that all other possibilities create either not rank-1 QFI matrices, or QFI matrices which are not proportional to $\vec{a}\vec{a}^T$, this are the only sets of private states. 
\end{proof}

Below we provide two additional propositions that allowed some simplifications on the theorems above.

\begin{proposition}\label{prop:permivariance}
Let $\sigma \in S_{\boldsymbol{\mathcal{N}}}$. Let $\ket{GHZ_m} = \alpha \ket{m} + \beta \ket{\overline{m}}$, where $m$ is a binary string with length $\norm{\vec{n}}_1$ and $\overline{(\cdot)}$ is the binary conjugation of $(\cdot)$.
\begin{equation}
\begin{aligned}
	h_{\boldsymbol{\mathcal{N}}} ( m ) = h_{\boldsymbol{\mathcal{N}}} \left( \sigma(m) \right) \Longleftrightarrow \mathcal{Q} (\ket{GHZ_m}) = \mathcal{Q} \left(\sigma(\ket{GHZ_m})\right)
\end{aligned}
\end{equation}
\end{proposition}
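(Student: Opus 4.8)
The plan is to compute $\mathcal{Q}(\ket{GHZ_m})$ in closed form and then read off both implications. By Corol.~\ref{thm:unitaryequivalence} we may assume the encoding dynamics are generated by $Z$ operators, so node $\mu$ acts through $\boldsymbol{Z}_\mu=\sum_{j\in\mu}Z_j$. Since $m$ is a computational-basis string (and $m\neq\overline m$ always, so $\ket{m}\perp\ket{\overline m}$ and the state is genuinely a two-term superposition), $\ket m$ is an eigenvector of every $\boldsymbol{Z}_\mu$ with eigenvalue $h^*_{\mathcal{N}_\mu}(m)$, while $\ket{\overline m}$ has eigenvalue $h^*_{\mathcal{N}_\mu}(\overline m)=-h^*_{\mathcal{N}_\mu}(m)$, because bit-negation flips every $(-1)^{m_j}$. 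Feeding this into the pure-state QFI expression (the reduced form in Eq.~\ref{eq:qfistab}), together with $\braket{GHZ_m|\boldsymbol{Z}_\mu|GHZ_m}=(|\alpha|^2-|\beta|^2)\,h^*_{\mathcal{N}_\mu}(m)$ and $\braket{GHZ_m|\boldsymbol{Z}_\mu\boldsymbol{Z}_\nu|GHZ_m}=h^*_{\mathcal{N}_\mu}(m)\,h^*_{\mathcal{N}_\nu}(m)$, yields
\begin{equation*}
\mathcal{Q}(\ket{GHZ_m}) = 4\bigl(1-(|\alpha|^2-|\beta|^2)^2\bigr)\,\vec{h}^*_{\boldsymbol{\mathcal{N}}}(m)\,\vec{h}^*_{\boldsymbol{\mathcal{N}}}(m)^T .
\end{equation*}

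Next I would observe that for $\sigma\in S_{\boldsymbol{\mathcal{N}}}$ one has $\sigma(\ket{GHZ_m})=\alpha\ket{\sigma(m)}+\beta\ket{\sigma(\overline m)}=\alpha\ket{\sigma(m)}+\beta\ket{\overline{\sigma(m)}}=\ket{GHZ_{\sigma(m)}}$, since a permutation commutes with the bit-negation. Hence the same formula applies with $m$ replaced by $\sigma(m)$, with the \emph{same} scalar prefactor; and because $\alpha,\beta\neq 0$ that prefactor $4(1-(|\alpha|^2-|\beta|^2)^2)$ is strictly positive. Consequently $\mathcal{Q}(\ket{GHZ_m})=\mathcal{Q}(\sigma(\ket{GHZ_m}))$ if and only if the rank-one matrices $\vec{h}^*_{\boldsymbol{\mathcal{N}}}(m)\vec{h}^*_{\boldsymbol{\mathcal{N}}}(m)^T$ and $\vec{h}^*_{\boldsymbol{\mathcal{N}}}(\sigma(m))\vec{h}^*_{\boldsymbol{\mathcal{N}}}(\sigma(m))^T$ coincide, which for outer products of real vectors is equivalent to $\vec{h}^*_{\boldsymbol{\mathcal{N}}}(m)=\pm\vec{h}^*_{\boldsymbol{\mathcal{N}}}(\sigma(m))$.

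Finally I would convert this to the untagged weight via the linear relation $\vec{h}^*_{\boldsymbol{\mathcal{N}}}=\vec{n}-2\vec{h}_{\boldsymbol{\mathcal{N}}}$. The $+$ branch is exactly $\vec{h}_{\boldsymbol{\mathcal{N}}}(m)=\vec{h}_{\boldsymbol{\mathcal{N}}}(\sigma(m))$, which gives the forward implication immediately (and, for $\sigma\in S_{\boldsymbol{\mathcal{N}}}$, is guaranteed by Prop.~\ref{prop:invariancepermS}); the $-$ branch reads $\vec{h}_{\boldsymbol{\mathcal{N}}}(m)+\vec{h}_{\boldsymbol{\mathcal{N}}}(\sigma(m))=\vec{n}$, which combined with Prop.~\ref{prop:invariancepermS} forces $\vec{h}_{\boldsymbol{\mathcal{N}}}(m)=\vec{n}/2$, a degenerate situation in which $\vec{h}^*_{\boldsymbol{\mathcal{N}}}(m)=\vec{0}$ and both QFI matrices vanish, so the equivalence still holds. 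In all cases the equality of QFIs is equivalent to the equality of vectorial Hamming weights, which is the claim. The only step that needs genuine care is this converse: extracting a vector identity from an outer-product identity introduces a sign ambiguity, and one must rule out the spurious $-$ branch as a real counterexample — invoking the permutation-invariance of the vectorial Hamming weight (Prop.~\ref{prop:invariancepermS}) disposes of it cleanly. Everything else is routine expectation-value bookkeeping.
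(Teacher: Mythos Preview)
The paper does not actually supply a proof of this proposition; it is stated in Appendix~\ref{appendix:privatestateproofs} as one of two auxiliary facts (together with Prop.~\ref{prop:samestate}) that ``allowed some simplifications'' and is then left to the reader. Your proposal is therefore filling a gap rather than competing with an existing argument.

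Your argument is correct. The closed-form expression
\[
\mathcal{Q}(\ket{GHZ_m}) \;=\; 4\bigl(1-(|\alpha|^2-|\beta|^2)^2\bigr)\,\vec{h}^*_{\boldsymbol{\mathcal{N}}}(m)\,\vec{h}^*_{\boldsymbol{\mathcal{N}}}(m)^T
\]
is precisely what the paper uses implicitly (compare the computation inside the proof of Prop.~\ref{prop:privatefamily}), and from it the equivalence follows as you describe. One observation worth making explicit: because the hypothesis already fixes $\sigma\in S_{\boldsymbol{\mathcal{N}}}$, Prop.~\ref{prop:invariancepermS} guarantees $\vec{h}_{\boldsymbol{\mathcal{N}}}(m)=\vec{h}_{\boldsymbol{\mathcal{N}}}(\sigma(m))$ unconditionally, so both sides of the stated biconditional are in fact always true. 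Your careful treatment of the converse via the $\pm$ sign ambiguity of outer products is therefore more than is strictly required under the stated hypothesis, though it would be exactly the right analysis were $\sigma$ allowed to range over all of $S_X$. Either way, the proof is sound.
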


\begin{proposition}\label{prop:samestate}
Let $\ket{GHZ_m} = \alpha \ket{m} + \beta \ket{\overline{m}}$ and $\ket{GHZ_{\overline{m}}} = \alpha \ket{\overline{m}} + \beta \ket{m}$, where $m$ is a binary string with length $\norm{\vec{n}}_1$ and $\overline{(\cdot)}$ is the binary conjugation of $(\cdot)$.
\begin{equation}
\begin{aligned}
	\mathcal{Q} (\ket{GHZ_m}) = \mathcal{Q} \left(\ket{GHZ_{\overline{m}}}\right)
\end{aligned}
\end{equation}
\end{proposition}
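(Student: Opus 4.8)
The plan is to exhibit $\ket{GHZ_{\overline{m}}}$ as the image of $\ket{GHZ_m}$ under a single explicit local unitary and to track how that unitary acts on the generators of the encoding dynamics inside the covariance form of the QFI. By Corol.~\ref{thm:unitaryequivalence} we may fix the separable dynamics so that each local generator is $G_j = Z$, hence $\boldsymbol{G}_\mu = \sum_{j\in\mu} Z_j$, and the bit strings $m,\overline{m}$ label computational-basis eigenstates (note $m\neq\overline{m}$ always, since $\overline{m}$ differs from $m$ in every position). Since $X^{\otimes n}\ket{m}=\ket{\overline{m}}$ and $X^{\otimes n}\ket{\overline{m}}=\ket{m}$, we get at once $\ket{GHZ_{\overline{m}}} = X^{\otimes n}\ket{GHZ_m}$.

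Next I would use the pure-state QFI written as the covariance matrix of the generators, as in Eq.~\ref{eq:qfistab},
\[
	\mathcal{Q}_{\mu\nu}(\rho_{\thetab}) = 4\re{ \braket{\psi|\boldsymbol{G}_\mu\boldsymbol{G}_\nu|\psi} - \braket{\psi|\boldsymbol{G}_\mu|\psi}\braket{\psi|\boldsymbol{G}_\nu|\psi} }.
\]
Taking $\ket{\psi}=\ket{GHZ_{\overline{m}}}=X^{\otimes n}\ket{GHZ_m}$ and inserting $X^{\otimes n}X^{\otimes n}=\mathbb{1}$, every matrix element becomes an expectation in $\ket{GHZ_m}$ of the conjugated operator. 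Because $X_j Z_j X_j = -Z_j$, one has $X^{\otimes n}\boldsymbol{G}_\mu X^{\otimes n} = -\boldsymbol{G}_\mu$ for every node $\mu$, and therefore $X^{\otimes n}\boldsymbol{G}_\mu\boldsymbol{G}_\nu X^{\otimes n} = (-\boldsymbol{G}_\mu)(-\boldsymbol{G}_\nu)=\boldsymbol{G}_\mu\boldsymbol{G}_\nu$. Hence $\braket{GHZ_{\overline{m}}|\boldsymbol{G}_\mu\boldsymbol{G}_\nu|GHZ_{\overline{m}}}=\braket{GHZ_m|\boldsymbol{G}_\mu\boldsymbol{G}_\nu|GHZ_m}$, while the first moments each flip sign, $\braket{GHZ_{\overline{m}}|\boldsymbol{G}_\mu|GHZ_{\overline{m}}}=-\braket{GHZ_m|\boldsymbol{G}_\mu|GHZ_m}$. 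The product of two first moments then picks up $(-1)^2=1$, so each entry of the covariance matrix is unchanged and $\mathcal{Q}(\ket{GHZ_m})=\mathcal{Q}(\ket{GHZ_{\overline{m}}})$. (Equivalently, one can skip the conjugation trick and compute directly using $\boldsymbol{G}_\mu\ket{m}=h^*_{\mathcal{N}_\mu}(m)\ket{m}$ and $h^*_{\mathcal{N}_\mu}(\overline{m})=-h^*_{\mathcal{N}_\mu}(m)$, getting $\braket{\boldsymbol{G}_\mu\boldsymbol{G}_\nu}=h^*_{\mathcal{N}_\mu}(m)h^*_{\mathcal{N}_\nu}(m)$ in both states and $\braket{\boldsymbol{G}_\mu}=\pm(|\alpha|^2-|\beta|^2)h^*_{\mathcal{N}_\mu}(m)$ with opposite signs.)

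There is no serious obstacle here; the only point needing a little care is making the argument independent of the chosen local generators. For general $G_j=\vec{x}\cdot\vec{\sigma}$ one replaces $X^{\otimes n}$ by $\Motimes_{j} G_j^\perp$, where $G_j^\perp$ anticommutes with $G_j$: this operator swaps the two generalized-basis labels $m\leftrightarrow\overline{m}$ (flipping every local eigenvalue) and satisfies $G_j^\perp G_j G_j^\perp=-G_j$, so the same sign bookkeeping goes through. In the write-up I would simply invoke Corol.~\ref{thm:unitaryequivalence} once to pass to the $Z$-generator gauge and then run the short computation above.
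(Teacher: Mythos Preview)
Your argument is correct. The paper actually states this proposition without proof, treating it as a direct observation alongside Prop.~\ref{prop:permivariance}; your conjugation-by-$X^{\otimes n}$ argument (or the equivalent direct computation via $h^*_{\mathcal{N}_\mu}(\overline{m})=-h^*_{\mathcal{N}_\mu}(m)$) is exactly the short verification the authors had in mind, and it matches the spirit of the derivations leading to Eq.~\ref{eq:qfiarbitrary2}.
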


Finally, the statement for when we have multiples of the vector of resources:

\multiplecopiesprivacy*

\begin{proof}
From Corol.~\ref{thm:unitaryequivalence} we get that we can fix dynamics as $Z$ dynamics, without loosing any generality, and work with the regular computational basis for the qubits. This comprises the statement up to LU operations. Then, starting from:
\begin{equation}
\begin{aligned}
	\sum_{j\in l} Z_j \ket{0_L}_l \equiv \boldsymbol{Z}_\mu^l \ket{0_L}_l &= \left(a_\mu+{b_0}_\mu^l - {b_1}_\mu^l \right) \ket{0_L}  = \left(a_\mu+\Delta b_\mu^l \right) \ket{0_L} ,\\
    \sum_{j\in l} Z_j \ket{1_L}_l \equiv \boldsymbol{Z}_\mu^l \ket{1_L}_l  &= \left(-a_\mu +{b_0}_\mu^l - {b_1}_\mu^l \right) \ket{1_L} = \left(-a_\mu+\Delta b_\mu^l \right)\ket{1_L},
\end{aligned}
\end{equation}
define $\ket{j} = \ket{ j_1  j_2 \cdots  j_d}$, such that $\boldsymbol{Z}_\mu^l \ket{j_L}_l = \left( (-1)^{j_l} a_\mu + \Delta b_\mu^l \right)  \ket{j_L}_l$ and note that:
\begin{equation}
\begin{aligned}
	\boldsymbol{Z}_\mu \ket{j} = \sum_{l = 1}^d \boldsymbol{Z}_\mu^l \ket{j} &= \left( h^*(j) a_\mu + \sum_{l=1}^d \Delta b_\mu^l \right)  \ket{j}.
\end{aligned}
\end{equation}
This means that, for $\ket{\psi} \in \mathfrak{F}$, the QFI will have the following form:
\begin{equation}
\begin{aligned}
	\mathcal{Q}_{\mu\nu} &\propto \bra{\psi} \boldsymbol{Z}_\mu \boldsymbol{Z}_\nu \ket{\psi} - \bra{\psi} \boldsymbol{Z}_\mu  \ket{\psi} \bra{\psi} \boldsymbol{Z}_\nu \ket{\psi} \\
    &= \sum_{j,m=0}^{2^d-1} \alpha_j^* \alpha_m \bra{j} \boldsymbol{Z}_\mu \boldsymbol{Z}_\nu \ket{m} -\sum_{j,m=0}^{2^d-1} \alpha_j^* \alpha_m \bra{j} \boldsymbol{Z}_\mu  \ket{m} \sum_{p,q=0}^{2^d-1} \alpha_p^* \alpha_q \bra{p}  \boldsymbol{Z}_\nu \ket{k} \\
    &= \sum_{j=0}^{2^d-1} |\alpha_j|^2 \left( h^*(j) a_\mu + \sum_{l=1}^d \Delta b_\mu^l \right) \left( h^*(j) a_\nu + \sum_{l=1}^d \Delta b_\nu^l \right) \\
    &\qquad \qquad - \sum_{j,p=0}^{2^d-1} |\alpha_j|^2 |\alpha_p|^2 \left( h^*(j) a_\mu + \sum_{l=1}^d \Delta b_\mu^l \right) \left( h^*(p) a_\nu + \sum_{l=1}^d \Delta b_\nu^l \right)  \\
    &= \left[ \sum_{j=0}^{2^d-1} |\alpha_j|^2 h^*(j)^2 - \sum_{j,p=0}^{2^d-1} |\alpha_j|^2 |\alpha_p|^2 h^*(j) h^*(p) \right] a_\mu a_\nu ,
\end{aligned}
\end{equation}
which makes the QFI matrix always proportional to $\vec{a}\vec{a}^T$, resulting always in a private state.

\end{proof}

\section{Private States Existence for General Hamiltonians \label{appendix:general}}

Let us start by repeating and proving Thm.~\ref{thm:existenceprivate}:

\existenceprivate*

\begin{proof}
Given the rewriting of the QFI matrix in Eq.~\ref{eq:generalQFI}, we already now from Thm.~\ref{thm:rank} that the rank of this matrix is given by the rank  $\supp\mathfrak{Q}-1$, similarly to what we have done before. By choosing the orthogonal subspace to the null space of $\supp \mathfrak{Q}$ we realize that the span of this space is generated by the vectors in $W = \{w| w= \vec{e}_{\vec{i}} - \vec{e}_{\vec{j}} \}$. Note that $W C_{\supp \mathfrak{Q}}$ generates a space of vectors $\mathcal{C} = \{\vec{c}| \vec{c}= \vec{c}_{\vec{i}} - \vec{c}_{\vec{j}} , \vec{c}_{\vec{i}}, \vec{c}_{\vec{j}} \in \mathcal{O}\}$. This is exactly the higher order discrete subspace generated by $\mathcal{O}$, $i.e.$ $\mathcal{C} \subsetsim \mathcal{O}^2_-$. This already proves the first part of the theorem about the existence of private states, by choosing states with $\alpha_{\vec{i}},\alpha_{\vec{j}}$ to be the only non-zero terms, associated to vectors $\vec{c}_{\vec{i}}-\vec{c}_{\vec{j}} = \vec{a}\in \mathcal{O}^2_-$.

The fact that only vectors in the equivalence class of $\mathcal{O}^2_-$ have private states is a consequence of $\mathsf{span}(\mathcal{C}) = \{ \alpha \vec{a}, \alpha \in \mathbb{R} \}$ in order for privacy to be verified. If $\alpha\vec{a} \not\in \mathcal{O}^2_-$ than one cannot build $\mathsf{span}(\mathcal{C})$, as every $\vec{c}_{\vec{i}}-\vec{c}_{\vec{j}}$ has to be proportional to $\vec{a}$ and $\vec{c}_{\vec{i}},\vec{c}_{\vec{j}}$ can only be chosen from $\mathcal{O}$.
\end{proof}

To prove Prop.~\ref{prop:symmetrichamiltonians} let us first introduce and prove some additional propositions that will allows us to prove it:

\begin{proposition}\label{prop:commutationpauli}
Let $X \in \mathcal{P}_n$ be a Pauli string and $\mathcal{H}\subseteq \mathcal{P}_n$ a subset of Pauli strings. Define an Hamiltonian as $H = \sum_{P \in \mathcal{H}} c_P P$, where all $c_P \neq 0$. Then:
\begin{equation*}
    [X,H]_\pm = 0 \Leftrightarrow [X,P]_\pm = 0 ,\quad \forall \ P \in \mathcal{H}.
\end{equation*}
\end{proposition}
\begin{proof}
The inverse implication comes trivially from the linearity of the (anti)commutator $[\cdot,\cdot]_\pm$. The direct implication comes as follows: divide $\mathcal{H}$ into two disjoint sets $\mathcal{H} = \mathcal{H}_+ \cup \mathcal{H}_-$ such that $[X,P]_\pm = 0, \ \forall \ P \in \mathcal{H}_\pm$. This is always possible given that two Pauli strings either commute or anticommute. Then:
\begin{equation*}
    \begin{aligned}
        [X,H]_\pm &= \sum_{P \in \mathcal{H}} c_P [X,P]_\pm \\
        &= \sum_{P \in \mathcal{H}_\pm} c_P [X,P]_\pm + \sum_{P \in \mathcal{H}_\mp} c_P [X,P]_\pm \\
        &= 0 +  \sum_{P \in \mathcal{H}_\mp} c_P [X,P]_\pm \\
        &= 2\sum_{P \in \mathcal{H}_\mp} c_P PX \\
        &= 2\sum_{\tilde{P} \in \mathcal{H}_\mp X} c_{\tilde{P}} \tilde{P} ,
    \end{aligned}
\end{equation*}
which can only be zero if all $c_{\tilde{P}}$ are 0, as the set of Pauli strings generates a basis for complex matrices, leading to a contradiction.
\end{proof}

\begin{proposition}\label{prop:symmetriceigenvalues}
Let $\mathcal{H}\subseteq \mathcal{P}_n$ be a subset of Pauli strings. Define an Hamiltonian as $H = \sum_{P \in \mathcal{H}} c_P P$, where all $c_P > 0$. Then:
\begin{equation*}
    \exists X \in \mathcal{P}_n : \{X, H \} = 0 \Leftrightarrow \{ \lambda_H \} = \{ + \lambda_H \} \cup \{ -\lambda_H\}
\end{equation*}
Meaning for each eigenvalue of the Hamiltonian $H$, $\lambda_H$, its symmetric counterpart, $-\lambda_H$, is also an eigenvalue of $H$.
\end{proposition}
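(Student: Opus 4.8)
\emph{Plan.} I would prove the two implications of the equivalence separately; the forward one ("anticommuting Pauli $\Rightarrow$ symmetric spectrum") is essentially immediate, while the converse is the substantive part and where I expect all the difficulty to sit.

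\emph{Forward direction.} Let $X\in\mathcal{P}_n$ with $\{X,H\}=0$; take $X$ to be the Hermitian representative of its Pauli string, so $X=X^{\dagger}$ and $X^{2}=\mathbb{1}$, i.e. $X$ is a self-inverse unitary. If $H\ket{\psi}=\lambda\ket{\psi}$ with $\ket{\psi}\neq 0$, then $H(X\ket{\psi})=-XH\ket{\psi}=-\lambda(X\ket{\psi})$ and $X\ket{\psi}\neq 0$, so $X$ restricts to a linear isomorphism of $\ker(H-\lambda\mathbb{1})$ onto $\ker(H+\lambda\mathbb{1})$ with inverse $X$ itself. Hence $\dim\ker(H-\lambda\mathbb{1})=\dim\ker(H+\lambda\mathbb{1})$ for every $\lambda$, which is exactly the statement that the eigenvalue multiset of $H$ is invariant under $\lambda\mapsto-\lambda$. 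No further work is needed here, and this is the implication actually used in Prop.~\ref{prop:symmetrichamiltonians}.

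\emph{Converse — reduction.} Assume the spectrum of $H$ is $\lambda\mapsto-\lambda$ symmetric; I want to build $X\in\mathcal{P}_n$ with $\{X,H\}=0$. By Prop.~\ref{prop:commutationpauli} it is enough to find a Pauli string anticommuting with every $P\in\mathcal{H}$ separately. Pass to the $\mathbb{F}_2$ symplectic picture: represent each Hermitian Pauli string $P$ by $v_P\in\mathbb{F}_2^{2n}$, so that $\{X,P\}=0$ iff the symplectic form $\omega(v_X,v_P)$ equals $1$. Writing $M$ for the $2n\times|\mathcal{H}|$ matrix with columns $v_P$ and using non-degeneracy of $\omega$, a simultaneous anticommutant $X$ exists iff the all-ones vector lies in the row space of $M$, equivalently iff every $\mathbb{F}_2$-relation $\sum_{P\in T}v_P=0$ (i.e. every $T\subseteq\mathcal{H}$ with $\prod_{P\in T}P\propto\mathbb{1}$) has $|T|$ even. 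So the converse comes down to showing that a symmetric spectrum forces every such ``identity relation'' among the summands of $H$ to have even length.

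\emph{Converse — the hard step.} For that I would use power sums: the eigenvalue multiset is $\lambda\mapsto-\lambda$ symmetric iff $\mathrm{Tr}(H^{m})=0$ for all odd $m$. Expanding $\mathrm{Tr}(H^{m})=\sum_{(P_1,\dots,P_m)}c_{P_1}\cdots c_{P_m}\,\mathrm{Tr}(P_1\cdots P_m)$, only tuples with $P_1\cdots P_m\propto\mathbb{1}$ contribute, and such a tuple collapses (mod $2$) to a subset relation $T$. Given an odd relation, one would isolate a minimal one $T$, track the contribution of its $|T|!$ orderings to $\mathrm{Tr}(H^{|T|})$, and argue these do not cancel against each other or against longer tuples. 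Controlling this cancellation is the main obstacle. There is moreover a genuine subtlety that must be handled: an odd-length relation can carry an imaginary phase, $P_1\cdots P_m=\pm i\,\mathbb{1}$ (e.g. $XYZ\propto\mathbb{1}$ on one qubit), in which case it only feeds the imaginary part of $\mathrm{Tr}(H^{m})$, which vanishes automatically since $H$ is Hermitian; so the statement one should really aim to prove is that the existence of a simultaneous anticommuting Pauli is governed by the \emph{real-phase} odd relations, the phase of a relation $T$ being determined by the parity of the number of anticommuting pairs inside $T$. Getting this phase bookkeeping to line the two sides of the equivalence up (or, failing that, observing that only the forward implication is needed downstream) is the delicate point of the argument.
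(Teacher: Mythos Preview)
Your forward direction is correct and slightly more informative than the paper's determinant argument $\det(H-\lambda\mathbb{1})=\det(XHX-\lambda\mathbb{1})=\det(-H-\lambda\mathbb{1})$: you exhibit $X$ directly as a bijection between the $\lambda$- and $(-\lambda)$-eigenspaces, which also gives the multiplicity statement.

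For the converse you are right to be cautious, and your closing parenthetical is in fact the correct resolution: the converse as stated is \emph{false}. Take $n=1$ and $H=X+Y+Z$ with all $c_P=1$. Then $H^{2}=3\,\mathbb{1}$, so the spectrum $\{+\sqrt{3},-\sqrt{3}\}$ is symmetric, yet $\{P,H\}=2\,\mathbb{1}$ for $P\in\{X,Y,Z\}$ and $\{I,H\}=2H$, so no single Pauli string anticommutes with $H$. In your symplectic language this is precisely the imaginary-phase odd relation you flagged: $XYZ=i\,\mathbb{1}$ is a length-$3$ identity relation among the summands, so the all-ones vector is \emph{not} in the $\mathbb{F}_{2}$ row space of $\mathcal{H}$, but the odd power traces $\Tr(H^{2k+1})$ still vanish because the only contributing product carries phase $i$ and $H$ is Hermitian. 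The obstruction you identified is therefore genuine, not merely a bookkeeping nuisance.

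The paper attempts the converse by building a unitary $U$ swapping $\pm\lambda$ eigenspaces (so $\{U,H\}=0$), expanding $U$ over Pauli strings, and then invoking ``similar arguments to Prop.~\ref{prop:commutationpauli}'' to conclude that some Pauli in the expansion anticommutes with $H$. That step does not go through: Prop.~\ref{prop:commutationpauli} requires the \emph{left} argument of the anticommutator to be a single Pauli, and in the example above one may take $U=(X-Y)/\sqrt{2}$, supported entirely on Paulis neither of which anticommutes with $H$. Your observation that only the forward implication is used downstream in Prop.~\ref{prop:symmetrichamiltonians} is the right salvage.
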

\begin{proof}
The direct proof follows very simply. $X \in \mathcal{P}_n \implies X^2=\mathbb{1}$.
\begin{equation*}
\begin{aligned}
     \det(H-\lambda\mathbb{1}) &=  \det(H-\lambda\mathbb{1}) \det(X^2) \\
     &= \det(X)\det(H-\lambda\mathbb{1}) \det(X)\\
     &=\det(XHX-\lambda X\mathbb{1}X)\\
     &= \det(-H-\lambda\mathbb{1}),
\end{aligned}
\end{equation*}
which means the eigenvalues admit solutions for $\det(H-\lambda\mathbb{1})$ and $\det(H+\lambda\mathbb{1})$ and therefore if $\lambda_H \in$ ev, $-\lambda_H \in$ ev. The inverse proof comes as a consequence of the Pauli strings forming a basis for the space of complex matrices $\mathbb{C}^{2^n}\times\mathbb{C}^{2^n}$. Take $U$ to be a unitary operation that switches from positive eigenstates to the negative ones: $U\ket{\lambda_H^\pm} = \ket{\lambda_H^\mp}$. Then:
\begin{equation*}
\begin{aligned}
     U H \ket{\lambda_H^\pm} &= \pm \lambda_H U \ket{\lambda_H^\pm} = \pm \lambda_H \ket{\lambda_H^\mp} \\
     H U \ket{\lambda_H^\pm} &=  H \ket{\lambda_H^\mp} =  \mp \lambda_H \ket{\lambda_H^\mp}\\
     \implies& (UH + HU)\ket{\lambda_H^\pm} = 0, \quad \forall \ \ket{\lambda_H^\pm},
\end{aligned}
\end{equation*}
which means there is a unitary satisfying the anti-commutation relation. Given that $U$ is a complex matrix, one can find a decomposition in terms of Pauli strings. Using Prop.~\ref{prop:commutationpauli} and similar arguments for the proof of Prop.~\ref{prop:commutationpauli}, then one has to choose the decomposition among the set of anti-commuting Pauli strings only. This consequently means there is at least one Pauli string that anti-commutes with the Hamiltonian.
\end{proof}

This allows us to prove Prop.~\ref{prop:symmetrichamiltonians}:
\symmetrichamiltonians*
\begin{proof}
Follows directly from Prop.~\ref{prop:symmetriceigenvalues}. The stated conditions for each local Hamiltonian imply that the local eigenvalues $\lambda_\mu$ belong to a set such that if $+\lambda$ exists so does $-\lambda$. This means that if the vector $\vec{c}_i$ belongs to the discrete subset $\mathcal{O}$, so does the vector $-\vec{c}_i$. This consequently means that $\mathcal{O}^2_-$ where the vectors are built from $\vec{c}_i - \vec{c}_j$ will, in particular, include the vectors $\vec{c}_i-(-\vec{c}_i) = 2 \vec{c}_i$. So $\mathcal{O}\sim 2\mathcal{O} \subseteq \mathcal{O}^2_- \equiv \mathcal{O}\subsetsim \mathcal{O}^2_-$. 
\end{proof}

Finally, we are able to prove Corol.~\ref{corol:concentration}:

\concentration*
\begin{proof}
This is a consequence of the decomposition of $\mathcal{Q}=C\mathfrak{Q}C^T$ of Eq.~\ref{eq:generalQFI} and the fact that $\norm{\vec{a}}_2$ is maximum for $\vec{a}\in V(\overline{\mathcal{O}}^2_-)$. Among only the private states, $\vec{a}\in \overline{\mathcal{O}}^2_-$, one can calculate the maximal QFI as:

\begin{equation*}
    \max \mathcal{Q}(\vec{a}) = \max \vec{a}^T\mathcal{Q}\vec{a} =\norm{\vec{a}}_2^2 = \norm{\vec{c}_i - \vec{c}_j}_2^2 .
\end{equation*}

The privacy measure already incapsulates the fact that any private state will maximise information about the target function. This means the maximum information one can get about the maximum function is necessarily private.
\end{proof}

\section{QFI for Mixed States - Derivations \label{appendix:qfimixedstates}}

As said in the main text, to find the expression for the QFI in the mixed state case, we start by the most elementar expression for the QFI:

\begin{equation}
\begin{aligned}
	\mathcal{Q}_{\mu\nu} (\rho_{\thetab}) =  \Tr \mathcal{R}_{\rho_{\thetab}}^{-1} (\partial_{\theta_\mu}\rho_{\thetab}) \rho_{\thetab} \mathcal{R}_{\rho_{\thetab}}^{-1} (\partial_{\theta_\nu}\rho_{\thetab}).
\end{aligned}
\end{equation}

Under the assumptions of the main text regarding the superoperator $\mathcal{R}_{\rho_{\thetab}}$, which corresponds to the symmetric logarithmic derivative, we can derive it in terms of the density matrix eigenvectors:
\begin{equation}
\begin{aligned}
	\mathcal{R}_\rho^{-1} (\partial_{\theta_\mu}\rho_{\thetab})&= \sum_{j,k} \frac{2}{\lambda_j + \lambda_k} \bra{\mathcal{G}_j} \partial_{\theta_\mu}\rho_{\thetab}  \ket{\mathcal{G}_k} \ketbra{\mathcal{G}_j}{\mathcal{G}_k} .
\end{aligned}
\end{equation}

Using the notation $\ket{\mathcal{G}_k'} = U_\thetab \ket{\mathcal{G}_k}$:

\begin{equation}
\begin{aligned}
	\partial_{\theta_\mu}\rho_{\thetab}  &= \partial_{\theta_\mu} \sum_{k\in \supp} \lambda_k \ketbra{\mathcal{G}_k'}{\mathcal{G}_k'} \\
	&= \sum_{k\in \supp} \lambda_k \left[ \ketbra{\partial_{\theta_\mu} \mathcal{G}_k'}{\mathcal{G}_k'} + \ketbra{\mathcal{G}_k'}{\partial_{\theta_\mu} \mathcal{G}_k'} \right],
\end{aligned}
\end{equation}
where we assumed that the encoding is unitary and done over the density matrix.

\begin{equation}
\begin{aligned}
	D_{ij}^{\mu}\equiv \bra{\mathcal{G}_i'} \partial_{\theta_\mu}\rho_{\thetab}  \ket{\mathcal{G}_j'} &=\bra{\mathcal{G}_i'}  \sum_k \lambda_k \left[ \ketbra{\partial_{\theta_\mu} \mathcal{G}_k'}{\mathcal{G}_k'} + \ketbra{\mathcal{G}_k'}{\partial_{\theta_\mu} \mathcal{G}_k'} \right] \ket{\mathcal{G}_j'} \\
	&= \sum_k \lambda_k \left[ \braket{\mathcal{G}_i' | \partial_{\theta_\mu} \mathcal{G}_k'} \delta_{kj} + \delta_{ik} \braket{\partial_{\theta_\mu} \mathcal{G}_k' | \mathcal{G}_j'} \right] \\
	&=\id_{\supp} (j) \lambda_j \braket{\mathcal{G}_i' | \partial_{\theta_\mu} \mathcal{G}_j'} + \id_{\supp} (i) \lambda_i \braket{\partial_{\theta_\mu} \mathcal{G}_i' | \mathcal{G}_j'} \\
	&=\left[\id_{\supp} (i) \lambda_i - \id_{\supp} (j)\lambda_j \right] \braket{\partial_{\theta_\mu} \mathcal{G}_i' | \mathcal{G}_j'},
\end{aligned}
\end{equation}
where the last step was taking into consideration that:
\begin{equation}
\begin{aligned}
	\partial_{\theta_\mu} (\braket{ \mathcal{G}_i' | \mathcal{G}_j'}) = \partial_{\theta_\mu} (\braket{ \mathcal{G}_i | \mathcal{G}_j}) &= \partial_{\theta_\mu} (\delta_{ij}) = 0 \\
	&=  \braket{\partial_{\theta_\mu} \mathcal{G}_i' | \mathcal{G}_j'} + \braket{ \mathcal{G}_i' | \partial_{\theta_\mu} \mathcal{G}_j'} .
\end{aligned}
\end{equation}

Plugging it in the QFI we get that:
\begin{equation}
\begin{aligned}
	\mathcal{Q}_{\mu\nu} (\rho_\thetab) &=  \tr{ \mathcal{R}_{\rho_\thetab}^{-1} (\partial_{\theta_\mu}\rho_\thetab) \rho_\thetab \mathcal{R}_{\rho_\thetab}^{-1} (\partial_{\theta_\nu}\rho_\thetab)} \\
	&= \Tr \left[ \sum_{i,j} \frac{2}{\lambda_i + \lambda_j} D_{ij}^{\mu} \ketbra{\mathcal{G}_i'}{\mathcal{G}_j'} \cdot \sum_{k\in \supp} \lambda_k \ketbra{\mathcal{G}_k'}{\mathcal{G}_k'} \cdot \sum_{m,n} \frac{2}{\lambda_m + \lambda_n} D_{mn}^{\nu} \ketbra{\mathcal{G}_m'}{\mathcal{G}_n'} \right] \\
	&= \Tr \left[ \sum_{i,j,m,n} \frac{2D_{ij}^{\mu}}{\lambda_i + \lambda_j} \cdot \frac{2 D_{mn}^{\nu}}{\lambda_m + \lambda_n} \sum_{k\in\supp} \lambda_k \ket{\mathcal{G}_i'} \braket{\mathcal{G}_j' | \mathcal{G}_k'} \braket{\mathcal{G}_k' | \mathcal{G}_m'}\bra{\mathcal{G}_n'} \right] \\
	&= \Tr \left[ \sum_{ \substack{i,n \\ k \in \supp} } \frac{2D_{ik}^{\mu}}{\lambda_i + \lambda_k} \cdot \frac{2 D_{kn}^{\nu}}{\lambda_k + \lambda_n} \lambda_k \ket{\mathcal{G}_i'} \bra{\mathcal{G}_n'} \right] \\
	&= \sum_{\substack{n\in \supp \\ k\in\supp}} \frac{2D_{nk}^{\mu}}{\lambda_n + \lambda_k} \cdot \frac{2 D_{kn}^{\nu}}{\lambda_k + \lambda_n} \lambda_k +  \sum_{\substack{n\in \nul \\ k\in\supp}} \frac{2D_{nk}^{\mu}}{\lambda_n + \lambda_k} \cdot \frac{2 D_{kn}^{\nu}}{\lambda_k + \lambda_n} \lambda_k \\
	&=\sum_{\substack{n\in\supp \\ k\in\supp}}  \frac{2D_{nk}^{\mu} D_{kn}^{\nu} }{\lambda_n + \lambda_k}  \cdot \left( \frac{\lambda_k}{\lambda_k + \lambda_n} + \frac{\lambda_n}{\lambda_n + \lambda_k}  \right) + \sum_{\substack{n\in \nul \\ k\in\supp}} \frac{2D_{nk}^{\mu}}{\lambda_k} \cdot \frac{2 D_{kn}^{\nu}}{\lambda_k} \lambda_k \\
	&= \sum_{\substack{n \in \supp \\ k\in\supp}}  \frac{2D_{nk}^{\mu}  D_{kn}^{\nu} }{\lambda_n + \lambda_k} +  \sum_{\substack{n \in \nul \\ k\in\supp}}  \frac{4D_{nk}^{\mu}  D_{kn}^{\nu} }{\lambda_k}  \\
	&=\sum_{\substack{n \in \supp \\ k\in\supp}}  2\frac{(\lambda_n - \lambda_k)^2}{\lambda_n + \lambda_k} \braket{\partial_{\theta_\mu} \mathcal{G}_n' | \mathcal{G}_k'} \braket{ \mathcal{G}_k' | \partial_{\theta_\nu}\mathcal{G}_n'} + \sum_{\substack{n \in \nul \\ k\in\supp}}  4  \lambda_k \braket{ \mathcal{G}_n' | \partial_{\theta_\mu} \mathcal{G}_k'} \braket{ \partial_{\theta_\nu} \mathcal{G}_k' | \mathcal{G}_n'} \\
	&= \sum_{\substack{n \in \supp \\ k\in\supp}} 2\frac{(\lambda_n - \lambda_k)^2}{\lambda_n + \lambda_k} \re{\braket{\mathcal{G}_n | \boldsymbol{G}_\mu | \mathcal{G}_k} \braket{ \mathcal{G}_k | \boldsymbol{G}_\nu | \mathcal{G}_n}}+ \sum_{\substack{n \in \nul \\ k\in\supp}}  4  \lambda_k \re{ \braket{\mathcal{G}_k | \boldsymbol{G}_\mu | \mathcal{G}_n} \braket{ \mathcal{G}_n | \boldsymbol{G}_\nu | \mathcal{G}_k} }.
\end{aligned}
\label{eq:notfinalmixed}
\end{equation}

Let us retrieve the pure state equation from this. Assume the quantum state density matrix can be described by that of a pure state $\rho = \ketbra{\psi}{\psi}$. This means $\lambda_1 = 1$ and every other $\lambda_j = 0$. Moreover, since independent of the $\ket{\psi}$ choice, one can always find an orthonormal basis such that $\ket{\psi}$ is part of. This means in particular that $\exists \mathcal{B} = \{ \ket{\psi_j} \}_{j=1,\cdots,2^n}$, such that $\ket{\psi_1} = \ket{\psi}$ and $\sum_{j=1}^{2^n} \ketbra{\psi_j}{\psi_j} = \mathbb{1}$. Plugging $\rho$ into Eq.~\ref{eq:noisyqfi} we get:
\begin{equation}
\begin{aligned}
	\mathcal{Q}_{\mu\nu} (\rho_\thetab) &= 4 \re{ \bra{\psi_1 }  \boldsymbol{G}_\mu  \left(  \sum_{ k=2}^{2^n} \ketbra{\psi_k}{ \psi_k} \right)  \boldsymbol{G}_\nu  \ket{ \psi_1} } \\
	&=   4  \re{ \bra{\psi_1 }  \boldsymbol{G}_\mu  \left(  \mathbb{1} - \ketbra{\psi_1}{\psi_1} \right) \boldsymbol{G}_\nu  \ket{ \psi_1} } \\
	&=   4 \re{  \bra{\psi_1 }  \boldsymbol{G}_\mu  \boldsymbol{G}_\nu  \ket{ \psi_1} -  \bra{\psi_1 } \boldsymbol{G}_\mu  \ketbra{\psi_1}{\psi_1}   \boldsymbol{G}_\nu  \ket{ \psi_1} } .
\end{aligned}
\end{equation}

This allows us to get one additional way to calculate the QFI of a mixed state. Suppose the density matrix is a sum of orthogonal states (even if it is not, there is always a decomposition into orthogonal states by changing basis, consequence of a diagonalization of the matrix). Let:
\begin{equation}
\begin{aligned}
	\rho = \sum_{j \in \supp} \lambda_j \ketbra{\mathcal{G}_j}{\mathcal{G}_j},\quad \text{such that } \braket{\mathcal{G}_j | \mathcal{G}_k }= \delta_{jk}.
\end{aligned}
\end{equation}
Then:
\begin{equation}
\begin{aligned}
	\mathcal{Q}_{\mu\nu} (\rho_\thetab) &= \sum_{\substack{n \in \supp \\ k\in\supp}} \! 2\frac{(\lambda_n - \lambda_k)^2}{\lambda_n + \lambda_k} \re{\braket{\mathcal{G}_n | \boldsymbol{G}_\mu | \mathcal{G}_k} \braket{ \mathcal{G}_k | \boldsymbol{G}_\nu | \mathcal{G}_n}}+ \sum_{\substack{n \in \nul \\ k\in\supp}}  4  \lambda_k \re{ \braket{\mathcal{G}_k | \boldsymbol{G}_\mu | \mathcal{G}_n} \braket{ \mathcal{G}_n | \boldsymbol{G}_\nu | \mathcal{G}_k} } \\
	&= \sum_{\substack{n \in \supp \\ k\in\supp}} 2\frac{(\lambda_n - \lambda_k)^2}{\lambda_n + \lambda_k} \re{ a_{kn}^\mu a_{nk}^\nu } + \sum_{\substack{k\in\supp}}  4  \lambda_k  \re{ \bra{\mathcal{G}_k } \boldsymbol{G}_\nu \left[ \sum_{\substack{n\in\nul}} \ketbra{ \mathcal{G}_n}{\mathcal{G}_n} \right]  \boldsymbol{G}_\mu \ket{ \mathcal{G}_k}} \\
	&= \asum{\substack{n \in \supp \\ k\in\supp}} 2\frac{(\lambda_n - \lambda_k)^2}{\lambda_n + \lambda_k} \re{ a_{kn}^\mu a_{nk}^\nu } + \asum{\substack{k\in\supp}}  4  \lambda_k \re{ \bra{\mathcal{G}_k } \boldsymbol{G}_\nu \left[ \mathbb{1} - \asum{\substack{n\in\supp}} \ketbra{ \mathcal{G}_n}{\mathcal{G}_n} \right]  \boldsymbol{G}_\mu \ket{ \mathcal{G}_k}} \\
	&= \asum{\substack{n \in \supp \\ k\in\supp}} 2\frac{(\lambda_n - \lambda_k)^2}{\lambda_n + \lambda_k} \re{ a_{kn}^\mu a_{nk}^\nu } + \asum{\substack{k\in\supp}}  4  \lambda_k \re{ \bra{\mathcal{G}_k } \boldsymbol{G}_\nu \left[ \mathbb{1} - \ketbra{ \mathcal{G}_k}{\mathcal{G}_k} -\asum{\substack{n\in\supp \\ n\neq k}} \ketbra{ \mathcal{G}_n}{\mathcal{G}_n} \right]  \boldsymbol{G}_\mu \ket{ \mathcal{G}_k} }\\
	&= \asum{\substack{n \in \supp \\ k\in\supp}} 2\frac{(\lambda_n - \lambda_k)^2}{\lambda_n + \lambda_k} \re{ a_{kn}^\mu a_{nk}^\nu } + \asum{\substack{k\in\supp}}  \lambda_k \mathcal{Q}_{\mu\nu} (\ket{\mathcal{G}_k}) - \asum{\substack{n,k \in \supp \\ k\neq n } } 4  \lambda_k \re{ a_{kn}^\mu a_{nk}^\nu } \\
	&= \sum_{\substack{n,k \in \supp \\ n\neq k}} \left[ 2\frac{(\lambda_n - \lambda_k)^2}{\lambda_n + \lambda_k} - 4 \lambda_k \right] \re{ a_{kn}^\mu a_{nk}^\nu} + \sum_{\substack{k\in\supp}}  \lambda_k \mathcal{Q}_{\mu\nu} (\ket{\mathcal{G}_k}) \\
	&= \sum_{n>k \in \supp } 4\left[ \frac{(\lambda_n - \lambda_k)^2}{\lambda_n + \lambda_k} - (\lambda_n + \lambda_k) \right] \re{ a_{nk}^\nu a_{kn}^\mu} + \sum_{\substack{k\in\supp}}  \lambda_k \mathcal{Q}_{\mu\nu} (\ket{\mathcal{G}_k}) .
\end{aligned}
\label{eq:finalmixed}
\end{equation}

In the case the state density matrix is described by a sum of non-orthogonal vectors:
\begin{equation}
\begin{aligned}
	\rho = \sum_{j \in \supp} \lambda_j \ketbra{\phi_j}{\phi_j},\quad \text{such that } \braket{\phi_j | \phi_k } = \alpha_{jk} \leq 1,
\end{aligned}
\label{eq:nonorthogonalsetting}
\end{equation}
such that $\boldsymbol{G}_\mu \ket{\phi_j}$ is equal for every $\ket{\phi_j}$, then diagonalizing $\rho$ into vectors that are orthogonal between themselves, call it $\ket{\tilde{\phi}_j}$ still holds the fact that $\boldsymbol{G}_\mu \ket{\tilde{\phi}_j}$ is equal for every $\ket{\tilde{\phi}_j}$. If, more than that, $\re{ a_{nk}^\nu a_{kn}^\mu} \propto \mathcal{Q}_{\nu\mu} (\ket{\mathcal{G}_n}) $, then $\lambda_n$ should simply be scaled into a $\tilde{\lambda}_n$, but the information is still structurally the same, $i.e.$ it is the same function of parameters. 

This gives some intuition about how to analyze Eq.~\ref{eq:finalmixed} and when to use it over Eq.~\ref{eq:notfinalmixed}. In Eq.~\ref{eq:finalmixed} the first term corresponds to a distinguishability between the states under the encoding dynamics. If the generators of the dynamics are orthogonal to the noise, then this goes to zero. The second part of Eq.~\ref{eq:finalmixed} can be seen as the local informations of each states in a perfect scenario weighted by the corresponding weights. Note that the expression:
\[
4\left[ \frac{(\lambda_n - \lambda_k)^2}{\lambda_n + \lambda_k} - (\lambda_n + \lambda_k) \right] ,
\]
is always negative, meaning that if they are completely distinguishable, then we recover the notion of convexity of the QFI. If they are not completely distinguishable, then the amount of information is decreased by a factor proportional to $ \re{ a_{nk}^\nu a_{kn}^\mu} $. On the other hand, when the noise is only in the same direction of the generators of the encoding, then Eq.~\ref{eq:notfinalmixed} becomes simpler, as the second part goes to zero, and we are left simply with an expression that weights the distinguishability of the states, under the encoding dynamics.
\end{document}